\documentclass[reqno,12pt]{article}

\usepackage{subcaption}  % For subfigures with captions
\usepackage{caption}     % Optional, for customizing captions
\usepackage{bm}
\usepackage{mathtools}
\mathtoolsset{showonlyrefs}
\usepackage{algorithm}
\usepackage{algpseudocode}

\usepackage{array}
\usepackage{mdframed}
\usepackage{amsfonts, amsmath, amssymb, amsthm}
\usepackage{color}
\usepackage{dsfont}
\usepackage[T1]{fontenc}
\usepackage{authblk}
\usepackage{graphicx,tikz,xcolor}
\usepackage{xcolor,eucal,enumerate,mathrsfs}
\usepackage[normalem]{ulem}
\usepackage[latin1]{inputenc}
\usepackage{cancel}
\usepackage{setspace}
\usepackage{stmaryrd}
\usepackage{stackrel}
\usepackage{longtable}

\usepackage{amsmath,amssymb}

\usepackage{hyperref}

\usepackage{mathtools}
\newcommand{\eps}{\varepsilon}

\newcommand{\tand}{\quad \text{and} \quad}

\newcommand{\imag}{\mathbf{i}}

\newcommand{\bra}[1]{\left \langle #1 \right |}
\newcommand{\ket}[1]{\left | #1 \right \rangle}

\DeclareMathOperator{\argmax}{argmax}

\DeclareMathOperator{\ep}{\varepsilon}

\DeclareMathOperator{\Tr}{\mathrm{Tr}}

\DeclareMathOperator{\qAdm}{\rm{Adm}(\mathbf{Q},\mathbf{q})}
    \DeclareMathOperator{\uAdm}{\underline{\rm{Adm}}(\rm{\mathbf{Q}}, \rm{\mathbf{q}})}

\newcommand{\cH}{\mathcal{H}}

\newcommand{\cF}{\mathcal{F}}

\newcommand{\fr}{\penalty-20\null\hfill$\blacksquare$} 

\def\R{\mathbb R}
\newcommand{\rmH}{\mathrm{H}}

\usepackage{appendix}
\usepackage{enumitem}

\newcommand{\suchthat}{\ensuremath{\ : \ }} % such that inside, for example, the sets definition              

\newcommand{\restr}[1]{\lower3pt\hbox{$|_{#1}$}}

\newcommand{\Id}{\mathbb I}

\newcommand{\prim}{\mathfrak{F}_{\eps}(\rm{\mathbf{Q}}, \rm{\mathbf{q}}) }
    \newcommand{\primal}{\tF_{\eps}}
\newcommand{\dual}{\mathfrak{D}_{\eps}(\rm{\mathbf{Q}}, \rm{\mathbf{q}})} 
    \newcommand{\dualf}{{\rm D}_\eps}

\newcommand{\primz}{\mathfrak{F}({\rm \mathbf{Q}}, {\rm \mathbf{q}}) }
    \newcommand{\primalz}{\tF}
\newcommand{\dualz}{\mathfrak{D}({\rm \mathbf{Q}}, {\rm \mathbf{q}})} 
    \newcommand{\dualfz}{{\rm D}}

\theoremstyle{plain}
\newtheorem{theo}{Theorem}[section]
\theoremstyle{definition}
\newtheorem{defin}[theo]{Definition}
\newtheorem{rem}[theo]{Remark}

\newtheorem{oss}[theo]{Remark}
\newtheorem{exam}[theo]{Example}
\numberwithin{equation}{section}
\theoremstyle{plain}

\newtheorem{prop}[theo]{Proposition}
\theoremstyle{remark}

\newcommand{\tF}{\operatorname{F}}

\newcommand{\tH}{\operatorname{H}}

\usepackage{wrapfig}

\DeclareFontFamily{U}{mathx}{\hyphenchar\font45}
\DeclareFontShape{U}{mathx}{m}{n}{
	<5> <6> <7> <8> <9> <10>
	<10.95> <12> <14.4> <17.28> <20.74> <24.88>
	mathx10
}{}
\DeclareSymbolFont{mathx}{U}{mathx}{m}{n}
\DeclareMathSymbol{\bigtimes}{1}{mathx}{"91}

\usepackage{scalerel}

\usepackage{color}
\usepackage{pstricks}
\usepackage{ifthen}
\newrgbcolor{kascolor}{.7 .1 .1}
\newrgbcolor{olicolor}{.1 .7 .1}
\newrgbcolor{augucolor}{.1 .1 .7}
\newrgbcolor{lorcolor}{.5 .5 .5}
\usepackage{csquotes}

\usepackage{booktabs}   % for \toprule, \midrule, \bottomrule
\usepackage{multirow}   % for \multirow

\newboolean{DEBUG}
\setboolean{DEBUG}{true}

\ifthenelse {\boolean{DEBUG}}
{}

\ifthenelse {\boolean{DEBUG}}
{}

\ifthenelse {\boolean{DEBUG}}
{}

\ifthenelse {\boolean{DEBUG}}
{}

\ifthenelse {\boolean{DEBUG}}
{}

\definecolor{pavcolor}{RGB}{230, 40, 150} 
\ifthenelse {\boolean{DEBUG}}
{}

\usepackage{subcaption}
\usepackage{array,booktabs}
\newcolumntype{L}{>{\centering\arraybackslash}p{0.1\linewidth}} % Reg.
\newcolumntype{T}{>{\centering\arraybackslash}p{0.20\linewidth}} % Tol.
\newcolumntype{Y}{>{\centering\arraybackslash}p{0.12\linewidth}} % Type
\newcolumntype{N}{>{\centering\arraybackslash}p{0.20\linewidth}} % Time
\newcolumntype{I}{>{\centering\arraybackslash}p{0.15\linewidth}} % Iters
\newcolumntype{A}{>{\centering\arraybackslash}p{0.15\linewidth}} % Ach.

\usepackage{graphicx}
\usepackage{wrapfig}
\usepackage{needspace}   % to keep enough space on the page
\usepackage{placeins}    % optional: for \FloatBarrier

\title{Quantum thermodynamics and semidefinite programming: regularization and algorithms}
\author[1]{Emanuele Caputo}
\author[2]{Augusto Gerolin}
\author[3]{Nataliia Monina} 
\author[3]{\\ Pavlo Pelikh}
\author[4]{Lorenzo Portinale}

\affil[1]{University of Warwick}
\affil[2]{Instituto de Matem\'atica Pura e Aplicada}
\affil[3]{Department of Mathematics and Statistics, University of Ottawa}
\affil[4]{Universit\`a statale di Milano}

\date{}
\begin{document}
\maketitle
\begin{abstract}

\noindent
We investigate variational problems in quantum thermodynamics at positive temperature, in which admissible states are constrained by prescribed outcomes of a finite set of measurements. We solve a problem raised by the recent work \cite[Section C]{MarkWilde} and develop a general mathematical setup which allows a broad class of possible regularizations.
Employing methods inspired by non-commutative optimal transport, 
we analyze the dual formulation of the problem, study the existence and characterization of maximizers, and investigate the qualitative behavior of the model in the zero-temperature limit.
In the second part, we develop computational algorithms tailored to this class of variational problems and demonstrate the effectiveness of our approach through applications to quantum information tasks, including quantum state tomography and quantum optimal transport.
\end{abstract}

\setcounter{tocdepth}{1}
\tableofcontents

\section{Introduction}
\label{sec:intro}

Let $M\in\mathbb N$ be a natural number, $\varepsilon > 0$ be a positive real number, $\cH$ be a finite-dimensional Hilbert space, $H, Q_0,Q_1,\dots,Q_M \in \rmH(\cH)$ be Hermitian matrices and $(q_0,q_1,\dots,q_M)\in\mathbb{R}^{M+1}$ be real numbers. In this paper we are interested in the following class of variational problems
\begin{align}
\label{intro:main}
\cF_\eps(\textbf{Q},\textbf{q}) 
    &:= 
\inf 
\big\{
    \primal(\pi)
        \suchthat 
    \pi \in \qAdm
\big\},
\end{align}
where the target functional and the admissible operators are given by 
\begin{align} 
    \primal(\pi)
        :=
    \Tr[H\pi] + \varepsilon S_{\varphi}(\pi) 
        \, , \quad  
    \qAdm
        :=
    \big\{
        \pi\in \rmH_\geq(\cH)
            \suchthat 
        \Tr[Q_i\pi] = q_i, \, 0\leq i\leq M
    \big\}
        . 
\end{align}
 Here $\rmH_\geq(\cH)$ denotes the set of Hermitian and semi-definite positive operators. The function $\varphi:[0,+\infty) \to \mathbb{R}$ is a proper, convex, superlinear at infinity, bounded from below function with $\varphi(0) \in \R$, and $S_{\varphi}:\rmH_\geq(\cH)\to\mathbb{R}$ is the quantum entropy induced by $\varphi$, i.e. $S_{\varphi}(\pi) = \Tr[\varphi(\pi)]$.  
 
In words, the goal of the variational problem defined in~\eqref{intro:main} is to find, within the class of all operators which \textit{measured} via $Q_i$ give the result $q_i$, the ones that minimize the total energy $\primal$, namely a noised version of the ground state energy associated with the Hamiltonian $H$.

In~\cite{ChhLas-arxiv-2025, Lin-arxiv-2023, MarkWilde}, the authors consider the case when $\varphi(z)=z\ln z$ and the variational problem~\eqref{intro:main} corresponds to the von Neumman entropy regularization of the ground state energy associated with the Hamiltonian $H$. In this case, the solution of that problem at a giving temperature $\varepsilon>0$ is giving by a Gibbs state of an effective Hamiltonian\vspace{-2mm}
\begin{equation}\label{intro:Gibbsstate}
\pi_{\varepsilon} = \exp\left(-\frac{1}{\varepsilon} H_{\mathrm{eff}}\right), \quad \text{where } H_{\mathrm{eff}} = H-\sum_{i=0}^M\alpha^{\varepsilon}_iQ_i,\vspace{-2mm}
\end{equation}
where $(\alpha_i^{\varepsilon})_{i=0}^M$ are real numbers and can be interpreted as Lagrange multipliers associated with the constraints. In this sense, $H_{\mathrm{eff}}$ plays the role of an effective Hamiltonian, modified by the constraints, such that the corresponding Gibbs state $\pi_{\varepsilon} \in \qAdm$ belongs to the admissible class of states.

A motivation for our work is provided by~\cite[Section C]{MarkWilde}, where the authors emphasize the importance of developing a rigorous mathematical theory and computational algorithms capable of handling regularization schemes beyond the von Neumann entropy. In particular, they propose hybrid quantum--classical algorithms to solve the quantum optimization problem~\eqref{intro:main} under von Neumann entropy regularization. While substantial progress has been achieved for Gibbs state-based methods -- largely due to their variational characterization and the relative maturity of thermal state preparation techniques -- the extension beyond the Gibbs energy/von Neumann entropy regularization framework poses significant conceptual and technological challenges.

In this paper, we address the question raised in the aforementioned work by developing a general mathematical framework for convex regularizations, encompassing, for example, quantum quadratic regularizations and Tsallis-type entropies. %More precisely, we provide a characterization of the solutions to~\eqref{intro:main}, design corresponding (classical) computational algorithms, and establish their convergence in the zero-temperature limit $\varepsilon \to 0$.

 Our approach is grounded in duality theory and builds on mathematical tools that have recently been developed and successfully applied in the study of quantum optimal transport (QOT) problems, a special case of the variational problem~\eqref{intro:main}~\cite{CapGerMonPor24, FelGerPor23,GerMon2023, GerPel25} (with $\varepsilon>0)$ and \cite{CalGolPau18, DPaTre19,DPaTre2023,Golse-Mouhot-Paul:2016} (with $\varepsilon=0)$. We refer to Section~\ref{subsec:qot} for more details and also \cite{portinale2024entropic} for a review on the subject.

Within this framework, we establish several results for the variational problem~\eqref{intro:main}, including:
(i) the derivation of an equivalent dual formulation, together with a detailed analysis of both primal and dual optimizers and a clarification of the relations between them; 
(ii) the investigation of the asymptotic behavior of the problem and its optimizers as $\varepsilon \to 0$, recovering in the limit a corresponding quantum formulation at zero temperature, whose structural properties and minimizers are analyzed.

Finally, we illustrate our methodology by proposing a novel algorithm computing numerical realizations of the dual problem of~\eqref{intro:main} via L-BFGS for both the von Neumman entropy and quadratic quantum entropy regularizations. From our knowledge, (classical) algorithms developed previously focus on the particular case of the von Neumman entropy and are based on first-order methods for dual problem~\cite{CaiLin-arxiv-2025, Lin-arxiv-2023, GerPel25, RanRen-arxiv-2025}. 

A promising direction for future development lies in extending hybrid quantum--classical algorithms beyond the von Neumann entropy setting~\cite{MarkWilde} to encompass more general convex regularizations. Achieving this goal would considerably broaden the scope of quantum optimization methods, enabling the treatment of models with quadratic penalties, Tsallis-type entropies, or other non-Gibbsian regularizers that arise naturally in quantum information theory, statistical mechanics, and machine learning.

\subsection{Contributions and organization of the paper}
For the reader's convenience, Section~\ref{sec:setting_mr} presents the complete mathematical framework and the precise statements of the main results of the paper. The remainder of the paper is devoted to their proofs. Below, we provide an overview of our contributions and the organization of this work. 

\smallskip 
\noindent
\textbf{Well-posedness of the problem (Section~\ref{sec:preliminaries})}. \ 
Naturally, conditions may need to be imposed on the observables $Q_i$ and the outcomes $q_j$ in order to have a nonempty set of admissible operators. For example, assuming that $Q_0, Q_1, \dots, Q_{j_0}$ are a basis for the set of all observables $\{Q_i \}_{i=0}^M \subset \cH(\tH)$, we show that a necessary condition for $\qAdm \neq \emptyset$ is given by 
\begin{align}
\label{eq:necessary_intro}
    Q_j
        =
    \sum_{i=0}^{j_0}
        t_j^i Q_i 
            \quad \Longrightarrow \quad 
      q_j 
            = 
        \sum_{i=0}^{j_0}
            t_j^i q_i    
                \, ,
\end{align}
we refer to Proposition~\ref{prop:nonmptiness} for more details.

\smallskip
\noindent 
\textbf{Analysis of the regularized problems (Section~\ref{sec:main_duality})}. \ 
Our first result describes the variational problem~\eqref{intro:main} via a suitably (unconstrained when $\eps>0$) dual formulation -- see Theorems~\ref{thm:max_exist} and \ref{thm:characterization_maximizers} -- and states that
\begin{gather}\label{intro:maindual}
\cF_\eps(\textbf{Q},\textbf{q}) 
    = 
\sup \big\lbrace \dualf(\alpha_0,\dots,\alpha_M) \, : \, \alpha_i\in\R, \, 0\leq i\leq M \big\rbrace
    \, , \quad \text{where} \\
\dualf(\alpha_0,\dots,\alpha_M) 
    :=
\sum^M_{i=0}\alpha_iq_i - \varepsilon \Tr\left[\psi\left(\frac{1}{\varepsilon}\left(\sum^M_{i=0}\alpha_iQ_i - H\right)\right)\right],
\end{gather}
where $\psi=\varphi^*$ denotes the Legendre transform of $\varphi$. Furthemore, we characterize optimality for the dual and primal problems, and show that primal and dual optimizers are related by the \textit{complementary slackness} condition 
\begin{align}
     \pi_{\varepsilon}=\psi' \left( \frac{\sum_{i=0}^M\alpha_iQ_i - H}{\varepsilon} \right)
     .
\end{align}
Notice that, by choosing the von Neumann entropy regularization, we recover a Gibbs-state form described in equation~\eqref{intro:Gibbsstate}.

\smallskip
\noindent 
\textbf{Convergence results at zero temperature (Section \ref{sec:finitetemplimit})}. In a subsequent section, we investigate the asymptotic regime of the above variational problems as the temperature vanishes, that is, in the limit $\varepsilon \to 0^+$. This corresponds to the physically relevant zero-temperature limit and leads to a qualitatively different (primal and dual) optimization problems, where entropic effects disappear and only energetic constraints remain.

To rigorously analyze this singular limit, we employ the framework of \emph{$\Gamma$-convergence} (see Section~\ref{subsec:convergence_dual_problem}). For readers less familiar with this notion, we stress that $\Gamma$-convergence is a variational convergence theory specifically designed to study the stability of minimization problems. Its strength lies in the fact that it guarantees not only the convergence of the functionals themselves, but the convergence of minimizers. In other words, it provides a mathematically robust way to pass to the limit in optimization problems.

Within this framework, we prove that, as $\varepsilon \to 0^+$, both the primal and dual formulations converge to the zero-temperature problem
\begin{align}
     \primz 
        &:=
     \inf \left\lbrace \Tr[H\pi]  \suchthat  \pi \in {\rmH}_{\ge}(\cH), \, {\rm Tr}[Q_i\pi] = q_i, \, 0\leq i\leq M \right\rbrace 
\\
        &=
    \sup \left\lbrace \sum_{i=0}^M\alpha_i q_i \, : \,  \bm\alpha \in \R^{M+1} ~ \text{so that} ~ \sum_{i=0}^M \alpha_iQ_i \leq H, \right\rbrace
        \, , 
\end{align}
where with the latter inequality constraint between operators is intended in the sense of quadratic forms. This zero-temperature limit problem can be interpreted as a semi-definite programming problem.

We also deduce the convergence (up to subsequences) of minimizers $\Gamma_\varepsilon$ toward a minimizer $\Gamma_0$ of the zero-temperature problem. This highlights the practical relevance of the theory: the zero-temperature model is not merely a formal limit, but genuinely describes the asymptotic behavior of optimal states of the regularized problem.

Finally, analogously to the positive temperature case, we show that, at zero temperature, optimizers are characterized by a complementary slackness conditions, which in this case assumes the form of 
\begin{align}
    \sqrt \pi \left( H - \sum_{i = 0}^M \alpha_i Q_i \right) \sqrt \pi= 0
        \, .
\end{align}

% {\color{blue}START}

\smallskip
\noindent
\textbf{Numerical results (Section~\ref{sec:numerics})}. \ 
The last section contains numerical simulations for the computation of the maximizer of the dual problem at positive temperature, by means of L-BFGS~\cite{LiuNoc-MathPr-1989}. We specialize our experiments to quantum optimal transport and quantum tomography.
We qualitatively analyze convergence behavior for several values of the temperature parameter~$\varepsilon$. 
The numerical results illustrate both the practical applicability of the proposed methods and the computational challenges that may arise.

% {\color{blue}END.}

\section{Setting and statement of the main results}
\label{sec:setting_mr}

Given a Hilbert space $\cH$ of dimension $d = \dim(\cH) \in \mathbb{N}$, we denote by $\rmH(\cH)$ the vector space of Hermitian operators over $\cH$.
Let us denote by $\rmH_\geq(\cH)$ the set of positive semi-definite Hermitian, and by 
$\rmH_>(\cH)$ the elements $A$ of $\rmH_\geq(\cH)$ such that ${\rm ker}(A)=\{ 0\}$.

For convenience, we introduce the following notation: for each $i \in \{1, \dots, d\}$, we denote by $\lambda_i(A) \in \R$ the $i$-th smallest eigenvalue of the Hermitian matrix $A$. In particular, $\lambda_1(A)$ represents the smallest eigenvalue of $A$ and $\lambda_d(A)$ the largest. 
Given a continuous function $g \colon \mathbb{R} \to \mathbb{R}$, we define its \textit{lifting} to the space of Hermitian matrices as the operator\footnote{With a slight abuse of notation, we use the same symbol $g$ for both the scalar function and its lifting.} 
$g \colon \rmH(\cH) \to \rmH(\cH)$ 
given by 
$
g(A) := \sum_{i=1}^d g(\lambda_i(A)) \, |\xi_i\rangle \langle \xi_i|
    , 
$
where $A = \sum_{i=1}^d \lambda_i(A) \, \ket{\xi_i}\bra{\xi_i}$ is a spectral decomposition of $A$. 
When the context is clear, we may omit the dependence on $A$ and simply write $\lambda_i$ instead of $\lambda_i(A)$.

We shall fix  $\varphi \colon [0,+\infty) \to \R$ a convex, superlinear at infinity, and bounded from below function, namely 
\begin{align}
    \label{eq:assumptions_varphi_intro}
    \varphi \colon [0,+\infty) \to \R
        \, , \quad 
    \text{convex} 
        \, , \quad 
    \lim_{t \to +\infty}
        \frac{\varphi(t)}{t} =+\infty
        \, ,  \tand 
    \inf \varphi \geq l > - \infty 
        , \qquad 
\end{align}
for some $l \in \R$ (in fact, the third condition follows from the first two). In particular, we assume that $\varphi(0)\in \R$.

The following operators are given and fixed throughout the paper: 
\begin{enumerate}
    \item A cost operator/Hamiltonian $H \in \rmH(\cH)$. 
    \item A set of $M+1$ observables, $M \in \mathbb{N}$, that we denote by $Q_0, Q_1, \dots, Q_M \in \rmH(\cH)$. The first one, $Q_0$, has a special role in our variational problem, and it is assumed to be positive definite (hence invertible).
    \item A given sequence of numbers $q_0, q_1, \dots, q_M$ that represents the outcomes of the measurements of a state through $Q_i$. Here $q_0 >0$.
\end{enumerate}
 Our analysis is then constrained to \textit{admissible states}, denoted by $\qAdm$ and given by 
\begin{align}
    \qAdm
        :=
    \big\{
        \pi\in \rmH_\geq(\cH)
            \suchthat 
        \Tr[Q_i\pi] = q_i, \, 0\leq i\leq M
    \big\}
        \, .
\end{align}

We also fix $\eps >0$ as a regularization parameter. 
The main objects of study of our work are primal and dual functionals. 
\begin{defin}[Primal problem]
We define the functional $\primal \colon \rmH_\geq(\cH) \to \mathbb{R}$ as
\begin{align}
\label{eq:def_primal}
    \primal(\pi): = \Tr\left[H \pi\right] + \eps \Tr \left[ \varphi (\pi) \right]
        , \qquad  \pi \in \rmH_\geq(\cH) ,
\end{align}
and define the primal problem as 
\begin{align}
\label{eq:def_primal_problem}
    \prim:= \inf \left\{ \primal(\pi) \suchthat  \pi \in \qAdm  \right\}
    \, .
\end{align}
\end{defin}

\begin{rem}[Boundedness of $\qAdm$]
\label{rem:boundedness}
    The choice of assuming $Q_0 \in \tH_>(\cH)$ comes from the need to ensure boundedness of the set of competitors. We have indeed
    \begin{align}
    \label{eq:boundedness_qAdm}
        \qAdm \subset B_{\mathfrak{T}_1}(0, t_0) :=
        \big\{
            \pi \in \tH_\geq(\cH) 
                \suchthat 
            \Tr[\pi] \leq t_0
        \big\}
            \qquad 
        \text{for } 
            t_0= \frac{q_0}{\lambda_1(Q_0)}
                , 
    \end{align}   
    where we recall that $\lambda_1(Q_0) \geq 0$ denotes the smallest eigenvalue of $Q_0$. This readily follows from the single constraint $q_0 = \Tr[Q_0 \pi]$ and $Q_0 \geq \lambda_1(Q_0) \Id$.
\end{rem}
In order to guarantee that the above minimization problem is meaningful, we need to ensure that the set of admissible plans is not empty, which in general is not guaranteed. 

Indeed, we shall prove that the set of observables $\{Q_i \}_{i=0}^M$ and outcomes $\{ q_i \}_{i=0}^M$ must necessarily satisfy a compatibility condition, in order to have $\qAdm \neq \emptyset$. 

More precisely: we define the set 
\begin{align}
    \uAdm := 
    \big\{
        \pi \in \tH(\cH) 
            \suchthat 
        \Tr[Q_i \pi] = q_i, 
            \, 0 \leq i \leq M
    \big\}
        , 
\end{align}
so that by definition $\qAdm = \uAdm \cap \tH_{\geq}(\cH) \subset \uAdm$. Our first result yields a characterization for the nonemptiness of $\uAdm$, hence providing a necessary condition for $\qAdm \neq \emptyset$. 
Recall that $\tH(\cH)$ endowed with the usual scalar product $(A,B) \mapsto \Tr[AB]$ is a \textit{real} vector space, and as such when writing \textit{linear independent} below, we mean with respect to linear combination with \textit{real} coefficients.

\begin{prop}[Non-emptiness of $\uAdm$]
\label{prop:nonmptiness}
    Let $\mathbf{Q} = (Q_0,Q_1, \dots, Q_M) \in \rmH(\cH)^{M+1}$ be a vector of Hermitian matrices, and $\mathbf{q} = (q_0,q_1, \dots, q_M ) \in \R^{M+1}$.  
\begin{enumerate}
    \item[1)] If the operators $\{ Q_j \}_{j=0}^M$ are linearly independent, $\uAdm \neq \emptyset$ for every $\mathbf{q} \in \R^{M+1}$.
    \item[2)] Assume instead that, up to a reordering, there exists $j_0 \in [M]$ such that $\{ Q_0, Q_1 \dots, Q_{j_0} \}$ is a system of linearly independent operators such that 
    \begin{align}
    \label{eq:def_VQ}
        \mathcal{V}_{\mathbf{Q}}
            :=
        \emph{Span} 
        \big(
            \{ Q_0, Q_1 \dots, Q_M \}
        \big)
            =
        \emph{Span} 
        \big(
            \{ Q_0, Q_1 \dots, Q_{j_0} \}
        \big)
            \, .
    \end{align}
    In particular, for every $j > j_0$, there exists $\{ t_j^i \, : \, i \in [j_0] \}\subset \R$ such that 
    \begin{align}
    \label{eq:linear_dependence_of_Qi}
        Q_j
            =
        \sum_{i=0}^{j_0}
            t_j^i Q_i 
                \, .
    \end{align}
    Then we have that
    \begin{align}
    \label{eq:iff_nonempty}
        \uAdm \neq \emptyset
            \qquad \Longleftrightarrow \qquad 
        q_j 
            = 
        \sum_{i=0}^{j_0}
            t_j^i q_i 
                \, , \quad 
            \forall j > j_0 
                \, .
    \end{align}
\end{enumerate}
\end{prop}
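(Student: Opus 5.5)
The whole statement is a finite-dimensional linear algebra fact. Throughout I will use that $\tH(\cH)$, with the Hilbert--Schmidt inner product $\langle A,B\rangle := \Tr[AB]$, is a real inner product space of dimension $d^2$. The plan is to recast the constraints defining $\uAdm$ as a linear system and use surjectivity of the associated linear map.

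For part 1), I would consider the real-linear map
\begin{align}
    T \colon \tH(\cH) \to \R^{M+1}
        \, , \qquad
    T(\pi) := \big(\Tr[Q_0\pi], \dots, \Tr[Q_M\pi]\big)
        \, ,
\end{align}
so that $\uAdm = T^{-1}(\mathbf{q})$. Its adjoint with respect to $\langle\cdot,\cdot\rangle$ and the Euclidean product on $\R^{M+1}$ is $T^*(\bm\alpha) = \sum_{i} \alpha_i Q_i$. Linear independence of the $Q_j$ means exactly that $T^*$ is injective. Since $\mathrm{range}(T) = \ker(T^*)^\perp$, the map $T$ is then surjective, and every $\mathbf{q}$ is attained.

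For a constructive version, I would take $\pi = \sum_k c_k Q_k$. The constraints become $G c = \mathbf{q}$, where $G_{ik} = \Tr[Q_iQ_k]$ is the Gram matrix. This matrix is real symmetric and positive definite precisely because the $Q_k$ are independent, so the system is solvable and $\pi$ is Hermitian.

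For part 2), I would prove the two directions separately.

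\emph{Necessity.} If $\pi \in \uAdm$, apply $\Tr[\,\cdot\,\pi]$ to \eqref{eq:linear_dependence_of_Qi}. Linearity of the trace gives $q_j = \Tr[Q_j\pi] = \sum_{i\le j_0} t_j^i \Tr[Q_i\pi] = \sum_{i \le j_0} t_j^i q_i$.

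\emph{Sufficiency.} Apply part 1) to the independent family $\{Q_0,\dots,Q_{j_0}\}$ with data $(q_0,\dots,q_{j_0})$. This gives a Hermitian $\pi$ with $\Tr[Q_i\pi]=q_i$ for $i\le j_0$. For $j>j_0$, the same trace computation together with the assumed compatibility relation yields $\Tr[Q_j\pi]=q_j$, so $\pi\in\uAdm$.

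There is no real obstacle here. The only point needing care is that we work over $\R$: one must note that $\Tr[AB]\in\R$ for Hermitian $A,B$, so that $\tH(\cH)$ is a genuine real inner product space and the Gram matrix $G$ is real positive definite. One must also note that real linear combinations of Hermitian matrices are Hermitian, which guarantees that the constructed $\pi$ lies in $\tH(\cH)$.
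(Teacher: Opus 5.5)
Your proposal is correct and follows essentially the same route as the paper. In part 1) you solve for $\pi$ in the span of the $Q_k$ via the positive definite Gram matrix $G_{ik}=\Tr[Q_iQ_k]$, whereas the paper uses an orthonormal basis $\{B_i\}$ of that span and the invertible change-of-basis matrix $\Gamma$; this is a cosmetic difference, and your surjectivity argument via $T^*$ is an equivalent abstract version. Part 2) is the paper's argument: necessity by linearity of the trace, and sufficiency by applying part 1) to $\{Q_0,\dots,Q_{j_0}\}$.
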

 The condition \eqref{eq:iff_nonempty} is a consistency condition for the $\{q_j \}_{j=0}^M$ which must be satisfied to have a nontrivial primal problem, in the case when the $\{ Q_j \}_{j=0}^M$ are not linearly independent. 
In particular, a \textit{necessary condition} for $\qAdm \neq \emptyset$ is that 
\begin{align}
\label{eq:necessary}
    Q_j
        =
    \sum_{i=0}^{j_0}
        t_j^i Q_i 
            \quad \Longrightarrow \quad 
      q_j 
            = 
        \sum_{i=0}^{j_0}
            t_j^i q_i    
                \, .
\end{align}
Note for example that, if $Q_i = Q_j$ for some $i,j \in [M]$ corresponds to the case $t_j^k = \delta_{i,k}$, and therefore forces $q_i = q_j$ (which is obviously necessary to have a nonempty set of admissible operators). Whenever the set of observables $\{Q_i\}_{i=0}^M$ is made of linearly independent operators, this condition is always satisfied. 

To each primal problem, we associate a dual problem. Throughout the whole paper, we work with a function $\psi \in C(\R)$ which is convex, superlinear at infinity, and bounded from below, i.e. 
\begin{align}
\label{eq:assumptions_psi_intro}
    \psi\in C(\R) 
        \, , \quad \text{convex} \, , \quad 
    \lim_{t \to +\infty}
        \frac{\psi(t)}{t}
    = + \infty \, , 
        \tand 
    m:= \inf \psi > - \infty 
        .
\end{align}
When dealing with duality results for the primal problem \eqref{eq:def_primal}, $\psi$ is typically the Legendre transform of a $\varphi$ satisfying \eqref{eq:assumptions_varphi_intro}, namely of the form
\begin{align}
    \psi(t) 
        =
    \sup_{x \in [0,+\infty)}
        \left\{
            tx - \varphi(x)
        \right\}
        =
    \varphi^*(t)
        \, , \quad \forall t \in \R
            ,
\end{align}
where, when talking about Legendre transform, we may implicitly extend the definition of $\varphi$ on the full real line by setting $\varphi(x) \equiv +\infty$, for every $x < 0$. The validity of \eqref{eq:assumptions_psi_intro} readily follows in this case from the properties \eqref{eq:assumptions_varphi_intro} of $\varphi$.

\begin{defin}[Dual problem]
For $\psi \in C(\R)$ satisfying \eqref{eq:assumptions_psi_intro},  we define the dual functional $\dualf  \colon \R^{M+1} \to \R$ as
\begin{align}
\label{eq:def_dual_balanced}
    \dualf (\bm{\alpha})
        :=
\sum^M_{i=0}\alpha_iq_i - \varepsilon \Tr\left[\psi\left(\frac{1}{\varepsilon}\left(\sum^M_{i=0}\alpha_iQ_i - H\right)\right)\right]
        , \qquad 
    \bm\alpha \in \R^{M+1} 
        .
\end{align}

We define the dual problem as 
\begin{gather}
\label{eq:def_dual_problems}
    \dual :=
        \sup 
    \Big\{
        \dualf (\bm\alpha)
    \suchthat 
            \bm \alpha \in \R^{M+1}
    \Big\}
        .  
\end{gather}
\end{defin}

The contribution of this paper is threefold: first of all, we show that primal and dual problems do indeed coincide, and we discuss under which assumptions optimizers exist and how they are related. Secondly, we provide an asymptotic analysis as $\eps \to 0$, describing the associated limit problems and their minimizers/maximizers. Finally, we discuss, together with several simulations, different applications of our setup, including quantum optimal transport and quantum tomography. This is the content of Section~\ref{sec:numerics}.
   
We start by presenting the duality result and the analysis of the optimizers.

\begin{theo}[Duality and optimizers]
\label{thm:main_duality}
    Let $\varphi:[0,+\infty) \to \R$ satisfy \eqref{eq:assumptions_varphi_intro}, and assume that $\psi=\varphi^*$ is $C^1$. 
    Moreover, we assume that 
    \begin{align}
        \label{eq:Slater_condition}
    \qAdm \cap \tH_>(\cH) \neq \emptyset
        \, .
    \end{align}
    Then the following statements hold: 
    \begin{itemize}
        \item[1)] (Duality) The dual and primal problems coincide, namely $\dual  = \prim$.
        \item[2)] (Existence of maximizers) There exists $\bm\alpha \in \R^{M+1}$ such that  $\dualf(\bm\alpha)=\dual$.
        \item[3)] (Existence of minimizers) There exists a unique minimizer $\pi^\eps \in \qAdm$ for the primal problem, i.e. $\prim = \primal(\pi^\eps)$, and it is given by
        \begin{align}
        \label{eq:main_formula_minimizer}
            \pi^\eps
                = 
            \psi' \left( \frac{\sum^M_{i=0}\alpha_iQ_i - H}{\varepsilon} \right)
                ,
        \end{align}
        where $\bm\alpha \in \R^{M+1}$ is (any) maximizer for the dual problem. 
    \end{itemize}
\end{theo}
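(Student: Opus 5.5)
We will follow the standard convex duality route, working in the finite-dimensional real vector space $\tH(\cH)$ with the trace inner product.

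\textbf{Step 1: weak duality.} For $\pi\in\qAdm$ and $\bm\alpha\in\R^{M+1}$, set $A:=\frac1\eps(\sum_i\alpha_iQ_i-H)$. Since $\Tr[Q_i\pi]=q_i$, we have $\sum_i\alpha_iq_i-\Tr[H\pi]=\eps\Tr[A\pi]$. The matrix Fenchel--Young inequality $\Tr[A\pi]\le \Tr[\varphi(\pi)]+\Tr[\psi(A)]$ holds because $\pi\mapsto\Tr[\varphi(\pi)]$ has Legendre transform $A\mapsto\Tr[\psi(A)]$ on $\tH(\cH)$. One can also verify it directly: expand both operators in their eigenbases, apply the scalar inequality $xt\le\varphi(x)+\psi(t)$ entrywise with the doubly stochastic weights $|\langle\xi_i,\eta_j\rangle|^2$, and note that equality holds iff $\pi=\psi'(A)$ when $\psi\in C^1$. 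Together these give $\dualf(\bm\alpha)\le\primal(\pi)$, hence $\dual\le\prim$.

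\textbf{Step 2: existence of a dual maximizer (the main obstacle).} $\dualf$ is concave and continuous on $\R^{M+1}$. First reduce to the linearly independent case using Proposition~\ref{prop:nonmptiness}. Given the consistency relations \eqref{eq:iff_nonempty}, $\dualf(\bm\alpha)$ depends on $\bm\alpha$ only through $\sum_i\alpha_iQ_i\in\mathcal V_{\mathbf Q}$, so we may assume the $Q_i$ are independent. It then suffices to prove coercivity, $\dualf(\bm\alpha)\to-\infty$ as $|\bm\alpha|\to\infty$. Take $\bar\pi\in\qAdm\cap\tH_>(\cH)$ from \eqref{eq:Slater_condition}, so that $\sum_i\alpha_iq_i=\Tr[\bar\pi\sum_i\alpha_iQ_i]$. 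Write $B=\sum\alpha_iQ_i$ and $B=|\bm\alpha|\hat B$, where $\hat B$ ranges over a compact set with $\hat B\neq0$ by independence. Then
\[
\dualf(\bm\alpha)=\Tr[\bar\pi B]-\eps\Tr\Big[\psi\Big(\tfrac{B-H}{\eps}\Big)\Big].
\]
If $\hat B$ has a positive eigenvalue, superlinearity of $\psi$ makes the $\psi$-term dominate the linear term $\Tr[\bar\pi B]\le |\bm\alpha|\,\|\bar\pi\|\,\|\hat B\|$. If instead $\hat B\le0$, then $\hat B\neq0$ and $\bar\pi>0$ give $\Tr[\bar\pi\hat B]\le -c<0$ uniformly, by compactness; since $\psi\ge m$, this yields $\dualf\le -c|\bm\alpha|+C$. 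Making this uniform over directions needs a short compactness argument: split according to whether $\lambda_d(\hat B)\ge\delta$ or $\lambda_d(\hat B)<\delta$, choosing $\delta$ small relative to $c$ and $\lambda_1(\bar\pi)$. Coercivity then gives a maximizer.

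\textbf{Step 3: optimality and strong duality.} At a maximizer $\bm\alpha$, the first-order condition reads $\partial_{\alpha_i}\dualf=q_i-\Tr[Q_i\,\psi'(A)]=0$, using the trace derivative formula $\frac{d}{ds}\Tr[\psi(A+sQ)]=\Tr[\psi'(A)Q]$, valid for $C^1$ functions. Set $\pi^\eps:=\psi'(A)$. Then $\pi^\eps\ge0$, since $\psi'\ge0$ as $\psi=\varphi^*$ with $\varphi=+\infty$ on $(-\infty,0)$, and by the first-order condition $\pi^\eps\in\qAdm$. Because $\pi^\eps=\psi'(A)$ commutes with $A$ and $\psi'(t)\in\partial\psi(t)$ gives scalar equality $\psi'(t)t=\varphi(\psi'(t))+\psi(t)$, equality holds in Step 1. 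Thus $\primal(\pi^\eps)=\dualf(\bm\alpha)$, which proves $\prim=\dual$, shows that $\pi^\eps$ is a minimizer, and gives formula \eqref{eq:main_formula_minimizer} for any maximizer $\bm\alpha$.

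\textbf{Step 4: uniqueness.} If $\varphi$ is strictly convex, then $\pi\mapsto\Tr[\varphi(\pi)]$ is strictly convex and uniqueness is immediate. In general, we argue via the equality case instead. Any minimizer $\pi$ satisfies $\primal(\pi)=\dualf(\bm\alpha)$, which forces equality in the Fenchel--Young inequality of Step 1. Since $\psi$ is $C^1$, $\partial\psi(t)=\{\psi'(t)\}$, so equality forces each eigencomponent to satisfy $\pi=\psi'(A)=\pi^\eps$. Checking this equality characterization in the noncommutative setting requires a little care, but it is the only delicate point.
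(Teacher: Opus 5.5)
Your proposal is correct and follows the same architecture as the paper. You get weak duality from the trace Fenchel--Young inequality (Propositions~\ref{prop:computation_legendre_transform} and \ref{prop:weak-duality}). You get a dual maximizer from Slater-based coercivity, where superlinearity of $\psi$ controls the top eigenvalue of $\sum_i\alpha_iQ_i$ and $\bar\pi>0$ controls the bottom one (Theorem~\ref{thm:max_exist}). Finally, the first-order condition makes $\psi'(A)$ admissible with equality in Fenchel--Young (Theorem~\ref{thm:characterization_maximizers}).

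The differences are only technical. Your direction-wise compactness argument with Weyl's inequality replaces the paper's eigenvalue bounds on superlevel sets, where $H$ is split off using convexity of $\Tr[\psi(\cdot)]$. Your uniqueness argument via the Fenchel--Young equality case replaces the paper's appeal to strict convexity of $\Tr[\varphi(\cdot)]$. That strict convexity in fact always holds under your hypotheses, so the first case of your Step 4 is the only one: an affine piece of $\varphi$ with slope $s$ would make $\partial\psi(s)$ a nondegenerate interval, contradicting $\psi\in C^1$.
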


The assumption $\qAdm \cap \tH_>(\cH) \neq \emptyset$ means, in other words, that there exists $\pi_0\in\mathrm{H}(\cH)$ such that 
\[
   \inf_{\xi\in\cH, ||\xi||=1}{\bra\xi \pi_0\ket\xi}=\omega_0>0\quad\quad \text{and} \quad\quad q_i = \Tr[Q_i \pi_0], \quad 0\leq i\leq M.
\]
This phenomenon has already been observed in the setting of QOT \cite{FelGerPor23,CapGerMonPor24}, where the existence of (bounded) maximizers was related to the fact that the marginals have no kernels, which is exactly what ensures the existence of at least one coupling with no kernel. 

The dual problem does not have, in general, uniqueness of the maximizers, even when $\psi$ is strictly convex. This happens precisely when the observables $\{Q_i\}_{i=0}^M$ are not linearly independent. This  is clearly related to the fact that the map 
\begin{align}
    \R^{M+1} \ni \bm\alpha \mapsto 
    \sum_{i=0}^{M}
        \alpha_i Q_i
            \in \tH(\cH)
\end{align}
is injective if and only if the observables $\{Q_i\}_{i=0}^M$ are linearly independent (note that here we mean as elements of $\tH(\cH)$ seen as a \textit{real} vector space). Whenever the operators are linearly dependent, one has infinitely many maximizers for the dual problem, that can be explicitly described as follows (see  Proposition~\ref{prop:uniqueness_and_continuity}): 
assume there exists $j_0 \in \mathbb{N}$, $\{ t_j^i \}_{i,j}\subset \R$ as given in Proposition~\ref{prop:nonmptiness}, for which we have
\begin{align}
     Q_j
        =
    \sum_{i=0}^{j_0}
        t_j^i Q_i 
            \tand 
    Q_0 , \dots, Q_{j_0}
        \quad \text{linearly independent}
    .
\end{align}
Nonetheless, whenever $\psi$ is strictly convex\footnote{When $\psi$ is not strictly convex, additional nonuniqueness may appear, as per usual.}, we can explicit determine the set of all maximizers as 
\begin{align}
\label{eq:def_argmax}
    \argmax \dualf
        =
    \left\{
        \alpha \in \R^{M+1}
            \suchthat 
        \alpha_i + \sum_{j=j_0+1}^M \alpha_j t_j^i 
            = 
        \alpha_i^\eps
            \, , \, 
        \text{for all }
            i \in \{ 0, \dots, j_0 \}
    \right\}
        ,
\end{align}
where $\alpha^\eps \in \R^{M+1}$ is the unique maximizer of $\dualf$ so that $\alpha_i^\eps =0$ for every $i > j_0$. For this special choice of maximizers, we also prove that $\eps \mapsto \alpha^\eps$ is continuous on $(0,+\infty)$ (cfr. Proposition~\ref{prop:uniqueness_and_continuity}).

Finally, the equality between primal and dual problems does not require $\psi$ to be of class $C^1$. Nonetheless, in this case, we cannot guarantee uniqueness for the minimizers of $\primal$. Optimality conditions which relate minimizers of $\primal$ and maximizers of $\dualf$ can still be written, in the more general form 
\begin{align}
\label{eq:main_formula_minimizer_nonsmooth}
    \pi^\eps
        \in 
    \partial \psi \left( \frac{\sum^M_{i=0}\alpha_iQ_i - H}{\varepsilon} \right)
        ,
\end{align}
where $\partial \psi$ denotes the convex subdifferential of $\psi$. For details, we refer to Proposition~\ref{prop:weak-duality}.

\begin{rem}[Duality]
    In fact, by using abstract tools such as Fenchel--Rockafellar's theorem, one would be able to show that $\prim = \dual$ in the more general setting where $\qAdm \neq \emptyset$ (no need to have a competitor with a trivial kernel). On the other hand, in this setting, the maximizers may not exist, as the maximizing sequences $\alpha_n$ are not expected to be bounded. As we are mostly interested in the characterization of the maximizers in this work, we focus on the setting where the stronger assumption \eqref{eq:Slater_condition} is satisfied.
\end{rem}

Moving to the asymptotic analysis, the second main contribution of this article is the analysis of the primal and dual problems, as well as their optimizers, as $\eps \to 0$. 
We introduce the functionals, for $\pi \in \tH_\geq(\cH)$ and $\bm\alpha \in \R^{M+1}$,
\begin{equation}
\label{eq:functionals_zero}
    \primalz[\pi] := \Tr[H \pi]
        \, , \qquad \text{and} \qquad   
    \dualfz[\bm\alpha] := \sum_{i=0}^M \alpha_i q_i - \chi[W_{\bm\alpha}]
    , \quad  
    W_{\bm\alpha} := \sum_{i=0}^M \alpha_i Q_i - H
        ,
\end{equation}
where $\chi:\tH(\cH) \to \R \cup \{ + \infty \}$ is the indicator function 
\begin{align}
    \chi(W)  = 
    \begin{cases}
        0
            &\text{if } W \leq 0 , 
    \\
        + \infty    
            &\text{otherwise}. 
    \end{cases}
\end{align}
Finally, we consider the following variational problems:
\begin{gather}
\label{eq:problems_zero}
    \primz 
        := 
    \inf
    \big\{
        \primalz[\pi]
            \suchthat 
        \pi \in \qAdm 
    \big\}
         , 
\\
    \dualz 
        := 
    \sup
    \bigg\{
        \dualfz[\bm\alpha]
            \suchthat 
    \bm\alpha \in \R^{M+1}
    \bigg\}  
        =
   \sup
    \bigg\{
     \sum_{i=0}^M \alpha_i q_i     
            \suchthat 
    \bm\alpha \in \R^{M+1}
        \, , \, 
     \sum_{i=0}^M \alpha_i Q_i \leq H
    \bigg\}      .
\end{gather}

The result of our asymptotic analysis is the following theorem. For the definition of $\Gamma$-convergence, we refer to Section \ref{sec:limit}. %for the definition).

\begin{theo}[Convergence as $\eps \to 0$ and duality at zero temperature]
   Let $\varphi:[0,+\infty) \to [0,+\infty)$ satisfy \eqref{eq:assumptions_varphi_intro}, and assume that $\psi=\varphi^*$ is strictly convex and $C^1$.  
   \begin{enumerate}
       \item We have that $\primal \xrightarrow{\Gamma} \primalz$ and $-\dualf \xrightarrow{\Gamma} -\dualfz$. 
       \item Whenever $\qAdm \neq \emptyset$, the sequence of minimizers $\{\pi^\eps\}_{\eps>0} \subset \qAdm$ for the primal problem $\primal$, up to a subsequence, converges as $\eps \to 0$ to some $\pi \in \qAdm$ which minimizes $\primalz$. 
       \item Assuming that $\qAdm \cap \tH_>(\cH) \neq \emptyset$, i.e.\ that there exists $\pi_0\in\mathrm{H}(\cH)$  such that 
\begin{align}
   \inf_{\xi\in\cH, ||\xi||=1}{\bra\xi \pi_0\ket\xi}=\omega_0>0\quad\quad \text{and} \quad\quad q_i = \Tr[Q_i \pi_0], \quad 0\leq i\leq M.
\end{align}
    Let $\alpha^\eps$ be the corresponding selected maximizers as given in \eqref{eq:def_argmax}. Then the sequence $\alpha^\eps$ converges, up to subsequence, as $\eps \to 0$ to a maximizer of $\dualfz$. 
    % {\color{orange} [LP]: check if we can say something about any sequence of maximizers, up to reshifiting} 
       \item Duality holds at zero temperature, i.e. $\primz = \dualz$. 
   \end{enumerate}
\end{theo}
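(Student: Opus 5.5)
We prove the four items in order, relying on Theorem~\ref{thm:main_duality} at each fixed $\eps>0$ and on the compactness from Remark~\ref{rem:boundedness}. Since $\varphi\geq 0$ is assumed here, $\psi(t)\geq -\varphi(0)$ for all $t$. Moreover $\psi(t)\to \inf_{x \geq 0}(-\varphi(x)) \cdot$ bounded as $t\to-\infty$: precisely, $\psi(t)\geq -\varphi(0)$ always, and $\psi(t)\leq \sup_x(-\varphi(x))\leq 0$ for $t\leq 0$. Hence $\eps\psi(W/\eps)$ is bounded by $O(\eps)$ on the negative spectrum of $W$. By superlinearity, $\eps\psi(\lambda/\eps)\to+\infty$ as $\eps\to0$ whenever $\lambda>0$.

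\textbf{Item 1.} For the primal $\Gamma$-convergence we use the topology of $\tH_\geq(\cH)$, which is finite dimensional.
\begin{itemize}
\item For the liminf inequality, take $\pi_\eps\to\pi$. Then $\Tr[H\pi_\eps]\to\Tr[H\pi]$ and $\eps\Tr[\varphi(\pi_\eps)]\geq 0$.
\item For the recovery sequence, take the constant sequence $\pi_\eps=\pi$. Then $\eps\Tr[\varphi(\pi)]\to 0$ because $\varphi(\pi)$ is finite, $\varphi$ being real valued and continuous on the bounded spectrum.
\end{itemize}
If the constraint is included as an indicator, the same argument works because $\qAdm$ is closed.

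For the dual, fix $\bm\alpha_\eps\to\bm\alpha$, so that $W_{\bm\alpha_\eps}\to W_{\bm\alpha}$.
\begin{itemize}
\item If $W_{\bm\alpha}\leq0$, the bound $\eps\psi\geq-\eps\varphi(0)$ gives $\liminf(-\dualf(\bm\alpha_\eps))\geq -\sum\alpha_iq_i$.
\item If $W_{\bm\alpha}$ has an eigenvalue $\lambda_d>0$, then $\lambda_d(W_{\bm\alpha_\eps})\geq\lambda_d/2$ eventually. Superlinearity then forces $\eps\psi(\lambda_d(W_{\bm\alpha_\eps})/\eps)\to+\infty$, while the other eigenvalue terms are bounded below by $-\eps\varphi(0)$.
\item For the recovery sequence, take $\bm\alpha_\eps=\bm\alpha$ when $W_{\bm\alpha}\leq 0$. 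All eigenvalues are then $\leq0$, so $|\eps\psi(\cdot/\eps)|\leq \eps\max(\varphi(0),0)\to0$, using that $\psi$ is monotone nondecreasing with $\psi(0)=-\inf\varphi$.
\end{itemize}

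\textbf{Item 2.} We use the fundamental theorem of $\Gamma$-convergence. The minimizers $\pi^\eps$ of Theorem~\ref{thm:main_duality}, or any minimizers when only $\qAdm\neq\emptyset$ (these exist by compactness and continuity), lie in the compact set $\qAdm$, which is bounded by Remark~\ref{rem:boundedness} and closed. Equicoercivity is therefore trivial. Extract a convergent subsequence $\pi^{\eps}\to\pi$. For any competitor $\tilde\pi\in\qAdm$,
\[
\Tr[H\pi]\leq\liminf\primal(\pi^\eps)\leq\limsup\primal(\tilde\pi)=\Tr[H\tilde\pi],
\]
so $\pi$ minimizes $\primalz$. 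This also yields $\prim\to\primz$.

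\textbf{Items 3 and 4, and the main obstacle.} The main obstacle is equicoercivity of the dual, i.e.\ a uniform bound on the selected maximizers $\alpha^\eps$, whose components with index $>j_0$ vanish. We test optimality against $0$: $\dualf(\alpha^\eps)\geq\dualf(0)\geq -\eps\psi$-terms, which are bounded uniformly for small $\eps$. Next we use the Slater state $\pi_0\geq\omega_0\Id$ and Fenchel--Young, $\eps\psi(W/\eps)\geq \Tr[W\pi_0]-\eps\Tr[\varphi(\pi_0)]$ taken spectrally in trace form. This gives
\begin{align}
\dualf(\alpha^\eps)\leq \Tr[H\pi_0]+\eps\Tr[\varphi(\pi_0)],
\end{align}
which is only an upper bound. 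For coercivity we refine it with a slightly perturbed state: for each unit direction $\beta$ in $\R^{j_0+1}$, the matrix $W_\beta^{\mathrm{lin}}=\sum\beta_iQ_i\neq0$, and $\sum\beta_iq_i=\Tr[\pi_0\sum\beta_iQ_i]$. We use $\eps\Tr\psi(W/\eps)\geq \Tr[W\sigma]-\eps\Tr\varphi(\sigma)$ with $\sigma=\pi_0+\delta P$, where $P\geq 0$ is the spectral projection onto the positive part of $\sum\beta_iQ_i$ and $\delta<\omega_0$ (or $\sigma=\pi_0-\delta P'$ for the negative part). This gives a gain of $\delta\,\Tr[(\sum\beta_iQ_i)_+]\,|\alpha|$, which by compactness of the unit sphere and linear independence is uniformly positive. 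Consequently $\dualf(\alpha)\leq C-c|\alpha|$ uniformly in $\eps\leq1$, so the $\alpha^\eps$ are bounded. Along a convergent subsequence, the dual $\Gamma$-convergence then shows that the limit maximizes $\dualfz$, and $\dual\to\dualz$. Combined with $\dual=\prim$ from Theorem~\ref{thm:main_duality} and $\prim\to\primz$, this gives $\primz=\dualz$ under Slater. For item 4 in the general case $\qAdm\neq\emptyset$, we would instead apply standard conic LP/SDP duality (Fenchel--Rockafellar), or approximate $\mathbf{q}$ by Slater data. The latter requires care, and I expect it to be the delicate point.
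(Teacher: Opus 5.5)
The step that fails as written is your uniform lower bound $\dualf(\bm\alpha^\eps)\geq\dualf(0)$. We have $\dualf(0)=-\eps\sum_k\psi\bigl(-\lambda_k(H)/\eps\bigr)$. As soon as $H$ has a negative eigenvalue, superlinearity gives $\eps\psi\bigl(|\lambda_1(H)|/\eps\bigr)\to+\infty$, so $\dualf(0)\to-\infty$ and you get no $\eps$-independent level $A$. Without such an $A$, your estimate $\dualf(\bm\alpha)\leq C-c|\bm\alpha|$ does not bound $\bm\alpha^\eps$.

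The repair is the one the paper uses. Test against $\hat{\bm\alpha}=(-c,0,\dots,0)$ with $c\,\lambda_1(Q_0)\geq\|H\|_\infty$, so that $W_{\hat{\bm\alpha}}=-cQ_0-H\leq 0$. Since $\psi\leq 0$ on $(-\infty,0]$ (as you observed), this gives $\dualf(\bm\alpha^\eps)\geq\dualf(\hat{\bm\alpha})\geq -c\,q_0$ for every $\eps$.

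There is also a smaller slip. With $S_\beta:=\sum_i\beta_iQ_i$, the quantity $\Tr[(S_\beta)_+]$ is not uniformly positive on the unit sphere: it vanishes whenever $S_\beta\leq 0$. What is uniformly positive is $\max\bigl(\Tr[(S_\beta)_+],\Tr[(S_\beta)_-]\bigr)\geq\frac12\|S_\beta\|_1$. So in directions where the negative part dominates you must use $\sigma=\pi_0-\delta P_-$ with $\delta\leq\omega_0$, and this is exactly where the Slater condition enters. In fact, taking $P_\pm$ to be the spectral projections of $\sum_i\alpha_iQ_i$ itself gives $\dualf(\bm\alpha)\leq C-\frac{\delta}{2}\bigl\|\sum_i\alpha_iQ_i\bigr\|_1$ directly, with no compactness argument on the sphere.

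With these repairs your proof is correct. Items 1, 2 and 4 follow the paper's route (Proposition~\ref{prop:tr_psi_converges}, Theorem~\ref{thm:min_congerged}, Theorem~\ref{thm:duality_zero}). The paper, too, proves item 4 only under the Slater condition. For the general case, note that the dual is always strictly feasible ($-cQ_0-H<0$ for $c$ large), so standard conic duality gives $\primz=\dualz$ without approximating $\mathbf q$.

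The genuine difference is the dual equi-coercivity. The paper (Proposition~\ref{prop:alpha_indepentent_eps}) removes $H$ via convexity of $\Tr[\psi(\cdot)]$, writing $S=\eps\frac{S-H}{\eps}+(1-\eps)\frac{H}{1-\eps}$. It then bounds $\lambda_d(S)$ by superlinearity of $\psi$ and $\lambda_1(S)$ through $\bra{\xi_1}\pi_0\ket{\xi_1}\geq\omega_0$. This needs monotonicity of $\psi$ to control $(1-\eps)\Tr[\psi(H/(1-\eps))]$ and a separate rescaling for $\eps>\frac12$, but it reuses verbatim the computation of Theorem~\ref{thm:max_exist}.

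Your Fenchel--Young argument with the perturbed states $\pi_0\pm\delta P_\pm$ has different strengths:
\begin{itemize}
\item it uses only the duality inequality and the boundedness of $\Tr[H\sigma]+\eps\Tr[\varphi(\sigma)]$ over a compact set of $\sigma$'s;
\item it is uniform on any $(0,\eps_0]$ at once;
\item it yields an explicit linear decay with rate proportional to $\omega_0$.
\end{itemize}
It is the standard ``strictly feasible primal point implies bounded dual superlevel sets'' argument, and arguably the cleaner of the two.
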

The proof of this result is the content of Section~\ref{sec:limit}, where the reader can also find a more detailed analysis of the optimizers and their relations at zero temperature as well.

\section{Preliminaries}\label{sec:preliminaries}
Recall that, given a Hilbert space $\mathcal{V}$ and $f \colon \mathcal{V} \to \R$, we define the Legendre transform of $f$ as the function $f^* \colon \mathcal{V}^* \to \R$ defined as $f^*(v):=\sup \{  \langle v, x \rangle - f(x):\, x \in \mathcal{V} \} $.
Moreover, we define the subdifferential of $f$ at $x$ and we denote it by $\partial f (x)$ as the set
\begin{equation*}
    \partial f(x):= \left\{ v \in \mathcal{V}:\, f(y)\ge f(x)+\langle v,y-x \rangle \right\}.
\end{equation*}
In the case when $f$ is (Fr\'{e}chet) differentiable at $x$, then $\partial f(x)$ is a singleton and $\partial f(x)=\{f'(x)\}$.
One can easily show that $v \in \partial f(x)$ if and only if the following equality holds:
\begin{equation}
    \label{eq:link_subdifferential_legendre}
        \langle v,x \rangle = f(x)+f^*(v).    
\end{equation}

Throughout this section (and in general in this paper), we fix  $\varphi \colon [0,+\infty) \to \R$ a convex, superlinear at infinity, and bounded from below function, namely satisfying \eqref{eq:assumptions_varphi_intro}. When working within the dual framework, to $\varphi$ we associate the corresponding Legendre transform $\psi = \varphi^*$. 
In particular, $\psi \in C(\R)$ is a convex, superlinear at infinity, and bounded from below, i.e. satisfying \eqref{eq:assumptions_psi_intro}.

Albeit some results do require extra differentiability assumptions on $\psi$ and $\varphi$, others hold without such assumptions. As a drawback, when working with nondifferentiable regularization, optimizers in our problem satisfy optimality conditions which cannot be described by the usual spectral calculus, but rather by means of a multi-valued version of it.

We start with the following Proposition, the content of which is essentially as in~\cite[Prop.~A4]{CapGerMonPor24} with a few more details about the optimality conditions.  We set $\Phi(\pi):= \Tr[\varphi(\pi)]$, for $\pi \in  \rmH_\geq(\cH)$, and when necessary extended to $+\infty$ on $\rmH(\cH) \setminus \rmH_\geq(\cH)$.
% The next Proposition is essentially the content of  \cite[Prop.~A4]{CapGerMonPor24} with a few more details about the optimality conditions.

\begin{prop}[Legendre's transform for functional calculus]
\label{prop:computation_legendre_transform}
    Let $\varphi$ satisfy the standing assumption of the section. Let $\cH$ be a finite-dimensional Hilbert space of dimension $d \in \mathbb{N}$. 
    We have that 
    \begin{align}
    \label{eq:Legendre_transform_trance_function}
        \Phi^*(W)
            =
        \sup_{\pi \in \rmH_\geq(\cH)} \big\{ 
            \langle W, \pi \rangle 
                - 
            \Tr
            \left[ 
                \varphi(\pi)
            \right]
        \big\}
            =
        \Tr[\psi(W)]
    \end{align}
    for every $W \in \rmH(\cH)$. Furthermore, for $\pi \in \rmH_{\geq}(\cH)$, $W \in \rmH(\cH)$, we characterize the optimizers via the following equivalent conditions: 
    \begin{enumerate}
        \item[1)]  
    % \begin{align}
    % \label{eq:equality_Fenchel}
   $
        \langle W, \pi \rangle 
            - 
        \Tr
        \left[ 
            \varphi(\pi)
        \right]
            = 
        \Tr[\psi(W)]
    $.
    % \end{align}
        \item[2)] $\pi \in \partial \Phi^*(W)$.
        \item[3)] $W \in \partial \Phi(\pi)$.
        \item[4)] There exists a spectral decomposition for $W$ and $\pi$, 
    \begin{align}
    \label{eq:optimality_subdiff}
        \pi 
            = 
        \sum_{i=1}^d \pi_i \ket{\gamma_i}  \bra{\gamma_i},
            \,
        W 
            = 
        \sum_{j=1}^d W_j \ket{\xi_j}\bra{\xi_j}
            \,\text{ and }\pi_i \in \partial \psi(W_j)
            \, 
       \text{ if } \, 
        \langle \xi_j | \gamma_i \rangle \neq 0 . \qquad 
    \end{align}
    %satisfying 
    %\begin{align}
    %\label{eq:optimality_subdiff}
    %    \pi_i \in \partial \psi(W_j)
    %        \, , \qquad 
    %    \forall (i,j) \suchthat 
    %    \langle \xi_j , \gamma_i %\rangle \neq 0
    %        \, .
    %\end{align}
    \end{enumerate}
  
    In particular, if $\psi$ is strictly convex, then any of the above equivalent conditions yield that $\pi$ and $W$ necessarily commute, and we have that
    \begin{align}
    \label{eq:multivalue_spectralcalc}
        \pi 
            = 
        \sum_{i=1}^d \pi_i \ket{\gamma_i}  \bra{\gamma_i}
            \, , \quad 
        W 
            = 
        \sum_{i=1}^d W_i \ket{\gamma_i}\bra{\gamma_i}
            \, ,    \qquad \text{and} \quad  
        \pi_j \in \partial \psi(W_j)
            \, , \,j=1,\dots, d \,  
    \end{align}
    with some suitable family of eigenvectors $\{ \gamma_i \}_{i=1}^d$. 
\end{prop}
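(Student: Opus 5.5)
We first compute $\Phi^*(W)$ by proving two inequalities. For the upper bound, fix $\pi\in\rmH_\geq(\cH)$ with spectral decomposition $\pi=\sum_i\pi_i\ket{\gamma_i}\bra{\gamma_i}$ and $W=\sum_j W_j\ket{\xi_j}\bra{\xi_j}$. Set $P_{ij}:=|\langle\xi_j|\gamma_i\rangle|^2$, which is a doubly stochastic matrix. We have
\begin{align}
\langle W,\pi\rangle=\sum_{i,j}P_{ij}\pi_iW_j ,\qquad \Tr[\varphi(\pi)]=\sum_i\varphi(\pi_i)=\sum_{i,j}P_{ij}\varphi(\pi_i),\qquad \Tr[\psi(W)]=\sum_{i,j}P_{ij}\psi(W_j).
\end{align}
Hence
\begin{align}
\Tr[\psi(W)]-\langle W,\pi\rangle+\Tr[\varphi(\pi)]=\sum_{i,j}P_{ij}\big(\psi(W_j)+\varphi(\pi_i)-\pi_iW_j\big)\geq 0,
\end{align}
where each summand is nonnegative by the scalar Fenchel--Young inequality. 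For the lower bound, take $\pi$ commuting with $W$, namely $\pi=\sum_j x_j\ket{\xi_j}\bra{\xi_j}$ with $x_j\in[0,\infty)$. Optimizing each $x_j$ separately gives $\sup_{x_j}(x_jW_j-\varphi(x_j))=\psi(W_j)$, and summing over $j$ yields $\Phi^*(W)\ge\Tr[\psi(W)]$.

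The equivalence of the four conditions follows from this identity. The equivalences 1)$\Leftrightarrow$2)$\Leftrightarrow$3) are the general Fenchel equality \eqref{eq:link_subdifferential_legendre}, applied to $\Phi$ (which is convex and lower semicontinuous on $\rmH(\cH)$, so that $\Phi^{**}=\Phi$) and to $\Phi^*$. For 1)$\Leftrightarrow$4), equality in 1) holds if and only if the nonnegative sum above vanishes, i.e. $\psi(W_j)+\varphi(\pi_i)=\pi_iW_j$ whenever $P_{ij}\ne0$. By the scalar Fenchel equality, this is exactly $\pi_i\in\partial\psi(W_j)$ (equivalently $W_j\in\partial\varphi(\pi_i)$).

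The one delicate point is that the computation holds for any choice of eigenbases. Thus 4) should be read as "for some choice", and conversely, if 1) holds, then the condition holds for every choice of decomposition. There is no real obstacle here beyond this bookkeeping.

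For the strictly convex case, $\psi'$ is strictly increasing wherever it is defined, so the subdifferentials $\partial\psi(a)$ and $\partial\psi(b)$ are disjoint for $a\neq b$. Hence each eigenvalue $\pi_i$ determines the unique value $W_j=:w(\pi_i)$ for every $j$ with $\langle\xi_j|\gamma_i\rangle\ne0$. It follows that $\ket{\gamma_i}$ lies in the eigenspace of $W$ associated with $w(\pi_i)$, so every $\gamma_i$ is an eigenvector of $W$. Therefore $\pi$ and $W$ commute, and with $W_i:=w(\pi_i)$ we obtain \eqref{eq:multivalue_spectralcalc} with a common eigenbasis.
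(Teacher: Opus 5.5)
Your proof is correct and follows the paper's argument. It uses the same expansion of $\Tr[W\pi]$ with weights $|\langle\xi_j|\gamma_i\rangle|^2$ and termwise scalar Fenchel--Young for 1)$\Leftrightarrow$4), the abstract Fenchel equality for 1)$\Leftrightarrow$2)$\Leftrightarrow$3), and injectivity of $\partial\psi$ to show that each $\gamma_i$ is an eigenvector of $W$. You also make explicit the lower bound $\Phi^*(W)\geq\Tr[\psi(W)]$ via commuting competitors, which the paper leaves implicit. One caveat: your use of $\Phi^{**}=\Phi$, and the scalar equality case $\pi_i\in\partial\psi(W_j)$, tacitly need $\varphi$ to be lower semicontinuous at $0$, which the paper also assumes only implicitly.
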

Whenever any of the aforementioned equivalent conditions hold, we use the slight abuse of notation and write 
\begin{align}
    \label{eq:subdifferential_spectral_calculus}
        \pi \in \partial \psi(W)
\end{align}
which is nothing but a multivalued generalization of the usual spectral calculus definition. Indeed, whenever $\partial \psi$ is single-valued (i.e. $\psi$ is differentiable), then
\begin{align}
    \pi \in \partial \psi(W) 
        \quad \Longleftrightarrow \quad 
    \pi = \psi'(W)
        \, ,
\end{align}
in the continuous spectral calculus sense. 
\begin{proof}

The equivalence between (1) -- (3) follows from the observation made in \eqref{eq:link_subdifferential_legendre}.

 To show (1) is equivalent to (4), we use the spectral decomposition, and by Fenchel-Young inequality we see that
    \begin{equation}
    \label{eq:computation_transform}
    \begin{aligned}
        \langle W, \pi \rangle = \Tr\left[ W \pi \right] 
        &= \sum_{i} \pi_i \bra{\gamma_i}W\ket{\gamma_i} = \sum_i\sum_j \pi_i W_j |\langle\xi_j |\gamma_i\rangle|^2
    \\
        &
        \le \sum_i \sum_j (\varphi(\pi_i) + \psi(W_j)) |\langle\xi_j |\gamma_i\rangle|^2
    \\
        &= \sum_i  \varphi(\pi_i) \left(\sum_j|\langle\xi_j |\gamma_i\rangle|^2\right) +  \sum_j \psi(W_j) \left(\sum_i |\langle\xi_j |\gamma_i\rangle|^2\right) 
    \\
        &=  \sum_i  \varphi(\pi_i) +  \sum_j \psi(W_j)  = \Tr\left[\varphi(\pi)\right]+ \Tr\left[ \psi(W) \right]
            \, ,
    \end{aligned}
    \end{equation}
   with equality precisely if and only if \eqref{eq:optimality_subdiff} hold true, hence proving the equivalence between 1) and 4). The latter computation also shows the validity of \eqref{eq:Legendre_transform_trance_function}.

   We are left to prove that only commuting matrices can reach equality when $\psi$ is strictly convex. Indeed, for a strictly convex $\psi$, we have that the the (monotone) multivalued function $\partial \psi$ is injective, in the sense that
   \begin{align}
   \label{eq:injectivity_subdiff}
       \partial \psi(x) \cap \partial \psi(y) \neq \emptyset 
            \quad \Longrightarrow \quad 
        x =y
            \, .
   \end{align}
   Assume now that $\pi$ and $W$ satisfy (1). Then from \eqref{eq:optimality_subdiff} and \eqref{eq:injectivity_subdiff} ensures that, defining
   \begin{align}
       \mathcal{I}_i 
            :=
        \left\{
            1\leq j \leq d
                \suchthat 
            \langle \xi_j | \gamma_i \rangle \neq 0
        \right\}
            \, , \qquad 
        1\leq i \leq d 
            \, , 
   \end{align}
   then we necessarily have that  $\pi_i \in \partial \psi(W_j)$
   \begin{align}
   \label{eq:constancy}
       W_j = W_k =: W^{(i)}
        \, , \qquad 
    \forall j,k \in \mathcal{I}_i
        \, , \quad 
    \forall 1\leq i \leq d 
        \, .
   \end{align}
   We claim that the latter condition implies that $[\pi,W]=0$. To show this, it is enough to show that $\gamma_i$ is an eigenvector of $W$, for every $1\leq i \leq d$. Using \eqref{eq:constancy}, we can write 
   \begin{align}
        W 
            =
        W^{(i)} \sum_{j \in \mathcal{I}_i} \ket{\xi_j}\bra{\xi_j}
            +
        \sum_{j \notin \mathcal{I}_i}
        W_j \ket{\xi_j}\bra{\xi_j}
            \, .
   \end{align}
   As from the very definition of $\mathcal{I}_i$ we have that $\langle \xi_j | \gamma_i \rangle = 0$ for every $j \notin \mathcal{I}_i$, we conclude that 
   \begin{align}
        W \ket{\gamma_i} 
            =
        W^{(i)} \sum_{j \in \mathcal{I}_i} \ket{\xi_j}
        \langle \xi_j | \gamma_i \rangle 
            =
        W^{(i)} \sum_{j =1}^d \ket{\xi_j}
        \langle \xi_j | \gamma_i \rangle 
            =
        W^{(i)} \ket{\gamma_i} 
            \, , 
   \end{align}
   which precisely shows that $\gamma_i$ is an eigenvector of $W$ with eigenvalue $W^{(i)}$.
\end{proof}

\begin{rem}[Uniqueness]
It is interesting to note that, even in the case when $\psi$ is strictly convex, for a given $W \in \rmH(\cH)$, there may exist more than one $\pi \in \rmH(\cH)$ so that \eqref{eq:multivalue_spectralcalc} is satisfied. This clearly relates to the possible lack of differentiability of $\psi$, which in turn is equivalent to its Legendre transform $\varphi$ being strictly convex. This is indeed coherent with the general primal-dual picture: the assumption that $\psi$ is strictly convex ensures uniqueness of maximizers for the dual problem, whereas the strict convexity of $\varphi$ guarantees uniqueness for the primal problem. When both assumptions are satisfied, i.e. $\psi$ is both strictly convex and differentiable, then for every $W \in \rmH(\cH)$, \eqref{eq:multivalue_spectralcalc} has a unique solution $\pi = \psi'(W)$. \fr
\end{rem}

The variational formulation in \eqref{eq:def_primal_problem} provides an interesting mathematical problem, provided that the set of admissible  matrices associated to $\{ Q_j \}_{j=0}^M$ and $\{ q_j\}_{j=0}^M$ is not empty. This unfortunate situation may clearly happen, think for example to the extreme case when e.g. $Q_i=Q_j$ but $q_i \neq q_j$ for some $i,j \in [M]$. It is therefore important to provide conditions to ensure the existence of admissible operators. This is precisely the content of  Proposition \ref{prop:nonmptiness}, which we prove next.

We denote by $\uAdm$ the set of all $\pi \in \rmH(\cH)$ such that $\Tr[Q_i \pi] = q_i$, so that the set of competitors for the primal problem is given by $\text{Adm}(\mathbf{Q}, \mathbf{q}) := \uAdm \cap \rmH_\geq(\cH)$. 
Recall that $\tH(\cH)$ endowed with the usual scalar product $(A,B) \mapsto \Tr[AB]$ is a \textit{real} vector space, and as such when writing \textit{linear independent} below, we mean with respect to linear combination with \textit{real} coefficients.

\begin{proof}[Proof of Prop.~\ref{prop:nonmptiness}]
We prove 1). By assumption, the observables are linearly independent but not necessarily forming a orthonormal set. Let $\{B_i\}_{i=0}^M$ with $B_i \in \rmH(\cH)$ such that $\Tr\left[ B_i B_j \right] = \delta_{i,j}$ and ${\rm Span}\{ Q_i:\,i=0,\dots,M\}={\rm Span}\{ B_i:\,i=0,\dots,M\}$. In particular, there exists $\gamma_i^j\in \R$ for every $i,j \in [M]$ such that $Q_i=\sum_{j=0}^M \gamma_i^j B_j$.

We look for $\pi=\sum_{k=0}^M \alpha_k B_k$ such that $\pi \in \uAdm$, which means
\begin{equation*}
    q_i = \Tr\left[ Q_i \pi \right] = \sum_{j,k} \gamma_i^j \alpha_k \Tr\left[ B_j B_k \right]=\sum_{j} \gamma_i^j \alpha_j.
\end{equation*}
Since the matrices $\{Q_i\}_i$ are linearly independent, the matrix $\Gamma \in \R^{(M+1)\times(M+1)}$ defined as $\Gamma_{i,j}=\gamma_i^j$ is invertible. Thus, there exists $(\alpha_0,\dots,\alpha_M) \in \R^{M+1}$ such that $\Tr\left[ Q_i \pi \right]=q_i$ for all $i \in [M]$.

We prove 2). Given $\pi \in \uAdm$, we have

\begin{equation}
\label{eq:chain_inequalities_linear_dependence}
    \Tr\left[ Q_j \pi \right]\stackrel{\eqref{eq:linear_dependence_of_Qi}}{=} \sum_{i=0}^{j_0} t_j^i \Tr\left[ Q_i \pi \right] = \sum_{i=0}^{j_0} t_j^i q_i\quad\text{for all }j >j_0.
\end{equation}
Thus, since $\Tr\left[ Q_j \pi \right]=q_j$, we conclude one implication.
Conversely, we repeat verbatim the argument in Step 1 to the collection $\{Q_i\}_{i=0}^{j_0}$ and we construct one $\pi \in \rmH(\cH)$ such that $\Tr[Q_j \pi]=q_j$ for all $j \in [j_0]$. For $j > j_0$, the equalities in \eqref{eq:chain_inequalities_linear_dependence} gives that $\Tr\left[ Q_j \pi\right]=q_j$ for all $j > j_0$, thus proving $\pi \in \uAdm$.
\end{proof}

\section{Duality and characterization of the optimizers}\label{sec:main_duality}
In this section, we prove the duality result in Theorem~\ref{thm:main_duality}. 
We let $ M\in\mathbb{N}$, $\varepsilon>0$ be a positive real number, $\cH$ be a finite-dimensional Hilbert space of dimension $d \in \mathbb{N}$. Let $H,Q_0, Q_1,\dots,Q_M \in \rmH(\cH)$ be Hermitian matrices over  $\cH$ and $(q_0,q_1,\dots,q_M)\in\mathbb{R}^{M+1}$. We also pick $Q_0 \in \tH_>(\cH)$ any positive definite Hermitian operator and $q_0\in (0,+\infty)$.
We recall the definition of the set of admissible states associated with this class of observables and measurements, given by
\begin{align}
    \qAdm = 
    \big\{
        \pi \in \tH_\geq(\cH) 
            \suchthat 
        \Tr[Q_i \pi] = q_i, 
            \, 0 \leq i \leq M
    \big\}
        .
\end{align}

We start by showing that the dual functional is a lower bound for the primal, and characterize the equality cases, i.e., the optimizers for the problems. 
\begin{prop}[Weak duality and optimality]
\label{prop:weak-duality}
 Let $\varphi:[0,+\infty)\to\R$ be function satisfying assumption \eqref{eq:assumptions_varphi_intro}, and $\psi= \varphi^*$ be its Legendre transform. Then 

\begin{equation}
\label{eq:weak-duality_inequality}
    \primal(\pi)
        \geq 
    \dualf(\bm\alpha)
        \, , \qquad 
    \forall \pi \in \qAdm
        \, , \, \, \bm\alpha \in \R^{M+1}
            \, .
\end{equation}
Moreover, for some  $\pi \in \qAdm$ and $\bm\alpha \in \R^{M+1}$, the following conditions are equivalent: 
\begin{itemize}
    \item  $\primal(\pi) = \dualf(\bm\alpha)$.
    \item We have the differential inclusion 
\begin{align}
\label{eq:optimality}
    \pi \in\partial\psi\left(\frac{\sum_{i=0}^M\alpha_iQ_i - H}{\varepsilon}\right)
        \, , 
\end{align}
in the sense of \eqref{eq:subdifferential_spectral_calculus}.
\end{itemize}

In particular, whenever $\psi$ is differentiable, this happens if and only if $\pi =\psi'\left(\frac{\sum_{i=0}^M\alpha_iQ_i - H}{\varepsilon}\right)$. Whenever one of the above two equivalent conditions holds, then any such $\pi$ is a minimizer for the primal problem, and $\bm\alpha$ is a maximizer for the dual one. 
\end{prop}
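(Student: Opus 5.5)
The plan is to reduce everything to the Fenchel--Young inequality for trace functionals established in Proposition~\ref{prop:computation_legendre_transform}. Fix $\pi \in \qAdm$ and $\bm\alpha \in \R^{M+1}$, and set $W := \frac{1}{\eps}\big(\sum_{i=0}^M \alpha_i Q_i - H\big) \in \rmH(\cH)$. Using the constraints $\Tr[Q_i \pi] = q_i$, we rewrite the linear part of the dual functional as
\begin{align}
    \sum_{i=0}^M \alpha_i q_i
        =
    \sum_{i=0}^M \alpha_i \Tr[Q_i \pi]
        =
    \Tr[H\pi] + \eps \Tr[W \pi]
        .
\end{align}
This turns the difference between primal and dual into
\begin{align}
    \primal(\pi) - \dualf(\bm\alpha)
        =
    \eps \Big( \Tr[\varphi(\pi)] + \Tr[\psi(W)] - \Tr[W\pi] \Big)
        .
\end{align}

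By \eqref{eq:Legendre_transform_trance_function} we have $\Tr[\psi(W)] = \Phi^*(W) \geq \langle W,\pi\rangle - \Tr[\varphi(\pi)]$ for every $\pi \in \rmH_\geq(\cH)$. Hence the bracket is nonnegative, and since $\eps>0$ this gives \eqref{eq:weak-duality_inequality}.

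For the equivalence, $\primal(\pi) = \dualf(\bm\alpha)$ holds exactly when the bracket vanishes, because $\eps>0$. The bracket vanishing is condition 1) of Proposition~\ref{prop:computation_legendre_transform} applied to this $W$ and $\pi$. By that proposition, condition 1) is equivalent to 2) ($\pi \in \partial\Phi^*(W)$) and to 4), the spectral compatibility. Condition 4) is precisely what the notation \eqref{eq:subdifferential_spectral_calculus} means, so it yields \eqref{eq:optimality}.

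When $\psi$ is differentiable, $\partial\psi$ is single-valued and the inclusion reads $\pi_i = \psi'(W_j)$ whenever $\langle \xi_j|\gamma_i\rangle \neq 0$. I would argue that this forces $\pi = \psi'(W)$ in the spectral calculus sense: expanding $\pi\ket{\xi_j} = \sum_i \pi_i \ket{\gamma_i}\langle\gamma_i|\xi_j\rangle = \psi'(W_j)\ket{\xi_j}$ shows that each $\xi_j$ is an eigenvector of $\pi$ with eigenvalue $\psi'(W_j)$. Conversely, $\pi = \psi'(W)$ admits a common eigenbasis with $W$ in which the condition holds trivially. This is the only step needing care, and it is a short computation mirroring the end of the proof of Proposition~\ref{prop:computation_legendre_transform}; no strict convexity is needed.

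Finally, if equality holds, weak duality gives, for any $\tilde\pi \in \qAdm$ and $\tilde{\bm\alpha} \in \R^{M+1}$,
\begin{align}
    \primal(\tilde\pi) \geq \dualf(\bm\alpha) = \primal(\pi) \geq \dualf(\tilde{\bm\alpha})
        .
\end{align}
So $\pi$ is a primal minimizer and $\bm\alpha$ is a dual maximizer. I expect no real obstacle: all the analytic content already sits in Proposition~\ref{prop:computation_legendre_transform}.
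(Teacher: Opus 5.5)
Your proof is correct and follows the same route as the paper: apply the trace Fenchel--Young inequality of Proposition~\ref{prop:computation_legendre_transform} with $W=\frac{1}{\eps}\big(\sum_{i=0}^M\alpha_iQ_i-H\big)$, use the constraints $\Tr[Q_i\pi]=q_i$, and read off the equality case from that proposition. Your check that a single-valued $\partial\psi$ forces $\pi=\psi'(W)$, and your sandwich argument showing that $\pi$ minimizes and $\bm\alpha$ maximizes, fill in steps the paper leaves implicit.
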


\begin{proof} 
As a consequence of \eqref{eq:Legendre_transform_trance_function} in Proposition~\ref{prop:computation_legendre_transform}, we have that for all $\rmH_{\geq}(\cH)$ and $A$ Hermitian, one has
\begin{equation}\label{eq:Fenchel-Young}
    \Tr[\varphi(\pi)] + \Tr[\psi(A)] \geq \Tr[A\pi],
\end{equation}
and the equality happens if and only if $\pi\in\partial \psi(A)$.

Thus, for our case, choosing $A=\frac{\sum_{i=0}^M\alpha_i Q_i - H}{\eps}$ and multiplying by $\eps$ in equation~\eqref{eq:Fenchel-Young}
\begin{align*}
    \eps \Tr[\varphi(\pi)] +\eps \Tr\left[\psi\left(\frac{\sum_{i=0}^M\alpha_i Q_i - H}{\eps}\right)\right] \geq &
    % \Tr[ (\sum_{i=0}^M\alpha_i Q_i - H)\pi] \\
    % \geq & 
    \sum_{i=0}^M\alpha_i \Tr[Q_i\pi] - \Tr[H\pi]
\end{align*}
Now, restricting $\pi$ to the competitors class $\qAdm$, one obtains
\begin{align*}
    \primal(\pi)=\Tr[H\pi] +\eps \Tr[\varphi(\pi)] \geq \sum_{i=0}^M\alpha_i q_i - \eps \Tr\left[\psi\left(\frac{\sum_{i=0}^M\alpha_i Q_i - H}{\eps}\right)\right] = \dualf(\bm\alpha).
\end{align*}
This proves \eqref{eq:weak-duality_inequality}. As the equality case in the previous inequality is precisely characterized by  \eqref{eq:optimality}, the claimed equivalence readily follows.
\end{proof}

The previous result shows that $\prim \geq \dual$. Next, we shall discuss the existence of maximizers in the dual problem, and as a consequence, the validity of strong duality. Note that, in order to have the existence of a maximizer, additional assumptions are necessary, as suggested by various works \cite{CapGerMonPor24, FelGerPor23} (therein, the assumption was on the kernel of the marginals). This is a typical feature in linear programming: the lack of a duality gap and the existence of maximizers are consequences of the existence of a feasible element which lies in the \textit{interior} of the admissibility set, an assumption which is known as \textit{Slater condition} \cite{slater2013lagrange}. In this framework, the right hypothesis is to assume the existence of an admissible operator $\pi \in \qAdm$ with \textit{no kernel}, as described in the next theorem. We next provide a detailed proof in our setting, as the (quantitative in $\eps>0$) coercivity estimates find application in later convergence results.

\begin{theo}[Existence of a maximizer]
\label{thm:max_exist}
Let $\psi:\R\to\R$ be a superlinear, convex function bounded from below. Assume that 
\begin{align}
\label{eq:condition_for_exist_max}
    \qAdm \cap \tH_>(\cH) \neq \emptyset 
        .
\end{align} 
Then the dual functional $\dualf$ defined in~\eqref{intro:maindual} admits a maximizer $\bm\alpha^* = (\alpha_0^*,...,\alpha_M^*)\in\mathbb{R}^{M+1}$.
\end{theo}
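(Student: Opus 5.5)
The plan is the direct method. $\dualf$ is concave and continuous on $\R^{M+1}$, since $\psi$ is convex and continuous and $W\mapsto \Tr[\psi(W)]$ is convex and continuous on $\rmH(\cH)$. It therefore suffices to show that $\dualf$ is coercive, i.e.\ $\dualf(\bm\alpha)\to-\infty$ as $|\bm\alpha|\to\infty$, at least modulo the directions along which $\dualf$ is invariant.

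First we remove those directions. By Proposition~\ref{prop:nonmptiness}, write $Q_j=\sum_{i\le j_0} t_j^iQ_i$ for $j>j_0$. Since \eqref{eq:condition_for_exist_max} gives $\qAdm\neq\emptyset$, the consistency condition \eqref{eq:iff_nonempty} holds. Hence $\dualf(\bm\alpha)$ depends on $\bm\alpha$ only through $W_{\bm\alpha}=\sum_i\alpha_iQ_i$, and $\sum_i\alpha_iq_i=\Tr[\pi_0 W_{\bm\alpha}]$ for any $\pi_0\in\uAdm$. We may therefore restrict to $\bm\alpha$ with $\alpha_j=0$ for $j>j_0$. On this subspace the map $\bm\alpha\mapsto A:=\sum_i\alpha_iQ_i$ is an injective linear map onto $\mathcal{V}_{\mathbf Q}$, so it suffices to prove coercivity in $A\in\mathcal V_{\mathbf Q}$.

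Next we use the Slater point. Fix $\pi_0\in\qAdm\cap\tH_>(\cH)$ with $\pi_0\ge\omega_0\Id$. Then
\[
\dualf(\bm\alpha)=\Tr[\pi_0 A]-\eps\Tr\Big[\psi\Big(\tfrac{A-H}{\eps}\Big)\Big].
\]
Apply the pointwise Fenchel--Young inequality with a larger density $\lambda\pi_0$: $\Tr[\lambda\pi_0 B]\le \Tr[\varphi(\lambda\pi_0)]+\Tr[\psi(B)]$. Rather than tracking $\varphi$, I would argue spectrally. Write $A-H=\sum_j\mu_j\ket{\xi_j}\bra{\xi_j}$ and set $c_j:=\bra{\xi_j}\pi_0\ket{\xi_j}\in[\omega_0,\|\pi_0\|]$. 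Then
\[
\dualf(\bm\alpha)=\Tr[\pi_0H]+\sum_j\big(c_j\mu_j-\eps\psi(\mu_j/\eps)\big).
\]
Superlinearity gives $\eps\psi(t/\eps)-ct\to+\infty$ as $t\to+\infty$, uniformly for $c\le\|\pi_0\|$. Also, $\psi\ge m$ gives $c_jt-\eps\psi(t/\eps)\le \omega_0 t-\eps m$ for $t<0$, which tends to $-\infty$ as $t\to-\infty$. Each summand is bounded above uniformly in $c\in[\omega_0,\|\pi_0\|]$, so $\dualf(\bm\alpha)\to-\infty$ whenever $\max_j|\mu_j|\to\infty$, i.e.\ whenever $\|A-H\|\to\infty$, which is equivalent to $\|A\|\to\infty$. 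This gives a quantitative bound of the form $\dualf(\bm\alpha)\le C-\omega_0\|A-H\|+C'$ in the regime $\|A-H\|$ large, where I would track the dependence on $\eps$ for later use.

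Finally, the superlevel set $\{\dualf\ge\dualf(0)\}$ restricted to the subspace $\{\alpha_j=0,\ j>j_0\}$ is closed and, by injectivity and norm equivalence in finite dimensions, bounded. So a maximizer exists by Weierstrass, and by the invariance noted above it is a global maximizer on $\R^{M+1}$.

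The main obstacle is the lower tail $\mu_j\to-\infty$, where $\psi$ gives no growth. It is exactly the strict positivity $c_j\ge\omega_0>0$ coming from the kernel-free competitor that makes the linear term $c_j\mu_j$ go to $-\infty$. Here I would check carefully that the bounds are uniform over the changing eigenbasis, which is handled by the bounds $c_j\in[\omega_0,\|\pi_0\|]$.
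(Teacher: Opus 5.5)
Your proof is correct. It shares the paper's overall skeleton: write $\sum_i\alpha_iq_i=\Tr[\pi_0A]$ using the kernel-free competitor, bound the spectrum from above by superlinearity and from below by $\pi_0\ge\omega_0\Id$, quotient out the linear dependencies among the $Q_i$, and conclude by compactness. The central estimate, however, is organized differently.

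The paper first applies convexity of $W\mapsto\Tr[\psi(W)]$ to the splitting $A=\eps\,\frac{A-H}{\eps}+(1-\eps)\frac{H}{1-\eps}$. For $\eps\le\frac12$ this gives $\dualf(\bm\alpha)\le\Tr[\pi_0A]-\Tr[\psi(A)]+h_\eps$, with $h_\eps=(1-\eps)\Tr[\psi(\frac{H}{1-\eps})]$. It then diagonalizes $A$ itself, bounds $\lambda_d(A)$ by superlinearity, and only afterwards bounds $\lambda_1(A)$ from below using $\omega_0$ together with the bound on $\lambda_d$. The case $\eps>\frac12$ needs a separate rescaling.

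You instead diagonalize $A-H$ directly and reduce everything to the scalar functions $t\mapsto ct-\eps\psi(t/\eps)$, $c\in[\omega_0,\|\pi_0\|_\infty]$. These are uniformly bounded above and tend to $-\infty$ at both ends. Your route therefore needs neither the trace-convexity lemma nor any case distinction in $\eps$, and it treats both tails of the spectrum symmetrically.

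The paper's splitting is designed to isolate all $\eps$- and $H$-dependence in the scalar $h_\eps$. It reuses this for the equi-coercivity in Proposition~\ref{prop:alpha_indepentent_eps}, where bounding $h_\eps$ uniformly costs the extra assumption that $\psi$ is nondecreasing. Your route gives that uniformity too, and without monotonicity. Take $L=\|\pi_0\|_\infty+\omega_0$, choose $C_L$ with $\psi(s)\ge Ls-C_L$ for $s\ge0$, and set $C:=\max(C_L,|m|)$. Then $ct-\eps\psi(t/\eps)\le-\omega_0|t|+\eps C$ for all $t\in\R$. Hence $\dualf(\bm\alpha)\le\Tr[\pi_0H]-\omega_0\|A-H\|_\infty+d\,\eps_0C$ for every $\eps\in(0,\eps_0]$.

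A minor point: you do not need the consistency condition from Proposition~\ref{prop:nonmptiness} to see that $\dualf$ depends on $\bm\alpha$ only through $A$. The identity $\sum_i\alpha_iq_i=\Tr[\pi_0A]$ already follows from $\pi_0\in\qAdm$.
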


\begin{proof}
By \eqref{eq:condition_for_exist_max}, pick $\pi_0\in\mathrm{H}(\cH)$ such that 
\[
   \inf_{\xi\in\cH, ||\xi||=1}{\bra\xi \pi_0\ket\xi}=\omega_0>0\quad\quad \text{and} \quad\quad q_i = \Tr[Q_i \pi_0], \qquad \forall i \in [M]
    .
\]

\underline{Step 1}. \ 
We prove that, for every $A \in \R$, the set
\begin{equation*}
    \mathcal{S}_A:=
    \left\{  \sum_{i=0}^M \alpha_i Q_i:\, \bm\alpha \in \R^{M+1},\, \dualf(\bm\alpha) \ge A  \right\}
\end{equation*}
is bounded. Let us first assume $\eps \in (0,\frac12]$. By convexity of $\psi$, by \cite[Theorem 2.10]{carlen2010trace} $\Tr[\psi(\cdot)]$ is convex, thus
    \begin{align*}
        A \leq& \sum_{i=0}^M \alpha_i q_i - \eps \Tr\left[\psi\left(\frac{\sum_{i=0}^M \alpha_i Q_i - H}{\eps}\right)\right] \\
        \leq & \sum_{i=0}^M \alpha_i q_i - 
        \Tr\left[\psi\left(\sum_{i=0}^M \alpha_i Q_i\right)\right]  + (1-\eps) \Tr\left[\psi\left(\frac{H}{1-\eps}\right)\right].
    \end{align*}
    For simplicity, denote $h_\eps = (1-\eps)\Tr[\psi(\frac{H}{1-\eps})])$. Now, consider the eigendecomposition $\sum_i \alpha_i Q_i = \sum_k \lambda_k \ket{\xi_k}\bra{\xi_k}$, where  $\{\ket{\xi_k}\}_{k=1}^d$ forms a full basis of $\cH$ and $\lambda_1\leq ...\leq \lambda_d$ are the eigenvalues. By relation $q_i = \Tr[Q_i \pi_0]$ for all $0\leq i\leq M$ we then have
    \begin{align}\label{eq:superlevel_ineq}
        A -h_\eps \leq& \sum_{i=0}^M \alpha_i q_i - 
        \Tr\left[\psi\left(\sum_{i=0}^M \alpha_i Q_i\right)\right] = \sum_{k=1}^d\lambda_k \bra{\xi_k}\pi_0\ket{\xi_k} - \sum_{k=1}^d \psi(\lambda_k).
    \end{align}
    Recall from \eqref{eq:boundedness_qAdm} that every competitor has trace which is uniformly bounded by $t_0:= \frac{q_0}{\lambda_1(Q_0)}$. Using this and that $\psi$ is bounded from below by some $m \in \R$, on one hand \eqref{eq:superlevel_ineq} yields
    \begin{align*}
        A -h_\eps \leq \lambda_d \sum_{k=1}^d\bra{\xi_k}\pi_0\ket{\xi_k} - \sum_{k=1}^d \psi(\lambda_k)\leq  t_0 \lambda_d - \psi(\lambda_d) - m(d-1).
                          \end{align*}
    Consequently, arguing as in~\cite[Remark 3.1]{CapGerMonPor24}, by superlinearity of $\psi$, from 
    \begin{equation}\label{eq:bound_lambda_max}
        t_0\lambda_d - \psi(\lambda_d) \geq A -h_\eps + m(d-1)
    \end{equation}
    one obtains that $\lambda_d\leq R $ for some $R>0$ which depends on $A - h_\eps+ m(d-1)$.

    On the other hand, again using  \eqref{eq:superlevel_ineq} and lower bounds of $\psi$, 
    \begin{align*}
        A -h_\eps 
            &\leq
        \lambda_1 \bra{\xi_1}\pi_0\ket{\xi_1} 
            +
        \lambda_d
            \sum_{k=2}^d \bra{\xi_k}\pi_0\ket{\xi_k} - \sum_{k=1}^d \psi(\lambda_k) 
    \\
            &\leq 
        \lambda_1 \bra{\xi_1}\pi_0\ket{\xi_1} + \lambda_d
        ( 
            \Tr[\pi_0]-\bra{\xi_1}\pi_0\ket{\xi_1}
        ) - md
    \\
            &\leq 
        \lambda_1\omega_0 + \lambda_d(\Tr[\pi_0]-\omega_0) - md
    \\
            &\leq 
        \lambda_1\omega_0 + R(\Tr[\pi_0]-\omega_0) - md
            \leq 
         \lambda_1\omega_0 + R(t_0 -\omega_0) - md
            ,
    \end{align*}
    where in the above we used that $\bra{\xi_1}\pi_0\ket{\xi_1}\geq \omega_0>0$, $\omega_0 < \Tr[\pi_0] \leq t_0$, and that $\lambda_1\leq ...\leq \lambda_d$. Therefore, we conclude the following lower bound on $\lambda_1$
    \begin{equation}\label{eq:bound_lambda_min}
        \lambda_1 \geq \frac{1}{\omega_0}(A - h_\eps + md - R(t_0-\omega_0)).
    \end{equation}
    This, together with $\lambda_d \leq R$, concludes the proof of Step 1
    for $\eps \in (0,\frac12]$. 
    Additionally, we in fact showed that there exists a function $f=f(\eps,A,H)$\footnote{As well as on $\psi, d, t_0$, but we omit this dependence for simplicity} such that 
    \begin{align}
    \label{eq:quantitative_coercivity}
        \max(|\lambda_1|, |\lambda_d| )
            \leq 
        f(\eps, A, H)
            < +\infty
            \, .
    \end{align}
    Moreover, it is clear by construction that for every $\eps \in (0,\frac12]$, the function $f$ is locally bounded in $A$ and $H$. We see in the next corollary that it is possible to prove that it is also uniformly bounded in $\eps \in (0,\frac12]$. 

    Now, let's fix $\eps > \frac12$. 
    Using the bound from below from $\psi$, we see that 
    \begin{align}
        \dualf(\bm \alpha) 
            &= 
        2\eps
        \bigg(
        \sum_{i=0}^M 
            \frac1{2\eps}\alpha_i q_i
                - 
            \frac12
        \Tr\left[\psi\left(\frac{\sum_{i=0}^M \alpha_i Q_i - H}{\eps}\right)\right]
    \bigg)
    \\
            &=
        2\eps {\rm D}_{\frac12}
        \Big( 
            \frac1{2\eps} \bm \alpha 
                ; 
            \frac1{2\eps} H
        \Big)
            \, , 
    \end{align}
    where with $D_\eps(\cdot; K)$ denotes the dual functional with Hamiltonian $K$. Therefore, if $\dualf(\bm \alpha) >A$, then thanks to \eqref{eq:quantitative_coercivity} we deduce that 
    \begin{align}
    \label{eq:quantitative_eps>12}
       \max( |\lambda_1| , |\lambda_d| )
            \leq 
        2\eps f\Big( \frac12, \frac{A}{2\eps} , \frac{H}{2\eps} \Big)
            < +\infty 
                \, ,
    \end{align}
    where as before $\lambda_i$ denotes the $i$-th eigenvalue of $\sum_{i=0}^M \alpha_i Q_i$. Step 1 is now complete. 

\underline{Step 2}. \
We prove that there exists a bounded maximizing sequence. More precisely, we claim that for every $A \in \R$, there exist $\bm\alpha^A \in \R^{M+1}$ so that 
\begin{align}
    \sum_{i=0}^M \alpha_i^A Q_i \in \mathcal{S}_A
        \tand 
    \|\bm\alpha^A \| \leq C_A
\end{align}
for some $C_A\in \R$.
We have 2 cases.

    \begin{enumerate}
        \item[Case 1:] Suppose that $\{Q_0,...,Q_M\}$ are linearly independent. Then the map $(\alpha_0,...,\alpha_M)\mapsto \sum_{i=0}^M \alpha_i Q_i$ is injective, and thus invertible on its image. Since the inverse is a linear map between finite-dimensional Hilbert spaces, it is bounded. Therefore, by Step 1 we conclude.
        \item[Case 2:] Conversely, if there exists some $\beta_0,...,\beta_M$ with $\beta_j\neq 0$ for some $j$ and $\sum_{i=0}^M \beta_i Q_i=0$, define
        \[
        \tilde \alpha_i := \alpha_i - \frac{\alpha_j}{\beta_j}\beta_i.
        \]
        Then $\sum_{i=0}^M \alpha_iQ_i = \sum_{i=0}^M \tilde\alpha_i Q_i$ and $\tilde\alpha_j=0$, thus $\sum_{i=0}^M \alpha_iQ_i = \sum_{i \neq j} \tilde\alpha_i Q_i$. In particular, it can be readily checked that $\dualf(\tilde{\bm\alpha}) \ge A$. Now, if $\{Q_i\}_{i \neq j}$ are linearly independent, one repeats verbatim the proof of Case 1 with the family $\{Q_i\}_{i \neq j}$ and we conclude. Otherwise, repeat the procedure of Case 2 for $\sum_{i\neq j}^N \tilde\alpha_i Q_i$. 
    \end{enumerate}

\underline{Step 3}. 
We conclude by observing that from Steps 1 and 2, we ensure the existence of a converging sequence of maximizers. As the functional $\dualf$ is readily continuous, we conclude that every limit point must necessarily be a maximizer.
\end{proof}

In our work, we are interested in the behavior of the problem when the regularization parameter $\eps$ is close to zero. 
A more careful analysis of the previous proof provides equi-coercivity properties for small values of $\eps$ (and in general for $\eps \in [0,\eps_0]$ for any $\eps_0 >0$), which is crucial for the analysis as $\eps \to 0$.

\begin{prop}[Equi-coercivity in $\eps>0$]\label{prop:alpha_indepentent_eps}
Let $\psi:\R\to\R$ be a superlinear, non-decreasing\footnote{Note that for $\psi = \varphi^*$ and $\varphi$ satisfying \eqref{eq:assumptions_varphi_intro}, then monotonicity of $\psi$ is always true, see e.g. \cite[Remark~A.1]{CapGerMonPor24}.}, convex function bounded from below. Assume that 
% \begin{align}
% \label{eq:condition_for_exist_max}
$
    \qAdm \cap \tH_>(\cH) \neq \emptyset 
        .
$
% \end{align} 
Assume that we have $\eps_0 >0$ and a sequence $\bar{\bm \alpha}^\eps\in\R^{M+1}$ such that
\begin{align}
\label{eq:hp_equicoerc}
    \inf_{0 < \eps < \eps_0} 
        \dualf(\bar{\bm \alpha}^\eps)\geq A
            .
\end{align}

Then the norm of the operator $\sum_{i=0}^M \bar \alpha^\eps_i Q_i$ is uniformly bounded by a constant $L= L(A)< \infty$ independent of $\eps$. Consequently, assume (up to reordering) that all $Q_i$ with $i=0, \dots, j_0$ is a basis for the whole collection $\mathcal{V}_{\mathbf{Q}}$ (cfr. \eqref{eq:def_VQ}), and we denote by $\hat{\bm\alpha}^\eps$ the vector such that 
\begin{align}
    \sum_{i=0}^M 
        \bar \alpha_i^\eps Q_i 
            =
    \sum_{i=0}^{j_0}
        \hat \alpha_i^\eps Q_i
            \, ,
\end{align}
then $\dualf(\bar{\bm\alpha}^\eps) =\dualf(\hat{\bm\alpha}^\eps)$ and  $\hat{\bm\alpha}^\eps$ is uniformly bounded.
\end{prop}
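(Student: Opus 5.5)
The plan is to rerun Step~1 of the proof of Theorem~\ref{thm:max_exist}, keeping track of how the constants depend on $\eps$, and to remove that dependence by using the monotonicity of $\psi$ instead of the convexity splitting with $h_\eps$. Set $W^\eps := \sum_i \bar\alpha^\eps_i Q_i$ with eigenvalues $\lambda_1^\eps\le\dots\le\lambda_d^\eps$ and eigenvectors $\xi_k$.

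\textbf{Step 1: an $\eps$-free lower bound.} The key estimate is
\begin{align}
\eps\,\psi\Big(\frac{\lambda - h}{\eps}\Big) \geq \psi(\lambda - h) - \psi(0) + \eps\, \psi(0)
    \geq \psi(\lambda-h) - \psi(0) + \eps m
\end{align}
for $\eps\le 1$. It follows from convexity, since $t\mapsto \eps\psi(t/\eps)-\eps\psi(0)$ is the perspective, which is nonincreasing in $\eps$. Combining this with Weyl's inequality $\lambda_k(W^\eps - H)\ge \lambda_k^\eps - \|H\|$ and the monotonicity of $\psi$, I get
\begin{align}
\eps\Tr\Big[\psi\Big(\frac{W^\eps-H}{\eps}\Big)\Big]\ge \sum_k \psi(\lambda_k^\eps-\|H\|) - d\,(\psi(0)-m\eps) .
\end{align}
This is the main point where monotonicity enters. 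The perspective argument handles $\eps\le 1$; if $\eps_0>1$, the range $[1,\eps_0]$ needs a separate treatment. There I would use the rescaling trick from the proof of Theorem~\ref{thm:max_exist}, or simply bound $\eps\psi(x/\eps)\ge \eps_0$-dependent quantities, since there $\eps$ is bounded above and below.

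\textbf{Step 2: bound the largest eigenvalue.} Substituting into \eqref{eq:hp_equicoerc} and using $q_i=\Tr[Q_i\pi_0]$ gives
\begin{align}
A \le \sum_k \lambda_k^\eps \bra{\xi_k}\pi_0\ket{\xi_k} - \sum_k \psi(\lambda_k^\eps-\|H\|) + C,
\end{align}
with $C$ independent of $\eps$. Bounding the $\psi$ terms for $k<d$ below by $m$, and the first sum by $t_0\lambda^\eps_d$ when $\lambda_d^\eps\ge0$, yields $t_0\lambda_d^\eps-\psi(\lambda_d^\eps-\|H\|)\ge A - C'$. Superlinearity then gives $\lambda_d^\eps\le R$ uniformly in $\eps$.

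\textbf{Step 3: bound the smallest eigenvalue.} Since $\bra{\xi_1}\pi_0\ket{\xi_1}\ge\omega_0$, the same argument as in \eqref{eq:bound_lambda_min} gives $\lambda_1^\eps\ge \omega_0^{-1}(A-C''-R(t_0-\omega_0))$. Hence $\|W^\eps\|\le L(A)$.

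\textbf{Step 4: the reduced coefficients.} The identity $\dualf(\bar{\bm\alpha}^\eps)=\dualf(\hat{\bm\alpha}^\eps)$ follows because $\dualf$ depends on $\bm\alpha$ only through $\sum_i\alpha_iq_i$ and $W$. Moreover, $\sum_i \bar\alpha_i q_i=\Tr[W\pi_0]=\sum_{i\le j_0}\hat\alpha_iq_i$. Finally, the map $\R^{j_0+1}\ni\hat{\bm\alpha}\mapsto\sum_{i\le j_0}\hat\alpha_iQ_i$ is an injective linear map between finite-dimensional spaces, so it has a bounded inverse on its image. Therefore $|\hat{\bm\alpha}^\eps|\le C L(A)$.

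I expect the main obstacle to be Step 1, namely getting an $\eps$-uniform comparison between $\eps\psi(\cdot/\eps)$ and $\psi$. The perspective monotonicity inequality is what I would try first.
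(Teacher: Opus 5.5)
Your proposal is correct, but it gets the $\eps$-uniform estimate by a different mechanism than the paper.

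\textbf{The paper's route.} It reuses Step~1 of Theorem~\ref{thm:max_exist} verbatim. For $\eps\le\frac12$ it splits $W=\eps\cdot\frac{W-H}{\eps}+(1-\eps)\frac{H}{1-\eps}$ and uses convexity of $W\mapsto\Tr[\psi(W)]$ to get $-\eps\Tr[\psi(\frac{W-H}{\eps})]\le-\Tr[\psi(W)]+h_\eps$, with $h_\eps=(1-\eps)\Tr[\psi(\frac{H}{1-\eps})]$. Thus $H$ enters only through the scalar $h_\eps$, and monotonicity of $\psi$ is used to bound $h_\eps$ uniformly for $\eps<\frac12$. The range $[\frac12,\eps_0]$ is handled separately by the rescaling $\dualf(\bm\alpha)=2\eps\,{\rm D}_{\frac12}\big(\tfrac{\bm\alpha}{2\eps};\tfrac{H}{2\eps}\big)$ and local boundedness of the coercivity function $f(\frac12,\cdot,\cdot)$.

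\textbf{Your route.} You compare $\eps\Tr[\psi(\frac{W-H}{\eps})]$ directly with $\sum_k\psi(\lambda_k(W)-\|H\|)$, eigenvalue by eigenvalue. This uses the monotonicity of the perspective $\eps\mapsto\eps(\psi(t/\eps)-\psi(0))$ together with Weyl's inequality, and spends the monotonicity of $\psi$ on the Weyl shift rather than on $h_\eps$. It gives an explicit $\eps$-free error term and needs no trace-convexity result.

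\textbf{The loose end.} You leave the range $[1,\eps_0)$ vague when $\eps_0>1$, but your own inequality closes it with no case split. Since the perspective is nonincreasing in $\eps$, every $0<\eps\le\eps_0$ satisfies $\eps\psi(t/\eps)\ge\tilde\psi(t)-\eps_0|\psi(0)|$, where $\tilde\psi:=\eps_0\psi(\cdot/\eps_0)$. This $\tilde\psi$ is again convex, superlinear, nondecreasing and bounded below, so your Steps~2--3 run unchanged with $\tilde\psi$ in place of $\psi$.

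\textbf{Comparison.} Your route is somewhat more streamlined: one comparison covers all of $(0,\eps_0)$ with explicit constants. The paper's route has the advantage of recycling the coercivity computation of Theorem~\ref{thm:max_exist} without redoing the eigenvalue estimates.
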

\begin{proof}
    Recall the notion of $f$ introduced in \eqref{eq:quantitative_coercivity}. Firstly, we claim that we can choose $f$ in such a way that 
    \begin{align}
    \label{eq:uniform_bound_<12}
        C_1  := 
        \sup_{\eps \in ( 0, \frac12 ) }
            f(\eps, A, H) 
                < \infty 
                    \, .
    \end{align}
    Indeed, 
  going back to inequality in Eq.\eqref{eq:superlevel_ineq}, we had
    \[
    A -h_\eps \leq \sum_{i=0}^M \bar\alpha_i^\eps q_i - 
        \Tr\left[\psi\left(\sum_{i=0}^M \bar\alpha_i^\eps Q_i \right)\right].
    \]
    The only part that depends on $\eps$ is $h_\eps = (1-\eps)\Tr[\psi(\frac{H}{1-\eps})])$, thus it is enough to bound it from above. Now, as $0 < \eps < \frac12$, one has 
    \[
    \frac{H}{1-\eps} \leq \frac{\lambda_d(H)}{1-\eps}\Id \leq \max\{0, 2 \lambda_d(H)\} \Id
    \]
    and thus, by monotonicity and nonnegativity of $\Tr[\psi(\cdot)]$, 
    \[
        h_\eps 
            = 
        (1-\eps)\Tr\left[\psi\left(\frac{H}{1-\eps}\right)\right] 
            \leq 
        (1- \eps)\Tr\left[\psi\left(\max\{0, 2 \lambda_d(H)\}\Id\right)\right]  \leq d \psi(\max \left\{0,  2 \lambda_d(H)\right\}) 
            \, 
         =:h,
    \]
    thus providing a uniform bound in $\eps < \frac12$. 
    Consequently, all the conclusions about the norm of $\sum_{i=0}^M \bar\alpha^\eps_iQ_i$ (and thus for $\bar{\bm \alpha}^\eps$) can now be directly replicated using $A-h$ instead of $A-h_\eps$.
    Moreover, if $\eps_0 > \frac12$, since $f(\frac12, \cdot,\cdot)$ is locally bounded (as already observed in the proof of Theorem~\ref{thm:max_exist}), we also have that
    \begin{align}
    \label{eq:uniform_bound_>12_locally}
        C_2 
            :=
        \sup_{\eps \in [\frac12, \eps_0]} 
             2\eps f \Big( \frac12, \frac{A}{2\eps}, \frac{H}{2\eps} \Big) 
              < \infty  
                \, .
    \end{align}
    Now, assume without loss of generality that $\eps_0 \geq \frac12$, and let $\bar{\bm\alpha}^\eps$ such that \eqref{eq:hp_equicoerc} is satisfied. From  \eqref{eq:quantitative_coercivity}, we infer that for $\eps \leq \frac12$, 
    \begin{align}
        \Big\|
            \sum_{i=0}^M 
                \bar \alpha_i^\eps Q_i 
        \Big\|_\infty 
            \leq 
        f(\eps,A,H)
            \, ,
    \end{align} 
    while for $\eps > \frac12$, \eqref{eq:quantitative_eps>12} yields
    \begin{align}
        \Big\|
            \sum_{i=0}^M 
                \bar \alpha_i^\eps Q_i 
        \Big\|_\infty 
            \leq 
         2 \eps f  \Big( \frac12, \frac{A}{2\eps}, \frac{H}{2\eps} \Big) 
            \, .
    \end{align}
    All in all, this shows that 
    \begin{align}
        \sup_{\eps \in (0,\eps_0)}
        \Big\|
            \sum_{i=0}^M 
                \bar \alpha_i^\eps Q_i 
        \Big\|_\infty 
            \leq 
        \max (C_1 , C_2 )
            < \infty .
    \end{align}
    This provides the sought uniform bound in $\eps \in (0,\eps_0)$ for $\sum_{i=0}^M \bar\alpha_i^\eps Q_i$. The claimed bounds on $\bar{\bm \alpha}^\eps$ (or on their suitable renormalised version $\hat{\bm \alpha}^\eps$) then follows on the very same lines of the proof of Theorem~\ref{thm:max_exist}. 
\end{proof}

\begin{oss}[Counterexample to existence of maximizers] 
Without the assumption on the existence of an admissible $\pi_0$ with trivial kernel, maximizers may fail to exist. 

Consider the following simple example: let $\cH = \R^2$, and choose
\begin{align}
Q_0 = \Id
\quad\quad
Q_1 = \begin{pmatrix}
    1 & 0 \\
    0 & 0
\end{pmatrix} 
\quad\quad 
q_0=1
\quad\quad 
q_1 = 0
    \, .
\end{align}
Let $\pi \in \rmH(\cH)$ be an admissible operator, i.e.
\begin{align}
\pi = \begin{pmatrix}
    \pi_{11} & \pi_{12} \\
    \pi_{12} & \pi_{22}
\end{pmatrix} 
    \quad \,  \text{s.t.} \quad 
\Tr[\pi] = \pi_{11} + \pi_{22} = 1 
    \, , \quad 
\Tr[\pi Q_1] = \pi_{11} = 0
    \, , \tand 
\pi \geq 0
    \, .
\end{align}
The first two constraints ensure that $\pi_{11} =0$ and $\pi_{22} = 1$, and thus for $\pi$ being nonnegative necessarily needs $\pi_{12}=0$. This computation shows that the unique admissible $\pi \ge 0$ is given by 
\begin{align}
    \pi_0 := 
    \begin{pmatrix}
        0 & 0 \\
        0 & 1
    \end{pmatrix} 
\end{align}
which in particular has kernel. The assumptions of Theorem~\ref{thm:max_exist} are thus not satisfied.

We consider the Hamiltionian $H=\begin{pmatrix}
    h_{0} & 0 \\
    0 & h_{1}
\end{pmatrix}$. For $\alpha = (\alpha_1, \alpha_2) \in \R^2$, we have that 
\begin{align}
\hspace{-2mm}
    \dualf(\alpha)
        =
    \alpha_0- \varepsilon\Tr\left[ \psi \left(  \frac{\alpha_0 \Id +\alpha_1 Q_1-H}{\varepsilon} \right)  \right] 
% \\
        = 
    \alpha_0 - \varepsilon \psi \left( \frac{\alpha_0 + \alpha_1 -h_0 }{\varepsilon} \right) -\varepsilon \psi\left( \frac{\alpha_0-h_1}{\varepsilon} \right)
    .
\end{align}
In particular, the functional $\dualf$ does not admit a maximizer for every choice of $\psi$ which admits no global minimizer (e.g. $\psi=\exp$), as in this case the functional $F \colon \R^2 \to \R$ 
\begin{equation*}
    F(x,y) = x - \varepsilon \psi \left( \frac{x -h_1 }{\varepsilon} \right) -\varepsilon \psi\left( \frac{y-h_0}{\varepsilon} \right)  
        \, , \qquad 
    (x,y) \in \R^2 
        \, , 
\end{equation*}
does not.
\end{oss}

We are almost ready to prove our main duality result. With Proposition~\ref{prop:weak-duality} and Theorem~\ref{thm:max_exist} at our disposal, the missing step is the following primal-dual optimizers characterization, typical of any dual argument. 

\begin{theo}[Equivalent characterizations for maximizers]
\label{thm:characterization_maximizers}
   Let $\psi:\R\to\R$ be $C^1$, convex, superlinear, and bounded from below. Let $\varphi:=\psi^*$.
   % LP: Dont need
   % Assume that $\qAdm \cap \tH_>(\cH) \neq \emptyset$. 
Let $\bm\alpha=(\alpha_0,\dots,\alpha_M) \in \R^{M+1}$. Then the following are equivalent:
    \begin{itemize}
        \item[1)] \emph{(Maximizers)} $\alpha$ maximizes $\dualf $, i.e. $\dualf (\bm\alpha)=\dual$.
        % \item[2)] (Maximality condition) $\alpha_i^* = \mathscr{F}_i^{(H,\psi,\eps)}(\hat{\alpha}^*_i)$, for all $0\leq i\leq M$.
        \item[2)] \emph{(Compl. slackness)} The operator 
        $ \displaystyle 
        \pi:=\psi' \left( \frac{\sum_{i=0}^M\alpha_iQ_i - H}{\varepsilon} \right)$ is so that $\pi \in \qAdm$.
        % \item[4)] (Duality \emph{I}) The operator $\Gamma^*$ defined in 3) is such that  $\primal(\Gamma^*)= \emph{D}^\eps (U^*,V^*)$.
        \item[3)] \emph{(Duality)} There exists $\pi \in \qAdm$ such that $\primal(\pi)= \dualf (\bm\alpha)$.
    \end{itemize}
If one (and thus all) condition holds, $\pi$, as defined in 2), is the unique minimizer to \eqref{intro:main} and $\bm\alpha$ a maximizer of the dual. 
\end{theo}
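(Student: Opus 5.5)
I would prove the cycle of implications 1) $\Rightarrow$ 2) $\Rightarrow$ 3) $\Rightarrow$ 1), using the differentiability of $\psi$ for the first step and Proposition~\ref{prop:weak-duality} for the other two.

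\emph{Step 1: 1) $\Rightarrow$ 2).} The dual functional $\dualf$ is concave. It is also differentiable on $\R^{M+1}$, since $\psi\in C^1$ implies that $W\mapsto \Tr[\psi(W)]$ is $C^1$ on $\rmH(\cH)$ with gradient $\psi'(W)$. This is the standard first-order perturbation formula for trace functions, $\frac{d}{dt}\Tr[\psi(W+tK)]\big|_{t=0}=\Tr[\psi'(W)K]$, which I would justify via the spectral decomposition (or polynomial approximation of $\psi$ on a compact interval). By the chain rule,
\begin{align}
\partial_{\alpha_i}\dualf(\bm\alpha)=q_i-\Tr\left[\psi'\left(\tfrac{\sum_{j}\alpha_jQ_j-H}{\eps}\right)Q_i\right],\qquad 0\le i\le M.
\end{align}
At a maximizer all these partial derivatives vanish, so $\pi:=\psi'(\cdot)$ satisfies $\Tr[Q_i\pi]=q_i$ for every $i$. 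To get $\pi\in\rmH_\geq(\cH)$, I would use that $\psi=\varphi^*$ with $\varphi\equiv+\infty$ on $(-\infty,0)$, so $\psi$ is non-decreasing and $\psi'\geq0$. Hence $\pi\ge0$ and $\pi\in\qAdm$.

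\emph{Step 2: 2) $\Rightarrow$ 3).} Since $\psi$ is differentiable, $\pi=\psi'(W)$ with $W=(\sum\alpha_iQ_i-H)/\eps$ is exactly the inclusion \eqref{eq:optimality}. Proposition~\ref{prop:weak-duality} then gives $\primal(\pi)=\dualf(\bm\alpha)$ with $\pi\in\qAdm$.

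\emph{Step 3: 3) $\Rightarrow$ 1).} By weak duality \eqref{eq:weak-duality_inequality}, every $\bm\beta$ satisfies $\dualf(\bm\beta)\le\primal(\pi)=\dualf(\bm\alpha)$. So $\bm\alpha$ is a maximizer, and $\pi$ is a minimizer of the primal problem because $\primal(\pi')\ge\dualf(\bm\alpha)=\primal(\pi)$ for every $\pi'\in\qAdm$.

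\emph{Uniqueness of the minimizer.} Let $\pi'$ be any other primal minimizer. Then $\primal(\pi')=\prim=\dualf(\bm\alpha)$, so the equality case of Proposition~\ref{prop:weak-duality} applies to $\pi'$ and $\bm\alpha$. Thus $\pi'\in\partial\psi(W)$, which under differentiability means $\pi'=\psi'(W)=\pi$. The same argument shows that the minimizer does not depend on which maximizer $\bm\alpha$ is chosen.

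\emph{Main obstacle.} I expect the delicate points to be in Step 1: justifying differentiability of the trace functional under only the $C^1$ hypothesis on $\psi$ (no $C^2$), and checking $\psi'\ge0$ so that $\pi$ is positive semidefinite. For the first, I would approximate $\psi'$ uniformly on a compact interval containing the spectra by polynomials. For polynomials the derivative formula follows from cyclicity of the trace, and it passes to the limit by uniform convergence of the derivatives.
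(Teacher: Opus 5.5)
Your proposal is correct and follows the paper's route. Both use the cycle 1)$\Rightarrow$2)$\Rightarrow$3)$\Rightarrow$1), with the first-order formula $\frac{d}{dt}\Tr[\psi(A+tB)]\big|_{t=0}=\Tr[B\psi'(A)]$ for the first step and the equality case of Proposition~\ref{prop:weak-duality} for the other two.

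Two of your additions fill points the paper leaves implicit, without changing the approach. First, your check $\psi'\geq 0$ is what puts $\pi$ in $\rmH_\geq(\cH)$; the paper's proof of 1)$\Rightarrow$2) skips it. That check needs $\psi=\varphi^*$ with $\varphi\equiv+\infty$ on $(-\infty,0)$, as in the paper's standing assumptions; the theorem's literal hypotheses are not enough, since e.g.\ $\psi(t)=t^2/2$ breaks 1)$\Rightarrow$2). Second, you prove uniqueness of the minimizer through the equality case of weak duality, whereas the paper appeals to strict convexity of $\Tr[\varphi(\cdot)]$ in the proof of Theorem~\ref{thm:main_duality}.
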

\begin{proof}
    We proceed by showing 1)$\Rightarrow$ 2) $\Rightarrow$ 3)$\Rightarrow$ 1).

    $1)\Rightarrow 2):$ First, by functional calculus and first-order perturbation theory, we know that 
    \begin{align}
    \frac{d}{dt}\Tr[\psi(A+ tB)]|_{t=0} = \Tr[B\psi'(A)]
    \end{align}
    for every hermitian operators $A,B \in \rmH(\cH)$\footnote{More in general known as Birman-Solomyak formula, see \cite[Section~V3]{Bhatia1997MatrixAnalysis}.}. By applying this formula to $A = \frac{\sum_{i=0}^n \alpha_i Q_i - H}{\eps}$ and $B=\frac{Q_j}{\eps}$, we find \begin{equation}\label{eq:derivative_dual}
        \frac{\partial \dualf}{\partial \alpha_j}(\bm\alpha) = q_i - \Tr\left[Q_j\psi'\left(\frac{\sum_{i=0}^M \alpha_i Q_i - H}{\eps}\right)\right]
            \, .
    \end{equation}

    If $\bm\alpha$ maximizes $\dualf$, then all $\frac{\partial \dualf}{\partial \alpha_j} (\bm\alpha)= 0$, and 2) follows.

    $2)\Rightarrow 3):$ By Proposition~\ref{prop:weak-duality}, we know that the equality $\primal(\pi)=\dualf(\bm\alpha)$ may happen only when $\pi = \psi'(\frac{\sum_{i=0}^N \alpha_i Q_i - H}{\eps})$.

    $3)\Rightarrow 1):$ Let $\pi \in \qAdm$. Then by again Proposition~\ref{prop:weak-duality}, we conclude that 
    \begin{equation*}
    \begin{aligned}
      \primal(\pi) \geq \prim \geq \dual \geq \dualf(\bm\alpha) 
    \end{aligned}
    \end{equation*}
    holds. Therefore, by assumption, all the previous inequalities are equalities, so $\alpha$ is a maximizer of $\dualf$.
\end{proof}

We are finally ready to prove our main duality theorem.

\begin{proof}[Proof of Theorem~\ref{thm:main_duality}]
On one hand, the assumption $\qAdm \cap \tH_>(\cH) \neq \emptyset$ ensures the existence of a maximizer $\bm\alpha$ by Theorem~\ref{thm:max_exist}. Using $1) \Rightarrow 3)$ in Theorem~\ref{thm:characterization_maximizers}, we infer the existence of a $\pi \in \qAdm$ so that $\primal(\pi) = \dualf(\bm\alpha)$. This, in particular, shows that $\prim \leq \dual$. As the other inequality follows from Proposition~\ref{prop:weak-duality}, we conclude that $\dual = \prim$. Moreover, $\pi$ is the unique minimizer for the primal problem thanks to Theorem~\ref{thm:characterization_maximizers}, where the uniqueness part follows from the strict convexity of $A  \mapsto \Tr[\varphi(A)]$ (which follows from the fact that $\psi$ is of class $C^1$). 
\end{proof}

We finally conclude the section by discussing the class of all maximizers for the dual problem and the existence of a continuous selection in $\eps \in (0,+\infty)$.

\begin{prop}[Maximizers and continuity in $\eps$]
\label{prop:uniqueness_and_continuity}
    Let $\psi\colon\R\to\R$ be a superlinear and {strictly} convex function bounded from below. Assume that $\qAdm \cap \tH_>(\cH) \neq \emptyset$.    
Then:
\begin{itemize}
    \item[1)] If $\{Q_i\}$'s are linearly independent, then there exists a unique maximizer of $\dualf$, denoted by $\bm\alpha^\eps=(\alpha^{\varepsilon}_0,\dots,\alpha^{\varepsilon}_M)\in\mathbb{R}^{M+1}$. If additionally $\psi \in C^1$, then the curve
    \begin{equation}
        \label{eq:continuity_as_a_function_of_eps}
        (0,\infty) \ni \eps \mapsto \bm\alpha^\eps \in \mathbb{R}^{M+1} \text{ is continuous;}
    \end{equation}
    \item[2)] If $\{Q_i\}_i$'s are linearly dependent, consider $j_0 \in \mathbb{N}$, $\{ t_j^i \}_{i,j}\subset \R$, and the setting described in 2), Proposition~\ref{prop:nonmptiness}. Then we have
    \begin{align}
        \argmax \dualf
            =
        \left\{
            \bm\alpha \in \R^{M+1}
                \suchthat 
            \alpha_i + \sum_{j=j_0+1}^M \alpha_j t_j^i 
                = 
            \alpha_i^\eps
                \, , \, 
            \text{for all }
                i \in \{ 0, \dots, j_0 \}
        \right\}
            \, ,
    \end{align}
    where $\bm\alpha^\eps \in \R^{M+1}$ is the unique maximizer of $\dualf$ so that $\alpha_i^\eps =0$ for every $i > j_0$. Moreover, if additionally $\psi \in C^1$, then the curve $\eps \mapsto \bm\alpha^\eps$ is also continuous on $(0,+\infty)$.
\end{itemize}
\end{prop}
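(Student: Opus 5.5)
Throughout, set $W_{\bm\alpha} := \sum_{i=0}^M \alpha_i Q_i - H$, so that $\dualf(\bm\alpha)=\sum_i\alpha_i q_i - \eps\Tr[\psi(W_{\bm\alpha}/\eps)]$. The plan rests on one observation: $\dualf$ depends on $\bm\alpha$ only through the operator $\sum_i\alpha_iQ_i$, and through the linear term $\sum_i\alpha_i q_i$. The consistency condition of Proposition~\ref{prop:nonmptiness} (which holds because $\qAdm\neq\emptyset$) shows that the linear term is also a function of $\sum_i \alpha_i Q_i$. Indeed, pairing with any $\pi_0\in\qAdm$ gives
\[
\sum_i\alpha_i q_i=\Tr\Big[\pi_0\sum_i\alpha_iQ_i\Big].
\]
Hence $\dualf(\bm\alpha)=G\big(\sum_i\alpha_iQ_i\big)$, where $G(X):=\Tr[\pi_0X]-\eps\Tr[\psi((X-H)/\eps)]$ is defined on the finite-dimensional real space $\mathcal V_{\mathbf Q}$.

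\textbf{Uniqueness via strict concavity of $G$.} By \cite[Theorem 2.10]{carlen2010trace}, $X\mapsto \Tr[\psi(X)]$ is convex. I will upgrade this to strict convexity when $\psi$ is strictly convex. Take $X\neq Y$ and $Z=\tfrac12(X+Y)$ with spectral decomposition $Z=\sum_k \mu_k\ket{\zeta_k}\bra{\zeta_k}$. Peierls' inequality together with strict Jensen gives
\[
\Tr[\psi(X)]+\Tr[\psi(Y)] \geq \sum_k\Big(\psi\big(\langle\zeta_k|X|\zeta_k\rangle\big)+\psi\big(\langle\zeta_k|Y|\zeta_k\rangle\big)\Big)\geq 2\Tr[\psi(Z)] .
\]
If equality held throughout, the second inequality would force the diagonals of $X$ and $Y$ in the basis $\{\zeta_k\}$ to coincide, and the equality case of Peierls' inequality (for strictly convex $\psi$) would force $X$ and $Y$ to be diagonal in that basis. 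Together these give $X=Y$. Hence $G$ is strictly concave on $\mathcal V_{\mathbf Q}$, so its maximizer $X^\eps$ is unique. Theorem~\ref{thm:max_exist} guarantees that a maximizer exists. It follows that $\argmax\dualf=\{\bm\alpha:\sum_i\alpha_iQ_i=X^\eps\}$.

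\textbf{Identifying the argmax.} In case 1), the map $\bm\alpha\mapsto\sum_i\alpha_iQ_i$ is injective, so the maximizer $\bm\alpha^\eps$ is unique. In case 2), write $X^\eps=\sum_{i\le j_0}\alpha_i^\eps Q_i$; the coefficients are unique because $Q_0,\dots,Q_{j_0}$ are linearly independent. This defines the selected maximizer, with $\alpha_i^\eps=0$ for $i>j_0$. For a general $\bm\alpha$, substituting \eqref{eq:linear_dependence_of_Qi} gives
\[
\sum_i\alpha_iQ_i=\sum_{i\le j_0}\Big(\alpha_i+\sum_{j>j_0}\alpha_jt^i_j\Big)Q_i .
\]
By linear independence, $\sum_i\alpha_iQ_i=X^\eps$ holds exactly when $\alpha_i+\sum_{j>j_0}\alpha_jt^i_j=\alpha_i^\eps$ for every $i\le j_0$. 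This is precisely the claimed description of $\argmax\dualf$.

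\textbf{Continuity (main obstacle).} Assume further that $\psi\in C^1$. Fix $\eps_*>0$ and a sequence $\eps_n\to\eps_*$ with all $\eps_n$ in a compact interval $[a,b]\subset(0,\infty)$. The plan has three steps.
\begin{itemize}
\item \emph{Uniform lower bound on the values.} Each $\dualf(\bm\alpha^{\eps})$ is at least $\dualf(\mathbf 0)=-\eps\Tr[\psi(-H/\eps)]$, which is bounded below uniformly for $\eps\in[a,b]$.
\item \emph{Boundedness.} Proposition~\ref{prop:alpha_indepentent_eps}, applied with $\eps_0=b$, then shows that the selected vectors $\bm\alpha^{\eps_n}$ are uniformly bounded. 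Its proof works on any bounded range of $\eps$, since the constants $C_1$ and $C_2$ are locally bounded.
\item \emph{Identifying limit points.} Extract a convergent subsequence $\bm\alpha^{\eps_n}\to\bar{\bm\alpha}$; then $\bar\alpha_i=0$ for $i>j_0$. The map $(\eps,\bm\alpha)\mapsto\dualf(\bm\alpha)$ is jointly continuous on $(0,\infty)\times\R^{M+1}$. Passing to the limit in $\dualf(\bm\alpha^{\eps_n})\ge \dualf(\bm\beta)$ for each fixed $\bm\beta$ shows that $\bar{\bm\alpha}$ maximizes ${\rm D}_{\eps_*}$. By the uniqueness of the selection, $\bar{\bm\alpha}=\bm\alpha^{\eps_*}$.
\end{itemize}
Since every subsequence has a further subsequence converging to the same limit $\bm\alpha^{\eps_*}$, the whole sequence converges, which proves continuity. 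The main obstacle is the boundedness step, which relies on verifying that the coercivity estimate is uniform for $\eps$ in compact sets; the rest is routine. Strictly speaking, this argument uses only continuity of $\psi$; the $C^1$ hypothesis is needed only to interpret the maximizers through the formula $\pi^\eps=\psi'(W_{\bm\alpha^\eps}/\eps)$.
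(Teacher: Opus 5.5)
Your proof is correct. For uniqueness and for the description of $\argmax \dualf$ it follows the paper: you rewrite $\dualf$ through $\pi_0$ as a function of $S=\sum_i\alpha_iQ_i$, use strict concavity in $S$, and then do the linear algebra with the $t_j^i$. The one addition is that you actually justify strict convexity of $X\mapsto\Tr[\psi(X)]$, using Peierls' inequality and its equality case (for strictly convex $\psi$, equality forces each $\zeta_k$ to be an eigenvector of both $X$ and $Y$). The paper only asserts this.

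The genuine difference is in the continuity step.
\begin{itemize}
\item The paper passes to the limit in the first-order system $\Tr[Q_j\psi'(W_{\bm\alpha^{\eps_n}}/\eps_n)]=q_j$. It uses continuity of $\psi'$ and the equivalence 1)$\Leftrightarrow$2) of Theorem~\ref{thm:characterization_maximizers} to recognize the limit as a maximizer, and this is where $\psi\in C^1$ enters.
\item You instead pass to the limit in the maximality inequality ${\rm D}_{\eps_n}(\bm\alpha^{\eps_n})\ge {\rm D}_{\eps_n}(\bm\beta)$, using joint continuity of $(\eps,\bm\alpha)\mapsto\dualf(\bm\alpha)$ on $(0,\infty)\times\R^{M+1}$. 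This is a Berge-type stability argument, essentially the mechanism behind the $\Gamma$-convergence proof of Theorem~\ref{thm:max_converged}.
\end{itemize}
Your route buys generality: it shows the $C^1$ hypothesis is superfluous for continuity of the selected maximizers, since only continuity of $\psi$ is needed. The paper's route reuses the Euler--Lagrange characterization it needs anyway for the formula $\pi^\eps=\psi'(\cdot)$, at the price of requiring differentiability.

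Both arguments rely on the same equi-coercivity from Proposition~\ref{prop:alpha_indepentent_eps}. It is worth saying explicitly that on $[a,b]$ with $a>0$ that proposition's nondecreasing hypothesis is not needed, because the quantities $h_\eps$ and ${\rm D}_\eps(\mathbf 0)$ are continuous in $\eps$ there. This matters because the statement does not assume $\psi$ monotone; your remark about local boundedness of the constants covers it, but only implicitly.
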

In other words, whenever the set of $\{Q_j\}_{j=0}^M$ has some linear dependency, we lose the uniqueness of the maximizer, and the set of all optimizers is an affine space of dimension $M - j_0$. Nonetheless, in both the independent and dependent cases, we have a natural, continuous selection $\eps \mapsto \bm\alpha^\eps$ of maximizers.

\begin{proof}
Fix $\varepsilon >0$. Note that we can write the dual functional using the coupling $\pi_0 \in \qAdm \cap \tH_>(\cH)$ as  
\begin{align}
    \dualf(\bm \alpha) 
        =
    \Tr[S \pi_0] 
        -
    \eps 
    \Tr\left[\psi\left(\frac{S - H}{\eps}\right)\right]
        \, , \qquad 
    S := \sum_{i=0}^M \alpha_i Q_i
        \in \tH(\cH) 
            \, .
\end{align}
 Using that $\psi$ is strictly convex, we deduce that the map on the right-hand side is strictly convex in $S$, and therefore
 \begin{align}
\label{eq:sum_equal}
    \sum_{i=0}^M \alpha_i Q_i= \sum_{i=0}^M \beta_i Q_i
        \, , \qquad
    \forall \bm \alpha, \bm \beta \in \argmax \dualf 
        \, .
\end{align} 

We prove 1). Let $\{Q_i\}_i$ be linearly independent.
The invertibility of the map 
\begin{align}
    \R^{M+1} \ni \bm\alpha \mapsto \sum_{i=0}^M \alpha_i Q_i \in \rmH(\cH)
\end{align}
together with \eqref{eq:sum_equal} provides the sought uniqueness.

On the other hand, for every $0<\varepsilon_1<\varepsilon_2<\infty$
\[
    \inf_{\eps_1 \leq \eps \leq \eps_2} 
        \dualf({\bm \alpha}^\eps)\geq \inf_{\eps_1 \leq \eps \leq \eps_2} 
        \dualf({\bm 0}) \in\R.
\] 
and by Prop.~\ref{prop:alpha_indepentent_eps} the family $\left\{ \dualf:\, \varepsilon \in [\varepsilon_1,\varepsilon_2]  \right\}$ is equi-coercive.
Thus, given a sequence $\{ \varepsilon_n \}_n$ with $\varepsilon_n >0$ and $\varepsilon_n \to \varepsilon$, there exists a not relabeled subsequence such that $\bm\alpha^{\varepsilon_n}$ converges to an element $\bm\alpha \in \R^{M+1}$. Since $\bm\alpha^{\varepsilon_n}$ are maximizers, whenever $\psi \in C^1$, by the equivalence 1)--2) in Theorem \ref{intro:main}, they satisfy
\begin{equation}
\label{eq:optimality_condition_for_n}
   \Tr\left[\psi' \left( \frac{\sum_{i=0}^M\alpha^{\varepsilon_n}_iQ_i - H}{\varepsilon_n} \right) Q_j\right] =q_j  \quad\text{for all }j =0,\dots, M.
\end{equation}
Since $\psi \in C^1$, then $A \mapsto \psi'(A)$ is continuous, thus passing to the limit in \eqref{eq:optimality_condition_for_n} we get
\begin{equation}
\label{eq:optimality_condition_for_limit}
    \Tr\left[\psi' \left( \frac{\sum_{i=0}^M\alpha_i Q_i - H}{\varepsilon} \right) Q_j\right] =q_j  \quad\text{for all }j =0,\dots, M  
\end{equation}
and again by the equivalence 1)--2) in Theorem \ref{intro:main}, we conclude that $\bm\alpha=\bm\alpha^\varepsilon$, i.e.\ the unique maximizer of $\dualf$.  This proves that $\{ \bm\alpha^{\eps_n}\}_{n \in \mathbb{N}}$ has a unique accumulation point given by $\bm\alpha^\eps$, hence showing that $\bm\alpha^{\varepsilon_n}$ converges to $\bm\alpha^\varepsilon$.

In order to prove 2). We start by observing that the uniqueness and continuity (for $\psi \in C^1$) in $\eps \in (0,+\infty$) of the maximizer $\bm\alpha^\eps$ so that $\alpha_j^\eps =0$ for every $j>j_0$ readily follows by the arguments used in the proof of 1). Secondly, if $\bm\alpha$ is another maximizer, then  from \eqref{eq:sum_equal} we infer
\begin{align}
\label{eq:some_algebra_on_Qi}
    \sum_{i=0}^{j_0} 
        \alpha_i^\eps Q_i = \sum_{i=0}^M \alpha_i Q_i 
            &=
    \sum_{i=0}^{j_0} 
        \alpha_i Q_i 
            + 
        \sum_{j=j_0+1}^M \alpha_j 
        \Big(
            \sum_{i=0}^{j_0} t_j^i Q_i 
        \Big) 
\\
            &= 
    \sum_{i=0}^{j_0} 
    \left( 
        \alpha_i + \sum_{j=j_0+1}^M t_j^i \alpha_j 
    \right) Q_i
        \, .
\end{align}
By linear independence of $\{Q_i\}_{i=0}^{j_0}$, we have that 
\begin{align}
\label{eq:constraints_ai}
    \alpha_i + \sum_{j=j_0+1}^M \alpha_j t_j^i 
        = 
    \alpha_i^\eps
        \qquad
    \text{for all }i=0,\dots,j_0
        \, .
\end{align}
Conversely, whenever $\alpha$ satisfies \eqref{eq:constraints_ai}, then the very computation performed in \eqref{eq:some_algebra_on_Qi} shows that $\sum_{i=0}^{M} \alpha_i Q_i = \sum_{i=0}^{j_0} \alpha_i^\eps Q_i$, hence $\bm\alpha$ is a maximizer as well. 
\end{proof}

\section{Finite temperature limit ($\varepsilon\to 0^+$)}\label{sec:finitetemplimit}
\label{sec:limit}
In this section, we discuss the behavior of the problems defined in \ref{intro:main} and \ref{intro:maindual} when $\eps\to 0$.
As usual, we fix $M\in\mathbb{N}$, $\cH$ a  finite-dimensional Hilbert space, $H, Q_0,Q_1,\dots,Q_M \in \rmH(\cH)$ to be Hermitian matrices over $\cH$ and $(q_0,q_1,\dots,q_M)\in\mathbb{R}^{M+1}$. 
 We also pick $Q_0 \in \tH_>(\cH)$ a positive definite Hermitian operator and $q_0\in (0,+\infty)$.

We show that as $\varepsilon\to 0$ the regularized functionals for the primal and (minus) dual problem $\Gamma-$converge, respectively, to the following limit problems
\begin{align}
\label{converged:primal}
\primz &:= \inf 
\left\{ 
\primalz(\pi) 
\suchthat 
\pi \in \qAdm
\right\} 
\\
&:=  \inf \left\lbrace \Tr[H\pi]  \suchthat  \pi \in {\rmH}_{\ge}(\cH), \, {\rm Tr}[Q_i\pi] = q_i, \, 0\leq i\leq M \right\rbrace , \quad 
\end{align}
and 
\begin{align}
\label{converged:dual}
\dualz &:= \sup \left\lbrace \dualfz(\bm\alpha) \, : \, \bm\alpha \in \R^{M+1} \right\rbrace \\
&:=\sup \left\lbrace \sum_{i=0}^M\alpha_i q_i - \chi(W) \, : \, W=\sum_{i=0}^M \alpha_iQ_i - H, ~ \bm\alpha \in \R^{M+1}  \right\rbrace,
% \\
% &=\sup \left\lbrace \sum_{i=0}^M\alpha_i q_i \suchthat \sum_{i=0}^M \alpha_i Q_i \leq H,  ~ \alpha_i\in\R, \, 0\leq i\leq M  \right\rbrace,
\end{align}
where $\chi \colon \rmH(\cH) \to \R$ is the indicator defined as
\begin{equation}
    \chi(W) = \begin{cases}
        0, &\text{if }W \leq 0, \\
        +\infty, & \text{otherwise.}
    \end{cases}
\end{equation}
In addition, we also show that respective solutions of regularized problems must converge to the optimizers of the limit problems accordingly. As a final result, we may also obtain strong duality of the limit problems. We want to stress that one could address, as usual, the equality $\primz = \dualz$ without passing a limit through the regularized problems, we just adopt this approach as we have the analysis ready from the previous sections.

\begin{oss}(Weak duality of limit problems)
We consider the case in which $\qAdm \neq \emptyset$, namely there exists some $\pi_0 \in \tH_\geq(\cH)$ such that $\Tr[Q_i\pi_0]=q_i$ for all $0\leq i\leq M$. Recall that we assume $Q_0$ is invertible (which ensures boundedness of any admissible coupling, cfr. Remark \ref{rem:boundedness}).
It is clear that we can restrict the maximization runs of the dual problem and recast it as 
\begin{align}
\label{converged:dual_constraint}
\dualz 
=
\sup \left\{ \sum_{i=0}^M \alpha_i q_i \suchthat \sum_{i=0}^M \alpha_i Q_i \leq H,  ~ \alpha_i\in\R, \, 0\leq i\leq M  \right\}.
\end{align}

Let us show that the admissible set of $\bm\alpha$'s is not empty. Indeed, we define $\alpha_0 = -\delta$, $\alpha_1=\dots=\alpha_M=0$, with $\delta$ to be chosen later. We have that $\sum_{i=0}^M \alpha_i Q_i =-\delta Q_0 \le - \lambda_d(H) {\rm Id} \le -H$, where the second-to-last inequality is satisfied if we choose $\delta \ge \lambda_d(H)\lambda_1(Q_0)^{-1}$, where $\lambda_1(Q_0)$ is the smallest eigenvalue of $Q_0$. 

Then for all admissible $\bm\alpha$ and any $\pi_0\in\qAdm$ we have
\[
    \sum_{i=0}^M \alpha_i q_i = \sum_{i=0}^M \alpha_i \Tr[Q_i \pi_0] = \Tr\left[\left(\sum_{i=0}^M\alpha_i Q_i\right) \pi_0\right] \leq \Tr[H\pi_0]\leq \lambda_d(H) \Tr[\pi_0].
    \] 
where in the first inequality we use that  $\Tr[(H-\sum \alpha_i Q_i) \pi_0]=\Tr[\sqrt{\pi_0}(H-\sum \alpha_i Q_i) \sqrt{\pi_0}]\geq 0$.
This shows $\dualz \leq \primz$ and that the supremum in the dual problem is finite.
\fr
\end{oss}

\subsection{Convergence of the Dual problem}
\label{subsec:convergence_dual_problem}
We start by proving the following general preliminary result.
Recall that a sequence of functions $f_\eps: X \to \R \cup \{ +\infty \}$ defined on a topological space is said to $\Gamma$-converge to a $f:X \to \R \cup \{ +\infty \}$ if 
\begin{enumerate}
    \item For every $x_\eps \to x$ in $X$, we have 
    $   \displaystyle
        \liminf_{\eps \to 0}
            f_\eps(x_\eps) \geq f(x)
    $.
    \item For every $x\in X$, there exists $\bar x_\eps \to x$ such that 
    $    \displaystyle 
        \limsup_{\eps \to 0}
            f_\eps(\bar x_\eps) \leq f(x)
    $.
\end{enumerate}
We call $\bar x_\eps$ a \textit{recovery sequence} for $x \in X$. For more details, we refer to the textbook \cite{DalMaso93}.

\begin{prop}\label{prop:tr_psi_converges}
    Let $\psi:\R\to\R$ be a proper convex superlinear and nondecreasing function bounded from below. Let $\cH$ be a finite-dimensional Hilbert space. Then the map 
    \begin{equation}
    {\rm H}(\cH) \ni W\mapsto \eps \Tr\left[\psi\left(\frac{W}{\eps}\right)\right]
    \end{equation}
    $\Gamma-$converges to the map ${\rm H}(\cH) \ni W\mapsto \chi(W) \in \R \cup \{+\infty\}$ as $\eps\to 0$.
\end{prop}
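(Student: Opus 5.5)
The plan is to reduce everything to eigenvalues and use two scalar facts about $\psi$, valid because $\psi$ is nondecreasing, bounded below by $m=\inf\psi$ and superlinear. First, for $x\le 0$ we have $m\le\psi(x/\eps)\le\psi(0)$, hence $\eps\psi(x/\eps)\in[\eps m,\eps\psi(0)]$, which tends to $0$ uniformly in $x\le 0$. Second, for any fixed $c>0$,
\begin{align}
    \eps\,\psi\Big(\frac{c}{\eps}\Big) = c\,\frac{\psi(c/\eps)}{c/\eps} \longrightarrow +\infty \qquad \text{as } \eps\to 0 ,
\end{align}
by superlinearity. Throughout we write $\eps\Tr[\psi(W/\eps)]=\sum_{k=1}^d \eps\,\psi(\lambda_k(W)/\eps)$ by spectral calculus. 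We also use that the ordered eigenvalues $W\mapsto\lambda_k(W)$ are continuous on $\rmH(\cH)$ (Weyl's inequality $|\lambda_k(A)-\lambda_k(B)|\le\|A-B\|_\infty$).

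\textbf{Liminf inequality.} Let $W_\eps\to W$. Every term in the sum is bounded below by $\eps m$, so
\begin{align}
    \eps\Tr[\psi(W_\eps/\eps)] \ge \eps m (d-1) + \eps\,\psi\big(\lambda_d(W_\eps)/\eps\big).
\end{align}
If $W\le 0$, then $\chi(W)=0$. Using only the lower bound $\eps m d\to 0$, we get $\liminf\ge 0=\chi(W)$.

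If $W\not\le 0$, then $\lambda_d(W)>0$. By continuity of $\lambda_d$ we have $\lambda_d(W_\eps)\ge c:=\lambda_d(W)/2>0$ for $\eps$ small. Monotonicity of $\psi$ then gives $\eps\psi(\lambda_d(W_\eps)/\eps)\ge\eps\psi(c/\eps)\to+\infty$. Hence $\liminf=+\infty=\chi(W)$.

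\textbf{Limsup inequality.} I would take the constant recovery sequence $\bar W_\eps:=W$. If $W\not\le 0$, there is nothing to prove, since $\chi(W)=+\infty$. If $W\le 0$, all eigenvalues are nonpositive, so by the first scalar fact $\eps\Tr[\psi(W/\eps)]\le d\,\eps\,\psi(0)\to 0=\chi(W)$.

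\textbf{Main obstacle.} The only delicate point is the liminf in the case $W\not\le 0$. There one must turn a positive eigenvalue of the limit into a uniform positive lower bound on the largest eigenvalue along the sequence, which is where eigenvalue continuity is needed. One must also use monotonicity of $\psi$ to compare $\psi(\lambda_d(W_\eps)/\eps)$ with $\psi(c/\eps)$; without monotonicity, superlinearity alone would still suffice once $\lambda_d(W_\eps)/\eps\to+\infty$, but monotonicity makes the comparison immediate. The remaining eigenvalues are harmless, because their contributions are bounded below by $\eps m$, which vanishes.
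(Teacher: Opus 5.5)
Your proposal is correct and follows essentially the same argument as the paper. Both use the lower bound $\eps m(d-1)+\eps\psi(\lambda_d(W_\eps)/\eps)$ together with continuity of $\lambda_d$, monotonicity and superlinearity for the $\Gamma$-liminf. Both take the constant recovery sequence with the bound $\eps d\,\psi(0)\to 0$ for the $\Gamma$-limsup.
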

\begin{proof}
    1. ($\Gamma-\liminf$). \ 
    Let $W^{\ep}\to W$ as $\ep \to 0$. We may consider two cases. First, suppose that $-W\geq 0$, then $\chi(W) = 0$. Also, recall that $\psi$ is bounded from below by some constant $m \in \R$, and thus
\[
    \liminf_{\ep\to 0} \ep \Tr\left[\psi\left(\frac{W^{\ep}}{\ep}\right)\right] \geq \liminf_{\ep\to 0} \ep \Tr[m\Id]=\liminf_{\ep\to 0} \ep m d = 0 = \chi(W),
\]
where $d=\dim(\cH)$.
Now, suppose that $-W$ is not $\geq 0$, then there must exist a unit vector $\ket{\xi}$ in $\cH$ such that $\bra{\xi}W\ket{\xi} > 2\delta >0$, but also it is known that $\bra{\xi}W\ket{\xi} \leq \lambda_d(W)$,  where $\lambda_d(W)$ is the largest eigenvalue. At the same time, since $W^{\ep}\to W$, we have $\lambda_d(W^{\ep})\to \lambda_d(W)$. Thus there exists $\ep_0>0$ such that $\lambda_d(W^{\ep})> \delta>0$ for all $0< \ep<\ep_0$.
Now, see that 
\[
\ep \Tr\left[\psi\left(\frac{W^{\ep}}{\ep}\right)\right] = \ep \sum_{i=1}^d \psi\left(\frac{\lambda_i(W^{\ep})}{\ep}\right)  \geq \ep \psi\left(\frac{\lambda_d(W^{\ep})}{\ep}\right) +  \ep (d-1) m,
\]
so it remains to conclude that $\ep \psi(\frac{\lambda_d(W^{\ep})}{\ep})\to +\infty$. Indeed, for $0<\ep<\ep_0$ we have may now use the superlinearity  and monotonicity of $\psi$
\[
    \ep \psi\left(\frac{\lambda_d(W^{\ep})}{\ep}\right) \geq \ep \psi\left(\frac{\delta}{\ep}\right) 
    % = \delta \frac{\psi(\frac{\hat{\delta}}{\ep}) }{\
    % \frac{\hat{\delta}}{\ep}}
    \xrightarrow[\ep\to 0]{} +\infty,
\]
and thus we conclude that $\liminf\limits_{\ep\to 0}\ep \Tr[\psi(\frac{W^{\ep}}{\ep})] =  +\infty = \chi(W)$.

2. ($\Gamma-\limsup$). \  We claim that the constant sequence $W^{\ep}:= W$ for all $\ep>0$ is a recovery sequence. Indeed, if $-W$ is not $\geq 0$, then $\chi(W)=+\infty$, and we don't need to show anything. Conversely, if $-W\geq 0$, then $\lambda_i(W)\leq 0$ for all $1\leq i \leq d$, and using monotonicity of $\psi$
\begin{equation*}
\begin{aligned}
\limsup_{\ep\to 0} \ep \Tr\left[\psi\left(\frac{W}{\ep}\right)\right] &= \limsup_{\ep\to 0} \ep \sum_{i=1}^d \psi\left(\frac{\lambda_i(W)}{\ep}\right) \leq \lim_{\ep \to 0} \ep \sum_{i=1}^d \psi(0)\\
&= \lim_{\ep\to 0}\ep \psi(0) d = 0 = \chi(W).
\end{aligned}
\end{equation*}
This concludes the proof.
\end{proof}

\begin{theo}[Convergence of the dual problems and maximizers]\label{thm:max_converged}
Let $\psi\colon\R\to\R$ be a superlinear, convex and nondecreasing function bounded from below. Assume
\[\qAdm \cap \tH_>(\cH) \neq \emptyset.\] 
%Assume that there exists a density matrix $\pi_0$ on $\cH$ such that 
%\[
%   \inf_{||\xi||=1}{\bra\xi \pi_0\ket\xi}=\omega_0>0\quad\quad \text{and} \quad\quad q_i = \Tr[Q_i \pi_0], ~i=0,...,N.
%\] 
Then $-\dualf$ defined in \eqref{intro:maindual} $\Gamma-$converges to $-\dualfz$ defined in \eqref{converged:dual} as $\eps\to 0$. Additionally, the sequence of maximizers $\bm\alpha^\eps\in\R^{N+1}$ of $\dualf$ defined in 2) of Proposition~\ref{prop:uniqueness_and_continuity}  converges (up to subsequence) to a maximizer   $\bm\alpha\in\R^{N+1}$ of $\dualfz$.
\end{theo}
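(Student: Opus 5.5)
The plan is to reduce the $\Gamma$-convergence of $-\dualf$ to Proposition~\ref{prop:tr_psi_converges}, and then to obtain convergence of maximizers by combining the equi-coercivity of Proposition~\ref{prop:alpha_indepentent_eps} with the fundamental theorem of $\Gamma$-convergence.

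\textbf{Step 1: $\Gamma$-convergence.} Write $-\dualf(\bm\alpha) = -\sum_i \alpha_i q_i + \eps\Tr[\psi(W_{\bm\alpha}/\eps)]$ with $W_{\bm\alpha}=\sum_i\alpha_iQ_i - H$. The linear part is continuous and independent of $\eps$, so it passes through both $\Gamma$-inequalities.

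For the $\liminf$ inequality, take $\bm\alpha^\eps\to\bm\alpha$. Then $W_{\bm\alpha^\eps}\to W_{\bm\alpha}$ in $\tH(\cH)$, and the $\Gamma$-$\liminf$ part of Proposition~\ref{prop:tr_psi_converges} gives
\[
\liminf_{\eps\to0}\eps\Tr[\psi(W_{\bm\alpha^\eps}/\eps)]\geq \chi(W_{\bm\alpha}).
\]
For the $\limsup$ inequality, take the constant recovery sequence $\bm\alpha^\eps=\bm\alpha$. Its image $W_{\bm\alpha}$ is exactly the constant recovery sequence used in Proposition~\ref{prop:tr_psi_converges}. The limit functional is therefore $-\sum_i\alpha_iq_i+\chi(W_{\bm\alpha}) = -\dualfz(\bm\alpha)$.

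\textbf{Step 2: compactness of the selected maximizers.} Let $\bm\alpha^\eps$ be the maximizers selected in Proposition~\ref{prop:uniqueness_and_continuity}, so $\alpha^\eps_i=0$ for $i>j_0$. First I need a uniform lower bound $\dualf(\bm\alpha^\eps)\geq \dualf(\bar{\bm\alpha})\geq A$ for some fixed $\bar{\bm\alpha}$. The natural test point is $\bar{\bm\alpha}=(-\delta,0,\dots,0)$ with $\delta$ large, as in the weak duality remark. Then $W_{\bar{\bm\alpha}}\leq0$, and monotonicity of $\psi$ gives
\[
\eps\Tr[\psi(W_{\bar{\bm\alpha}}/\eps)]\leq \eps d\,\psi(0)\leq d\max(\psi(0),0)
\]
for $\eps<1$. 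Hence $\dualf(\bm\alpha^\eps)\geq -\delta q_0 - d\max(\psi(0),0)=:A$.

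Proposition~\ref{prop:alpha_indepentent_eps} with $\eps_0=1$ then gives a uniform bound on $\sum_i\alpha^\eps_iQ_i$. Since $\{Q_i\}_{i\le j_0}$ are linearly independent and $\alpha^\eps_i=0$ for $i>j_0$, the vectors $\bm\alpha^\eps$ themselves are bounded. So a subsequence converges, $\bm\alpha^{\eps_n}\to\bm\alpha$.

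\textbf{Step 3: the limit is a maximizer.} This is the standard argument. For any $\bm\beta$, take the recovery sequence $\bm\beta$ itself and use maximality of $\bm\alpha^{\eps_n}$:
\[
\dualfz(\bm\beta)\le\liminf_n \dualf[\eps_n](\bm\beta)\le \liminf_n\dualf[\eps_n](\bm\alpha^{\eps_n}).
\]
The $\Gamma$-$\liminf$ inequality gives
\[
\limsup_n\dualf[\eps_n](\bm\alpha^{\eps_n})\le\dualfz(\bm\alpha).
\]
Hence $\dualfz(\bm\alpha)\ge\dualfz(\bm\beta)$ for every $\bm\beta$, so $\bm\alpha$ is a maximizer. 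As a by-product, $\dualf[\eps_n](\bm\alpha^{\eps_n})\to\dualz$, which will be useful later for strong duality.

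The main obstacle is Step 2. One has to check that the lower bound holds uniformly in $\eps$, which is where monotonicity of $\psi$ is used. One also has to check that the equi-coercivity of Proposition~\ref{prop:alpha_indepentent_eps} applies to the selected representatives; this is guaranteed precisely by the normalization $\alpha^\eps_i=0$ for $i>j_0$. Without that normalization, boundedness only holds modulo the kernel of $\bm\alpha\mapsto\sum_i\alpha_iQ_i$.
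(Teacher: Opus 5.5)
Your proposal is correct and follows essentially the same route as the paper: $\Gamma$-convergence read off from Proposition~\ref{prop:tr_psi_converges}, a uniform lower bound on $\dualf(\bm\alpha^\eps)$ via the fixed test point $(-\delta,0,\dots,0)$, equi-coercivity from Proposition~\ref{prop:alpha_indepentent_eps} combined with the normalization $\alpha^\eps_i=0$ for $i>j_0$, and the fundamental theorem of $\Gamma$-convergence, which you spell out explicitly. Your bookkeeping is slightly cleaner than the paper's: taking $\delta$ large in terms of $\lambda_1(Q_0)$, keeping the factor $q_0$, and using $\max(\psi(0),0)$ removes the tacit assumptions (roughly $Q_0\geq\Id$, $q_0=1$, $\psi(0)\geq 0$) behind the paper's constant $A=-\|H\|_\infty-\delta d\psi(0)$.
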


\begin{proof}
    The $\Gamma-$convergence argument follows immediately from Proposition \ref{prop:tr_psi_converges} with the fact that if $\bm\alpha^\eps\to\bm\alpha$ in $\R^{M+1}$, then $W^\eps = \sum_{i=0}^M\alpha^\eps_i Q_i$ converges to $W = \sum_{i=0}^M\alpha_i Q_i$ in $\rmH(\cH)$ as $\ep \to 0$, as well as $\sum_{i=0}^M\alpha^\eps_i q_i$ to $\sum_{i=0}^M\alpha_i q_i$.

    Now, due to Theorem \ref{thm:max_exist}, for every $\eps>0$ there exists a maximizer $\bm\alpha^\eps$ of $\dualf$. Moreover, we can consider only $\eps\in(0,\delta]$ for some $0<\delta<1$. In particular, by choosing $\hat{\bm\alpha}=(-||H||_\infty, 0,...,0) \in\R^{M+1}$, we find that $\sum_{i=0}^M\hat\alpha_i Q_i - H \leq 0$, which together with the fact that $\Tr[\psi(\cdot)]$ is nondecreasing it yields
    \begin{align*}
        \dualf(\alpha^\eps) 
        \geq & \sum_{i=0}^M\hat\alpha_iq_i - \eps\Tr\left[\psi\left(\frac{\sum\hat\alpha_i Q_i - H}{\eps}\right)\right]\\
        \geq& -||H||_\infty - \eps\Tr[\psi(0)\Id] \geq -||H||_\infty - \delta d \psi(0) =: A.
    \end{align*}
    We can now apply Corollary \ref{prop:alpha_indepentent_eps} to conclude that $\{\bm\alpha^\eps \}_\eps$ is uniformly bounded in $\R^{M+1}$, and thus admits an accumulation point $\bm\alpha$. By the fundamental theorem of $\Gamma$-convergence, any limit point of a sequence of maximizers must necessarily be a maximizer of the limiting problem, and the values of the maxima converge.
\end{proof}

\subsection{Convergence of the primal problem and duality}
In this subsection, we show the convergence of the minimizers of the regularized problem \eqref{intro:main} as $\eps\to 0$. Consequently, by duality and also convergence of the dual problem, we conclude the statement of duality between \eqref{converged:primal} and \eqref{converged:dual}.

\begin{theo}[Convergence of the minimizers of $\primal$]\label{thm:min_congerged}
Let $\varphi:[0,+\infty) \to \R$ be a convex function satisfying \eqref{eq:assumptions_varphi_intro}.  
Suppose that the set
    \[ \qAdm=
    \{\pi\in\rmH(\cH) \suchthat \pi\geq 0 \text{ and } \Tr[Q_i \pi]=q_i,   0\leq i\leq M\} \neq \emptyset.
    \]
Then for any collection $\{\pi^\eps\}_{\eps>0} \subset {\rm Adm}(\mathbf{Q}, \mathbf{q})$ converging to some $\pi \in {\rm Adm}(\mathbf{Q}, \mathbf{q})$  holds 
\[
    \lim_{\eps\to 0^+}\primal(\pi^\eps) = \primalz(\pi),
\]
where $\primalz(\pi)$ is defined in \eqref{converged:primal}.

Moreover, if a collection $\{\pi^\eps\}_{\eps>0} \subset {\rm Adm}(\mathbf{Q}, \mathbf{q})$ consists of minimizers of $\primal$ for every $\eps>0$ respectively, then it admits an accumulation point $\pi^* \in \qAdm$ which minimizes \eqref{converged:primal}.
\end{theo}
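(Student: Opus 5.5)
The plan rests on a uniform bound for $\Tr[\varphi(\cdot)]$ on $\qAdm$, which is compact. By Remark~\ref{rem:boundedness}, every $\pi\in\qAdm$ satisfies $\Tr[\pi]\le t_0$, so its eigenvalues lie in $[0,t_0]$. The set $\qAdm$ is closed, being an intersection of the closed cone $\rmH_\geq(\cH)$ with affine hyperplanes, and it is bounded, hence compact. Since $\varphi$ is convex and finite on $[0,+\infty)$ with $\varphi(0)\in\R$, it is bounded on $[0,t_0]$. Convexity gives $\varphi(x)\le \max\{\varphi(0),\varphi(t_0)\}$ there, and $\varphi\ge l$. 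Therefore
\begin{align}
    d\,l \le \Tr[\varphi(\pi)] = \sum_{i=1}^d \varphi(\lambda_i(\pi)) \le d\max\{\varphi(0),\varphi(t_0)\} =: C_\varphi
        \, , \qquad \forall \pi \in \qAdm .
\end{align}
This uniform two-sided bound is the key estimate.

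For the first claim, take $\pi^\eps\to\pi$ in $\qAdm$. Then $|\primal(\pi^\eps)-\primalz(\pi)| \le |\Tr[H(\pi^\eps-\pi)]| + \eps\max\{|l|d, |C_\varphi|\}$. The first term vanishes by continuity of the trace pairing in finite dimensions, and the second is $O(\eps)$. Hence $\primal(\pi^\eps)\to\primalz(\pi)$. The same estimate in fact shows uniform convergence $\sup_{\qAdm}|\primal-\primalz|\le C\eps$, and this is what the second part uses.

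For the second claim, let $\pi^\eps$ be minimizers of $\primal$ on $\qAdm$. By compactness of $\qAdm$ there is a subsequence $\eps_n\to 0$ with $\pi^{\eps_n}\to\pi^*\in\qAdm$. For any competitor $\sigma\in\qAdm$, minimality and the uniform bound give
\begin{align}
    \primalz(\pi^{\eps_n}) - C\eps_n \le \primal[\eps_n](\pi^{\eps_n}) \le \primal[\eps_n](\sigma) \le \primalz(\sigma) + C\eps_n .
\end{align}
Since $\primalz$ is linear and hence continuous, letting $n\to\infty$ yields $\primalz(\pi^*)\le\primalz(\sigma)$. So $\pi^*$ minimizes \eqref{converged:primal}, and the minimum values converge as well.

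The main obstacle, though a mild one, is ensuring that $\Tr[\varphi(\pi)]$ is bounded above uniformly on $\qAdm$, so that the $\eps$-term vanishes uniformly and not merely pointwise. This needs $\varphi$ to be finite at $0$, which is part of the standing assumptions, and the eigenvalue bound coming from $Q_0>0$. If $\varphi(0)=+\infty$ were allowed, one would instead need a recovery-sequence argument, mixing with an interior point for the $\Gamma$-$\limsup$. That case is excluded here.
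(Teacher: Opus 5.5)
Your proposal is correct and follows essentially the same route as the paper. The bound $\Tr[\pi]\le t_0$ on $\qAdm$ gives a uniform bound on $\Tr[\varphi(\pi)]$, so the $\eps$-term vanishes; compactness of $\qAdm$ plus comparison with an arbitrary competitor then shows that any limit point of the minimizers solves the zero-temperature problem. Your justification of that bound via $l\le\varphi\le\max\{\varphi(0),\varphi(t_0)\}$ on $[0,t_0]$ is slightly more careful than the paper's appeal to continuity of $\varphi$ on compact sets, which can fail since a convex function on $[0,+\infty)$ may jump at $0$.
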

\begin{proof}
    For the first part of the statement, since $\pi^\eps\to \pi$ we directly conclude that $\Tr[H\pi^\eps]\to \Tr[H\pi]=\rm \primalz(\pi)$. 
    On the other hand, by Remark~\ref{rem:boundedness} we know that $\lambda_d(\pi) \leq t_0:=\frac{q_0}{\lambda_1(Q_0)} < \infty$ for every $\pi \in \qAdm$. As $\varphi$ is convex on $[0,+\infty)$, it is in particular continuous on every compact set, such as $[0, t_0]$. Therefore $\sup_\eps |\Tr[\varphi(\pi^\eps)]| <\infty$, and thus $\eps\Tr[\varphi(\pi^\eps)]\to 0$.

    Now, suppose that $\{\pi^\eps\}_{\eps>0}$ is a collection of minimizers for $\primal$. Then it is bounded as its elements satisfy the constraints of the primal problem (cf.\ Remark \ref{rem:boundedness}). Let $\pi^*$ be its accumulation point up to a taking a subsequence $\{\eps_k\}_{k\geq 1}$.
    For every $\pi \in \qAdm$, we have 
    % \begin{align*}  
    $
        \primalz(\pi)
            = 
        \lim_{k \to \infty} 
            \primalz_{\eps_k}(\pi) 
            \geq 
        \lim_{k \to \infty} 
            \primalz_{\eps_k}(\pi^{\eps_k})
            =
        \primalz(\pi^*)
            \, .
    % \end{align*}
$
    Therefore, $\pi^*$ must be a minimizer.
\end{proof}

As a corollary, we obtain a possible proof of the strong duality for \eqref{converged:primal} and \eqref{converged:dual}.

\begin{theo}[Duality]
\label{thm:duality_zero}
    We assume that $\qAdm \cap \rmH_{>}(\cH)\neq \emptyset$. Then there exist both a maximizer to \eqref{converged:dual} and a minimizer to \eqref{converged:primal}, and  $\primz=\dualz$.
\end{theo}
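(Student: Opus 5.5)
The plan is to obtain zero-temperature duality by passing to the limit $\eps \to 0$ in the positive-temperature duality of Theorem~\ref{thm:main_duality}, using the convergence results of Theorem~\ref{thm:max_converged} and Theorem~\ref{thm:min_congerged}. I would fix a reference $\varphi$ for which all these results apply simultaneously. A natural choice is $\varphi(t) = t\ln t - t$ (so $\psi = \exp$). It is convex, superlinear and bounded below, and $\psi$ is $C^1$, strictly convex, nondecreasing and bounded below. Since the zero-temperature problems $\primz$ and $\dualz$ do not depend on $\varphi$, the choice is harmless.

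\textbf{Existence of optimizers.} A minimizer of $\primz$ exists directly. The set $\qAdm$ is nonempty, closed and bounded (Remark~\ref{rem:boundedness}), hence compact, and $\pi \mapsto \Tr[H\pi]$ is continuous. Alternatively, it follows from the second part of Theorem~\ref{thm:min_congerged} applied to the minimizers $\pi^\eps$ given by Theorem~\ref{thm:main_duality}. A maximizer of $\dualz$ is provided by Theorem~\ref{thm:max_converged}. Under the Slater condition $\qAdm\cap\rmH_>(\cH)\neq\emptyset$, the selected maximizers $\bm\alpha^\eps$ are uniformly bounded by Proposition~\ref{prop:alpha_indepentent_eps}, and any accumulation point $\bm\alpha$ maximizes $\dualfz$. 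In particular $\dualz = \dualfz(\bm\alpha)$ is finite; the Remark on weak duality already gives $\dualz \le \primz < \infty$.

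\textbf{Equality of values.} Choose a subsequence $\eps_k\to0$ along which $\bm\alpha^{\eps_k}\to\bm\alpha$ and $\pi^{\eps_k}\to\pi^*$. By Theorem~\ref{thm:main_duality}, $\primal(\pi^{\eps_k}) = \dualf(\bm\alpha^{\eps_k})$ for each $k$. On the primal side, Theorem~\ref{thm:min_congerged} gives $\primal(\pi^{\eps_k}) \to \primalz(\pi^*) = \primz$. On the dual side I need $\dualf(\bm\alpha^{\eps_k}) \to \dualfz(\bm\alpha)$, that is, convergence of maximal values. This is the standard consequence of $\Gamma$-convergence together with equicoercivity, which the proof of Theorem~\ref{thm:max_converged} invokes. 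Combining the two limits yields $\primz = \dualz$.

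\textbf{Main obstacle.} The delicate step is the convergence of the dual maximal values, since that is where equicoercivity is needed. To make it explicit, I would argue as follows. The $\Gamma$-$\liminf$ inequality gives $\limsup_k \dualf(\bm\alpha^{\eps_k}) \le \dualfz(\bm\alpha)$. For the reverse inequality, fix any admissible $\bm\beta$, i.e.\ one with $\sum_i \beta_i Q_i \le H$. By maximality, $\dualf(\bm\alpha^{\eps}) \ge \dualf(\bm\beta)$. Since $\psi$ is nondecreasing,
\[
\dualf(\bm\beta) \ge \sum_i \beta_i q_i - \eps d\,\psi(0) \to \dualfz(\bm\beta).
\]
Taking the supremum over admissible $\bm\beta$ gives $\liminf_k \dualf(\bm\alpha^{\eps_k}) \ge \dualz$, which closes the argument.
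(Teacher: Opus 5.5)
Your proposal is correct and follows essentially the same route as the paper. The paper's proof also fixes $\psi=\exp$, uses the positive-temperature equality $\primal(\pi^\eps)=\dualf(\bm\alpha^\eps)$, extracts convergent subsequences via Theorems~\ref{thm:max_converged} and~\ref{thm:min_congerged}, and passes to the limit in the optimal values. Your explicit $\liminf/\limsup$ argument for the dual values, and the direct compactness argument for the primal minimizer, simply spell out steps the paper leaves to the fundamental theorem of $\Gamma$-convergence.
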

\begin{proof}
    We show the result by using the duality and the convergence of regularized problems.
    Let $\varphi:[0, +\infty)\to \R\cup \{+\infty\}$ be a function satisfying \eqref{eq:assumptions_varphi_intro}  such that its Legendre transform $\psi$ is $C^1$, e.g. choose $\varphi(x) = x (\log x -1)$, thus $\psi(y) = \exp(y)$. Then, for every $\eps>0$ define the primal and dual problems as in \eqref{intro:main} and \eqref{intro:maindual}. Thus, by Theorem \ref{thm:characterization_maximizers} there exist optimal solutions $\pi^\eps \in \tH(\cH)$ and $\bm\alpha^\eps \in \R^{M+1}$ to these problems respectively and duality holds for every $\eps>0$. Moreover, each $\pi^\eps$ must be of form $\pi^\eps = \psi'(\frac{\sum_{i=0}^M \alpha^\eps_i Q_i - H}{\eps})$.
    
    Now, by Theorem \ref{thm:max_converged} and  Theorem \ref{thm:min_congerged}, we know there exist subsequences $\{\bm\alpha^{\eps_k}, \pi^{\eps_k}\}_{k\geq 1}$ so that 
   $\bm\alpha^{\eps_k} \to \bm\alpha^*$,  $\pi^{\eps_k} \to \pi^*$, where $\bm\alpha^*\in\R^{M+1}$  maximizes the limit dual problem \eqref{converged:dual}, whereas $\pi^*$ is a minimizer to \eqref{converged:primal}.
    Consequently, since each ${\rm F}_{\eps_{k_j}}(\pi^{\eps_{k_j}}) = {\rm D}_{\eps_{k_j}}(\bm\alpha^{\eps_{k_j}})$ by duality, and respectively converge to $\primz$ and $\dualz$, we conclude the sought equality.
\end{proof}

As in the regularized problem, we can describe the optimality conditions and relations between maximizers and minimizers as described in the following theorem. 

\begin{theo}[Complementary slackness]
\label{th:sdp_comp_slackness}
    Let $\bm\alpha$ satisfy $\sum_{i=0}^M \alpha_i Q_i \le H$ and let $\pi \in \qAdm$. Then the following conditions are equivalent:
    \begin{itemize}
        \item[(i)] The complementary slackness condition holds, namely
        \begin{equation}
    \label{eq:sdp_comp_slackness}
        \sqrt \pi \left( H - \sum_{i = 0}^M \alpha_i Q_i \right) \sqrt \pi= 0
            \,;
    \end{equation}
        \item[(ii)] We have $\primalz(\pi) = \dualfz(\bm\alpha)$.
    \end{itemize}
    If any of the above conditions hold true, then $\bm\alpha$ is a maximizer of $\dualfz$, $\pi$ is a minimizer of $\primalz$, and strong duality $\dualz = \primz$ holds.
\end{theo}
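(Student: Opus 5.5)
The plan is to reduce everything to a single trace identity. Set $K := H - \sum_{i=0}^M \alpha_i Q_i$. By assumption $K \geq 0$, and $K$ is Hermitian. Since $\pi \in \qAdm$, the constraints $\Tr[Q_i\pi]=q_i$ give
\begin{align}
    \primalz(\pi) - \dualfz(\bm\alpha)
        = \Tr[H\pi] - \sum_{i=0}^M \alpha_i \Tr[Q_i \pi]
        = \Tr[K\pi]
        = \Tr\big[\sqrt\pi\, K \sqrt\pi\big] ,
\end{align}
where we used $\chi(-K)=0$, because $-K = \sum_i \alpha_i Q_i - H \leq 0$, and the cyclicity of the trace. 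This is the same computation as in the weak duality remark for the limit problems, and it immediately gives $\primalz(\pi) \geq \dualfz(\bm\alpha)$, since $\sqrt\pi K \sqrt\pi \geq 0$.

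For the implication (i)$\Rightarrow$(ii): if $\sqrt\pi K\sqrt\pi = 0$, then its trace vanishes, so the identity above yields $\primalz(\pi)=\dualfz(\bm\alpha)$.

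For (ii)$\Rightarrow$(i): the operator $X:=\sqrt\pi K\sqrt\pi$ is positive semidefinite, being a congruence of $K\geq 0$, since $\langle \xi, X\xi\rangle = \langle \sqrt\pi\xi, K\sqrt\pi\xi\rangle \geq 0$. Its trace is zero by (ii). A positive semidefinite matrix with zero trace has all eigenvalues nonnegative and summing to zero, hence all equal to zero, so $X=0$. This step is the only place where anything beyond bookkeeping is needed. The key point is to use the symmetric form $\sqrt\pi K\sqrt\pi$ rather than $K\pi$, which need not be Hermitian. An equivalent route is to write $X = (K^{1/2}\sqrt\pi)^*(K^{1/2}\sqrt\pi)$, so that $\Tr X = \|K^{1/2}\sqrt\pi\|_{HS}^2$.

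For the final assertions, I would combine the above with weak duality: for every admissible $\bm\beta$ and every $\tilde\pi\in\qAdm$, we have $\dualfz(\bm\beta)\leq \primalz(\tilde\pi)$, by the same identity applied to $(\tilde\pi,\bm\beta)$. For non-admissible $\bm\beta$, $\dualfz(\bm\beta)=-\infty$ and the inequality is trivial. Under (ii) this gives the chain
\[
\primalz(\pi)=\dualfz(\bm\alpha)\leq \dualz\leq \primz\leq \primalz(\pi),
\]
so all the inequalities are equalities. Therefore $\bm\alpha$ is a maximizer, $\pi$ is a minimizer, and $\dualz=\primz$. No Slater-type assumption is needed here, because the existence of the optimal pair is part of the hypothesis.
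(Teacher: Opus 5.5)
Your proposal is correct and follows essentially the same route as the paper. Both arguments rest on the trace identity $\Tr[(H-\sum_i\alpha_iQ_i)\pi]=\Tr[\sqrt\pi(H-\sum_i\alpha_iQ_i)\sqrt\pi]$ and on the fact that a positive semidefinite operator with zero trace vanishes, which gives (ii)$\Rightarrow$(i), and both use weak duality to conclude optimality and $\dualz=\primz$; your explicit chain of inequalities simply spells out what the paper leaves as ``a similar argument.''
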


\begin{proof}
    (i) $\Rightarrow$ (ii). \ 
    Note that the complementary slackness condition implies
    \begin{equation*}
        \operatorname{Tr}\left[\left( H - \sum_{i = 0}^M \alpha_i Q_i \right)\pi\right] = \operatorname{Tr}\left[\sqrt\pi \left( H - \sum_{i = 0}^M \alpha_i Q_i \right)\sqrt \pi\right] = 0,
    \end{equation*}
    which implies, using that $\pi \in \qAdm$
    \begin{equation}
    \label{eq:sdp_comp_slack_proof_main_eq_new}
        \primalz(\pi) = \operatorname{Tr}[H \pi]
        = \sum_{i = 0}^M \alpha_i \operatorname{Tr}[Q_i \pi]
        = \sum_{i = 0}^M \alpha_i  q_i=\dualfz(\bm\alpha).
    \end{equation}

    \smallskip
    \noindent 
    (ii) $\Rightarrow$ (i). \ 
    Assume that $\primalz(\pi) = \dualfz(\bm\alpha)$ for admissible $\pi$ and $\bm\alpha$. 
    Then it similarly follows that
    \begin{equation}
    \label{eq:zero_trace_condition}
        \operatorname{Tr}\left[\left( H - \sum_{i = 0}^M \alpha_i Q_i \right)\pi\right] = \operatorname{Tr}\left[\sqrt\pi \left( H - \sum_{i = 0}^M \alpha_i Q_i \right)\sqrt \pi\right] = 0.
    \end{equation}
    Since $\alpha$ satisfies $H-\sum_{i=0}^M \alpha_i Q_i \ge 0$, we have that $\sqrt \pi \left( H - \sum_{i = 0}^M \alpha_i Q_i \right)\sqrt\pi \ge 0$, which, together with Eq.\eqref{eq:zero_trace_condition} implies Eq.\eqref{eq:sdp_comp_slackness}.
    
    Recall that for any $\pi' \in \qAdm$ and $\bm\alpha'$ such that $\sum_{i=0}^M\alpha_i' Q_i\le H$ we have $\operatorname{Tr}[H \pi']
        \geq \sum_{i=0}^M \alpha'_i q_i $. Therefore, Eq.~\eqref{eq:sdp_comp_slack_proof_main_eq_new} implies that $\pi$ is a minimizer of~\eqref{converged:primal}, because
    \begin{equation*}
        \min \left\{  \operatorname{Tr}[H \pi']:\, \pi' \in \qAdm \right\}
        \geq \sum_{i = 0}^M \alpha_i  q_i
        = \operatorname{Tr}[H \pi],
    \end{equation*}
    and by a similar argument we conclude that $\alpha$ is a maximizer of~\eqref{converged:dual}.
\end{proof}

\section{Computational algorithms}
\label{sec:numerics}
In this section, we introduce a novel computational algorithm to solve the dual problem~\eqref{intro:maindual}. We demonstrate its performance on two quantum information tasks that can be formulated as semidefinite programs:
(i) Quantum State Tomography, and (ii) Quantum Optimal Transport. For each task, we examine two choices of convex regularization: the entropy term $\varphi(z) = z \log z$ (corresponding to the von-Neumann entropy), and the quadratic penalty $\varphi(z) = \tfrac{1}{2} z^{2}$ (corresponding to the quadratic or Quantum $\chi^2$-divergence).

To compute the optimal dual variables, we employed the L-BFGS solver~\cite{LiuNoc-MathPr-1989}, using the implementation provided in the Optax library~\cite{optax} with default optimizer parameters. In our experiments, the optimization procedure terminated if either of the following held:
\begin{itemize}
    \item when the accuracy criterion is met, namely  the norm of the gradient of the regularized dual functional ~\eqref{intro:maindual}
    \begin{equation}
    \label{eq:dual_func_grad}
    \left\| \nabla \dualf (\bm \alpha) \right\|^2_2
    = \sum_{i = 0}^M \left( \alpha_i q_i - \operatorname{Tr}\left[ Q_i \psi'\left( \frac{1}{\varepsilon} \left(  \sum_{i = 0}^M \alpha_i Q_i - H\right) \right) \right] \right)^2.    
\end{equation}
    
    is smaller than a tolerance $\tau\in\lbrace 10^{-3}, 10^{-6}\rbrace$;
    \item maximum number of iterations 10000 is reached.\vspace{2mm}
\end{itemize}

All experiments described in this section are performed on a server equipped with an NVIDIA H100 NVL Tensor Core GPU.

\subsection{Quantum State Tomography}

In quantum state tomography, the goal is to reconstruct an unknown quantum state $\rho \in \rmH_{\geq}(\cH)$, i.e. positive semidefinite matrix with unit trace, from measurement outcomes. In variational terms, this problem can be posed as finding a density matrix $\rho\in\rmH_\geq(\cH)$ that best fits the data in a least-squares sense
\begin{equation}
\label{eq:q_tom_orig_problem}
\inf \left\lbrace \, \sum_{i=1}^M \left( \operatorname{Tr}[Q_i \pi] - q_i \right)^2 \, : \, \operatorname{Tr}[\pi] = 1,\  \pi \in \rmH_\geq(\cH) \, \right\rbrace,
\end{equation}
where, for all $i\in[M]$, $Q_i \in \rmH(\cH)$ are measurements, i.e. a Hermitian operators corresponding to an observable, where $Q_0=\Id$, $q_0=1$.

%The expected value of the $i$-th measurement is $q_i = \langle Q_i \rangle_{\rho} = \operatorname{Tr}[Q_i \rho], \, \forall \, i \in[M]$, where we consider .

%n practice, for all $i\in[M]$, we obtain empirical measurement outcomes $q_i \in \R$, which approximate these expectations. 

%More precisely, suppose we perform a set of measurements $\lbrace Q_i\rbrace_{i=1}^M$, where each $Q_i \in \rmH(\cH)$ is a Hermitian operator corresponding to an observable. The expected value of the $i$-th measurement is $q_i = \langle Q_i \rangle_{\rho} = \operatorname{Tr}[Q_i \rho], \, \forall \, i \in[M]$, where we consider $Q_0=\Id$, $q_0=1$.

%In practice, for all $i\in[M]$, we obtain empirical measurement outcomes $q_i \in \R$, which approximate these expectations. 

% \begin{equation}
% \inf \left\lbrace \Tr[\pi] + \varepsilon S_{\varphi}(\pi) \, : \, \pi\geq 0, \, \Tr[M_0\pi] = 1 \text{ and } \Tr[M_i\pi] = q_i, \, \forall \, i \in[N] \right\rbrace,
% \end{equation}
% where $M_0=\mathbb{I}$ is the identity. 
% 
% {\color{blue} [LP]: I really don't like this sentence below: the problem \eqref{eq:q_tom_orig_problem} and \eqref{eq:q_tom_aux_primal} are not equivalent. The second one is solvable iff $\qAdm \neq \emptyset$, and in this case I agree they coincide. But the first one is a relaxation of the second which makes sense also in the case when $\qAdm \neq \emptyset$. 
% } {\color{red}EC: You mean $=0$?}
Whenever $\qAdm = \left\{ \pi \in \rmH_{\geq}(\cH) \ : \ \operatorname{Tr}[Q_i \pi] = q_i, 0 \leq i \leq M \right\} \neq \emptyset$, the infimum is equal to zero, and the variational problem~\eqref{eq:q_tom_orig_problem} can be interpreted as a particular case~\eqref{intro:main} with $\varepsilon=0$ and $H = 0$, i.e.
% \begin{equation}
% \label{eq:q_tom_aux_primal}
%     \begin{aligned}
%         \inf_\pi\ & 0 \\
%         \operatorname{s. t.\ }& \operatorname{Tr}[Q_i \pi] = q_i,\quad i \in [M], \\
%                              & \operatorname{Tr}[\pi] = 1, \\
%                              & \pi \geq 0.
%     \end{aligned}
% \end{equation}
\begin{equation}
\label{eq:q_tom_aux_primal}
 \min \bigg\lbrace \, 
    0 \, : \pi \in \qAdm \bigg\rbrace  =  \min \bigg\lbrace \, 
    0 
    \, : \, 
    \operatorname{Tr}[Q_i \pi] = q_i,\ 0 \leq i \leq M,\ 
    \pi \in \rmH_\geq(\cH) \, 
    \bigg\rbrace.
\end{equation}

In general, the problem \eqref{eq:q_tom_aux_primal} may be ill-defined, as noisy observations $q_0, \dots, q_M$ could create an empty feasible set, as illustrates the following example. 

\begin{exam}
    Consider $\cH = \mathbb{C}^2$, and let $\{X, Y, Z, \Id\}$ be the Pauli matrices, which are
    \begin{align*}
        X = \begin{pmatrix}
            0 & 1 \\
            1 & 0
        \end{pmatrix},\ 
        Y = \begin{pmatrix}
            0 & -\imag \\
            \imag & 0
        \end{pmatrix},\ 
        Z = \begin{pmatrix}
            1 & 0 \\
            0 & -1
        \end{pmatrix},\ 
    \end{align*}
    and $I$ is the identity and $\imag = \sqrt{-1}$.
    Consider the case of the Quantum Tomography problem~\eqref{eq:q_tom_orig_problem} for
    $Q_1 = X,\ Q_2 = Z$.
    Recall that any two-dimensional density matrix $\pi \in \rmH(\cH)$ can be represented as
    \begin{equation*}
        \pi = \frac{1}{2} (\Id + a_x X + a_y Y + a_z Z)
    \end{equation*}
    for certain $a_x, a_y, a_z \in \mathbb{R}$.
    One has to have
    \begin{equation*}
        \operatorname{Tr}[\pi^2] = \frac{1 + a_x^2 + a_y^2 + a_z^2}{2} \leq 1,
    \end{equation*}
    i.e. $a_x^2 + a_y^2 + a_z^2 \leq 1$.
    Noting that $\operatorname{Tr}[Q_1 \pi] = a_x$ and $\operatorname{Tr}[Q_2 \pi] = a_z$, we have to demand $q_1^2 + q_2^2 \leq 1$.
    Let $q_1 = 0.8, q_2 = 0.6$, so we have the equality. However, the slightest increase of $q_1$ or $q_2$ leads to $q_1^2 + q_2^2 > 1$ which makes the feasible set for~\eqref{eq:q_tom_aux_primal} empty.
\end{exam}

\noindent
\textbf{Dual approach and regularization.}
The dual formulation of~\eqref{eq:q_tom_aux_primal} is given by
\begin{equation}
\label{eq:q_tom_dual_reg}
    \sup_{\bm\alpha \in \R^{M + 1}} \dualf (\bm \alpha)
    =
    \sup_{\bm\alpha \in \R^{M + 1}} \sum_{i = 0}^M \alpha_i q_i - \varepsilon \operatorname{Tr}\left[ \psi\left(\frac{1}{\varepsilon} \sum_{i = 0}^M \alpha_i Q_i \right) \right].
\end{equation}

Its gradient (squared) norm
\begin{equation}
\label{eq:q_tom_dual_reg_grad}
    \left\| \dualf (\bm \alpha) \right\|_2^2
    = \sum_{i = 0}^M \left( q_i - \operatorname{Tr}\left[ Q_i \psi'\left( \frac{1}{\varepsilon} \sum_{i = 0}^M \alpha_i Q_i \right) \right] \right)^2
\end{equation}
gives us an important notion of quantum tomography. Indeed,
by solving~\eqref{eq:q_tom_dual_reg}, i.e. finding vector $\bm \alpha^* \in \R^{M + 1}$, we actually solve the initial quantum tomography problem by finding the state that depends on the choice of the regularization.
By the Theorem~\ref{thm:characterization_maximizers}, we know the optimal primal variable $\pi^* \in \rmH_\geq(\cH)$ is
\begin{equation}
\label{eq:q_tom_opt_primal}
    \pi^*
    = \psi'\left( \sum_{i = 0}^M \alpha_i Q_i \right).
\end{equation}
Thus, we are solving the quantum tomography problem by searching the unknown state in the set
\begin{align*}
    S^\psi = 
    \left\{
    \pi \in \rmH(\cH)
    \ : \
    \exists \, \bm\alpha \in \R^{M + 1} \text{ such that }
    \pi = \psi' \left( \sum_{i = 0}^M \alpha_i Q_i \right)
    \right\}.
\end{align*}

\begin{oss}[Independence of the temperature paremeter $\varepsilon$]
It is important to notice that in this particular case the value of regularization parameter $\varepsilon$ does not change the optimal primal point, but rather changes the scale of optimal dual variables $\alpha_1, \dots, \alpha_M$.
Indeed, the regularized dual problem~\eqref{eq:q_tom_dual_reg} reads
\begin{align*}
    % \sup_{\bm \alpha \in \R^{M + 1}} \bm \alpha \cdot \bm q - \varepsilon \operatorname{Tr}\left[ \psi\left( \frac{\bm \alpha \cdot \bm Q}{\varepsilon} \right) \right]
    % &= \varepsilon \sup_{\bm \alpha \in \R^{M + 1}} \frac{\bm \alpha}{\varepsilon} \cdot \bm q - \operatorname{Tr}\left[ \psi\left( \frac{\bm \alpha}{\varepsilon} \cdot \bm Q \right) \right] \\
    % &= \varepsilon \sup_{\bm \alpha \in \R^{M + 1}} \bm \alpha \cdot \bm q - \operatorname{Tr}\left[ \psi\left( \bm \alpha \cdot \bm Q \right) \right],
    \sup_{\bm\alpha \in \R^{M + 1}} \sum_{i = 0}^M \alpha_i q_i - \varepsilon \operatorname{Tr}\left[ \psi\left(\frac{1}{\varepsilon} \sum_{i = 0}^M \alpha_i Q_i \right) \right]
    &= \varepsilon \sup_{\bm\alpha \in \R^{M + 1}} \sum_{i = 0}^M \frac{\alpha_i}{\varepsilon} q_i - \operatorname{Tr}\left[ \psi\left( \sum_{i = 0}^M \frac{\alpha_i}{\varepsilon} Q_i \right) \right] \\ 
    &= \varepsilon \sup_{\bm\alpha \in \R^{M + 1}} \sum_{i = 0}^M \alpha_i q_i - \operatorname{Tr}\left[ \psi\left( \sum_{i = 0}^M \alpha_i Q_i \right) \right]
    , 
\end{align*}
where the last equality comes from the fact that the maximization is unconstrained.
However, from practical considerations, it is worth keeping this parameter since it improves stability of the numerical methods (see Fig.~\ref{fig:q_tom_0}, Fig.~\ref{fig:q_tom_1}, and Fig.~\ref{fig:q_tom_2} for the impact of the $\varepsilon$ value).

% {\color{blue} [LP: was this observation/remark present in Mark's paper? Just curious]} 
\end{oss}

\noindent
\textbf{Numerical results.} Let $n \in \mathbb{N}$ denotes the number of qubits, $D = 2^n$ is dimension of the Hilbert space $\cH$.
By $\mathcal{Q}_n$ we denote the set of $n$-length Pauli strings
\begin{align}
\label{eq:pauli_string_set}
    \mathcal{Q}_n 
    = \left\{
    S \in \rmH(\cH)
    \ : \
    S = \bigotimes_{l = 1}^n \sigma_l, \ \sigma_l \in \{X, Y, Z, \Id\}
    \right\},
\end{align}
where $\{X, Y, Z, \Id\}$ are the Pauli matrices.

Let $\ket{1}, \dots, \ket{D} \in \R^D$ be the canonical basis of $\cH$.
Define the vector state as
\begin{align}\label{eq:cat-state}
    \ket{\zeta(\theta, \omega)}
    = \cos(\theta) \ket{1} + \sin(\theta) e^{\imag \omega} \ket{D}.
\end{align}
For $\beta \in \mathbb{C}$, define the \textit{cat state}, normalized to 1, as
\begin{align*}
    \ket{\text{cat}_\beta} = \ket{\beta} + \ket{-\beta},
\end{align*}
where $\ket{\beta} = e^{-|\beta|^2 / 2} \sum_{l = 0}^D \frac{\beta^l}{\sqrt{l!}}\ket{l}$.

In order to construct quantum tomography test problem, we have to pick observables $Q_i$ and the respective $q_i \in \R, i \in [M]$.
As for observables, we choose $M=2D$ and randomly pick $Q_i \in \mathcal{Q}_n,\ 1\leq i\leq 2 D$, 
% where $2 D$ is chosen arbitrarily \nat{was a repetetion?}
fixing $Q_0=\Id$.
Next, in order to guarantee $\qAdm \neq \emptyset$, for $1 \leq  j \leq 3$ we choose a $\rho_j$ state to generate $q_i = \operatorname{Tr}[\rho_j Q_i], i \in [2 D]$ thus ensuring existence of at least one admissible point.

Three numerical tests use the following states to generate 
% {\color{blue}[LP: write down what does it practically mean, i.e. that we define the $q_i$ as $q_i := \Tr[\rho_j Q_i]$, correct? Question: the state $\rho_i$ may not be optimal tho, right? For those values of $q_i$, even tho it is clearly admissible]}
the observation values $q_0, \dots, q_{2D}$:
% \begin{enumerate}
%     \item $\rho_1(p, \theta, \omega) = p \ket{\zeta(\theta, \omega)} \bra{\zeta(\theta, \omega)} + \frac{1 - p}{D} \Id$ for $p \in [0, 1]$ (\textbf{QT1} instance);
%     \item $\rho_2(\beta) = \ket{\text{cat}_\beta}\bra{\text{cat}_\beta}$ (\textbf{QT2} instance);
%     \item $\rho_3(t, \beta, p, \theta, \omega) = t \rho_1(p, \theta, \omega) + (1 - t) \rho_2(\beta)$ for $t \in [0, 1]$ (\textbf{QT3} instance).
% \end{enumerate}
\begin{align}
\rho_1(p,\theta,\omega)
&= p\,\ket{\zeta(\theta,\omega)}\bra{\zeta(\theta,\omega)}
   + \frac{1-p}{D}\,\Id,
\quad p \in [0,1],
\tag{QT1}\label{eq:QT1_inst}
\\[0.5em]
\rho_2(\beta)
&= \ket{\text{cat}_\beta}\bra{\text{cat}_\beta},
\tag{QT2}\label{eq:QT2_inst}
\\[0.5em]
\rho_3(t,\beta,p,\theta,\omega)
&= t\,\rho_1(p,\theta,\omega)
   + (1-t)\,\rho_2(\beta),
\quad t \in [0,1].
\tag{QT3}\label{eq:QT3_inst}
\end{align}

% {\color{blue} [LP: can we give a little bit of context here? Why are we using these observables? Where are they coming from?}
% As for observables, we randomly pick $2 D$ observables {\color{blue}[why 2D? important >D? are they differnet/lin indep?]} $Q_i \in \mathcal{Q}_n,\ i \in [2 D]$.
% That is, for the $j$-th test the the observation vector $\bm q \in \mathbb{R}^{2D}$ is $q_k = \operatorname{Tr}[\rho_j Q_k],\ k \in [2D]$.

% {\color{blue} [LP: add explanation of the shaded area -- you did it before, maybe it is useful to remind it here]}.
For all the tests, we fix $n = 9$ qubits and $p = 0.7,\ \theta = \pi/6,\ \omega = \pi/4,\ \beta = 2,\ t = 0.5$.

% Plots illustrate the convergence behavior of the value of interest towards 0. 
% For quantum tomography, the value of interest is norm of the gradient (the choice is explained below), and for the others is the absolute difference between the dual objective and the known ground truth (which is precomputed using MOSEK~\cite{mosek} solver). 
% % {\color{blue} Why?}

% Each subplot corresponds to a distinct regularization value, showing the convergence over iterations for the chosen regularizers. 
% The solid lines represent the mean values computed across repeated runs, while the shaded regions indicate the range between the minimum and maximum observed values at each iteration.

% \renewcommand{\arraystretch}{0.85}
% \setlength{\aboverulesep}{0.3ex}
% \setlength{\belowrulesep}{0.3ex}
% \setlength{\extrarowheight}{0pt}

On the Figure~\ref{fig:q_tom_0}-\ref{fig:q_tom_2} we plot the norm of the gradient given by the eq.~\eqref{eq:q_tom_dual_reg_grad} for the test 1-3, respectively.
The data shown in the Tables~\ref{tab:q_tom_0}, \ref{tab:q_tom_1}, and \ref{tab:q_tom_2} describes the time and number of iterations needed for L-BFGS to reach fixed tolerance $\{10^{-3}, 10^{-6}\}$.

\begin{figure}[!htb]
  \centering

  % --- Row 1 ---
  \begin{subfigure}{0.27\textwidth}
    \centering
    \includegraphics[width=\linewidth,height=0.26\textheight,keepaspectratio]{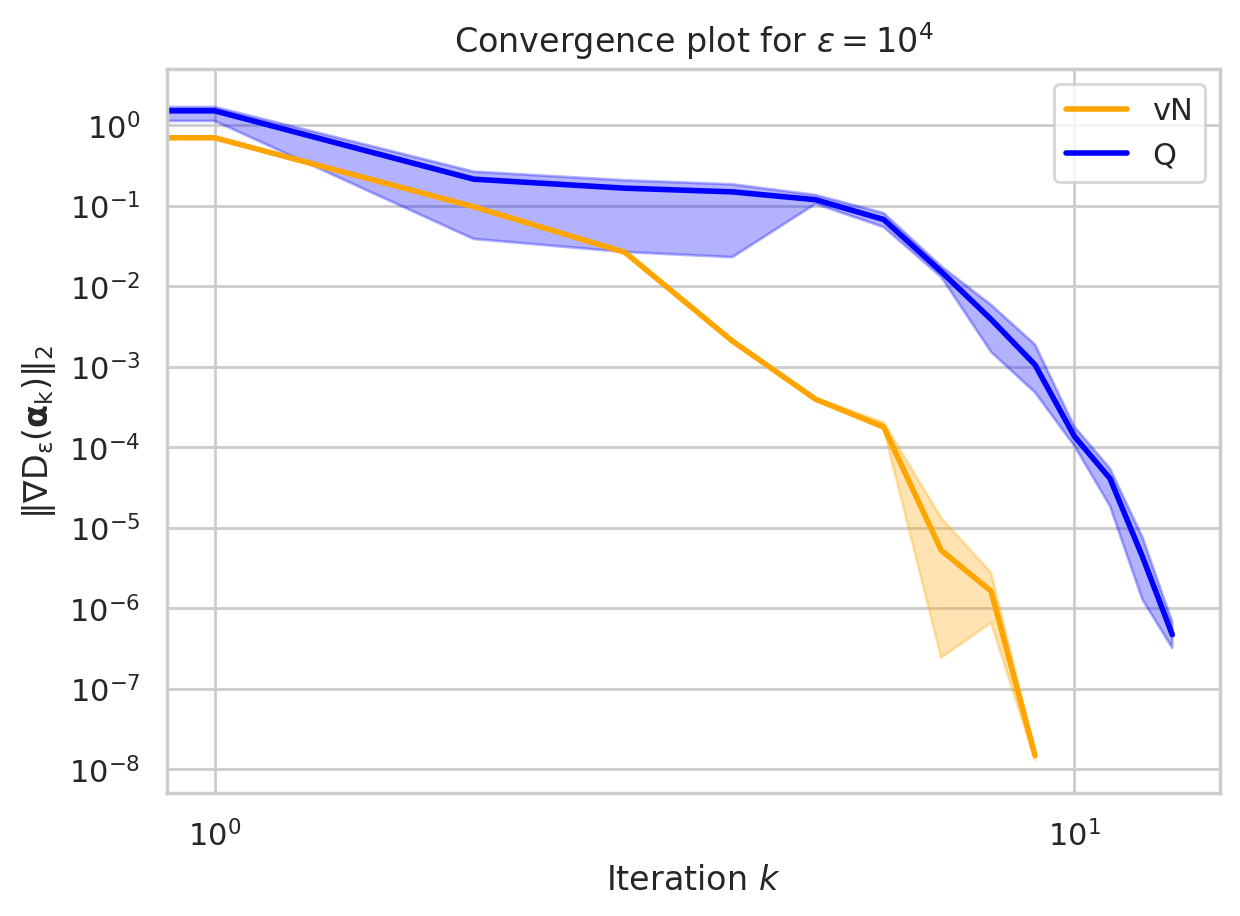}
    \caption{\textbf{QT1} for $\varepsilon = 10^{4}$}
  \end{subfigure}\hfill
  \begin{subfigure}{0.27\textwidth}
    \centering
    \includegraphics[width=\linewidth,height=0.26\textheight,keepaspectratio]{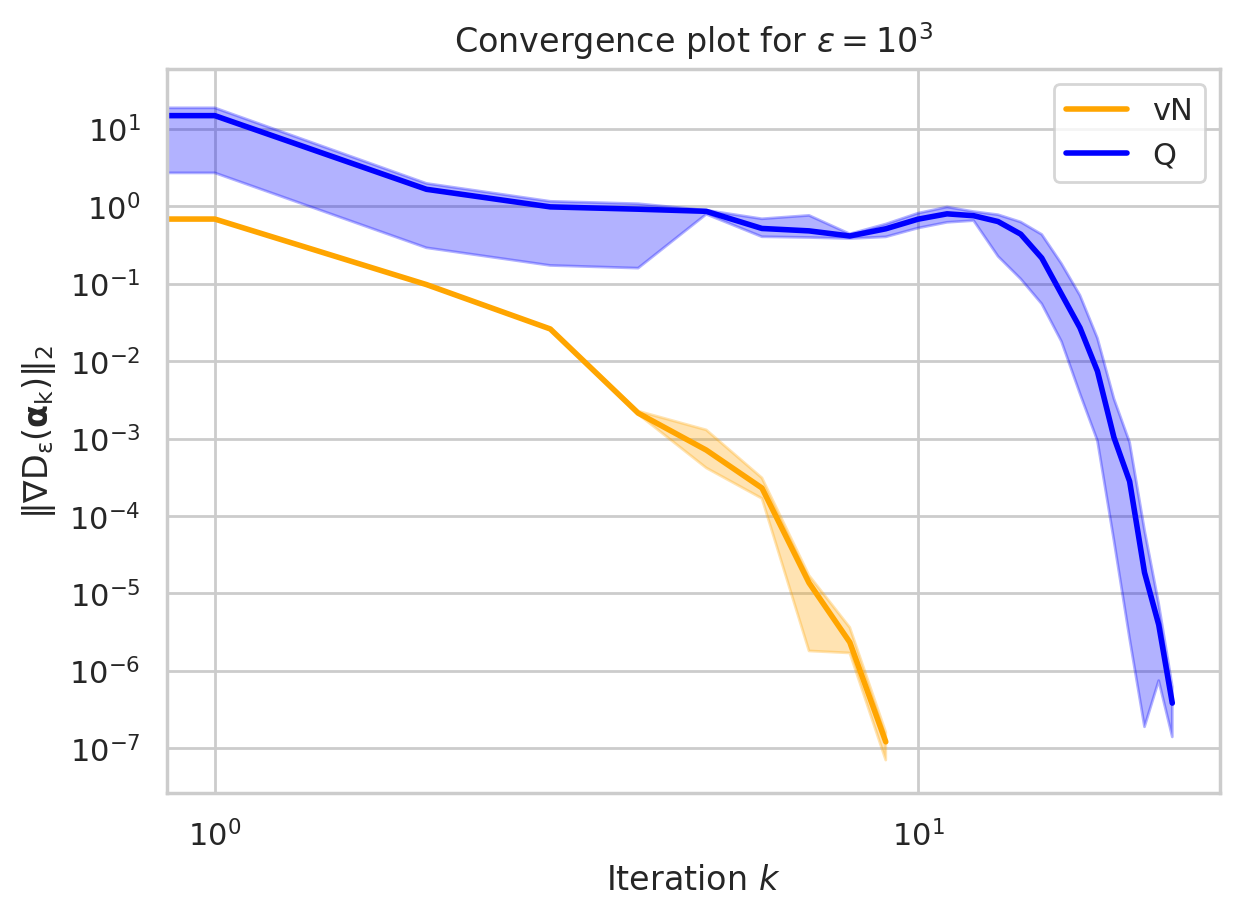}
    \caption{\textbf{QT1} for $\varepsilon = 10^{3}$}
  \end{subfigure}\hfill
  \begin{subfigure}{0.27\textwidth}
    \centering
    \includegraphics[width=\linewidth,height=0.26\textheight,keepaspectratio]{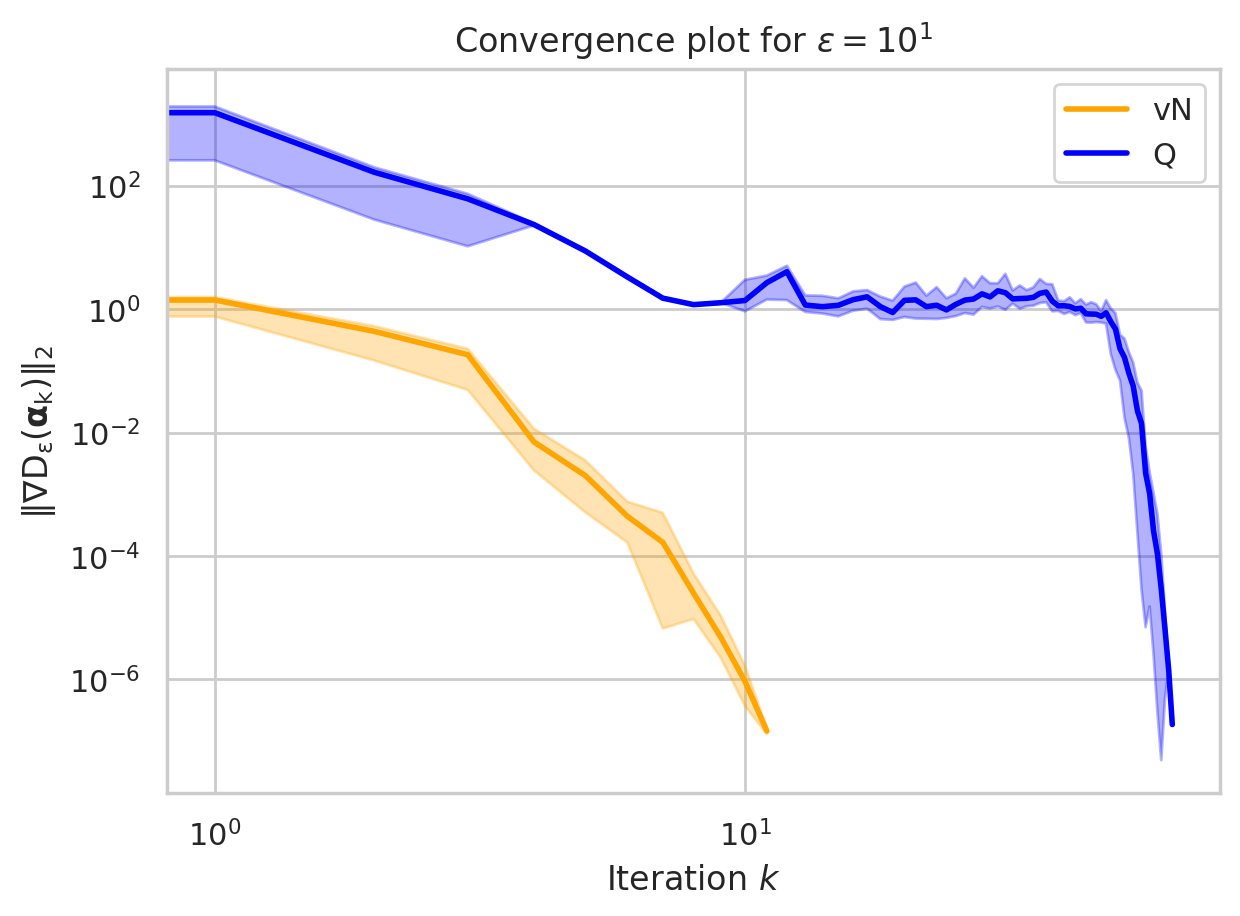}
    \caption{\textbf{QT1} for $\varepsilon = 10^{1}$}
  \end{subfigure}

  % \vspace{0.3em}

  % --- Row 2 ---
  \begin{subfigure}{0.27\textwidth}
    \centering
    \includegraphics[width=\linewidth,height=0.26\textheight,keepaspectratio]{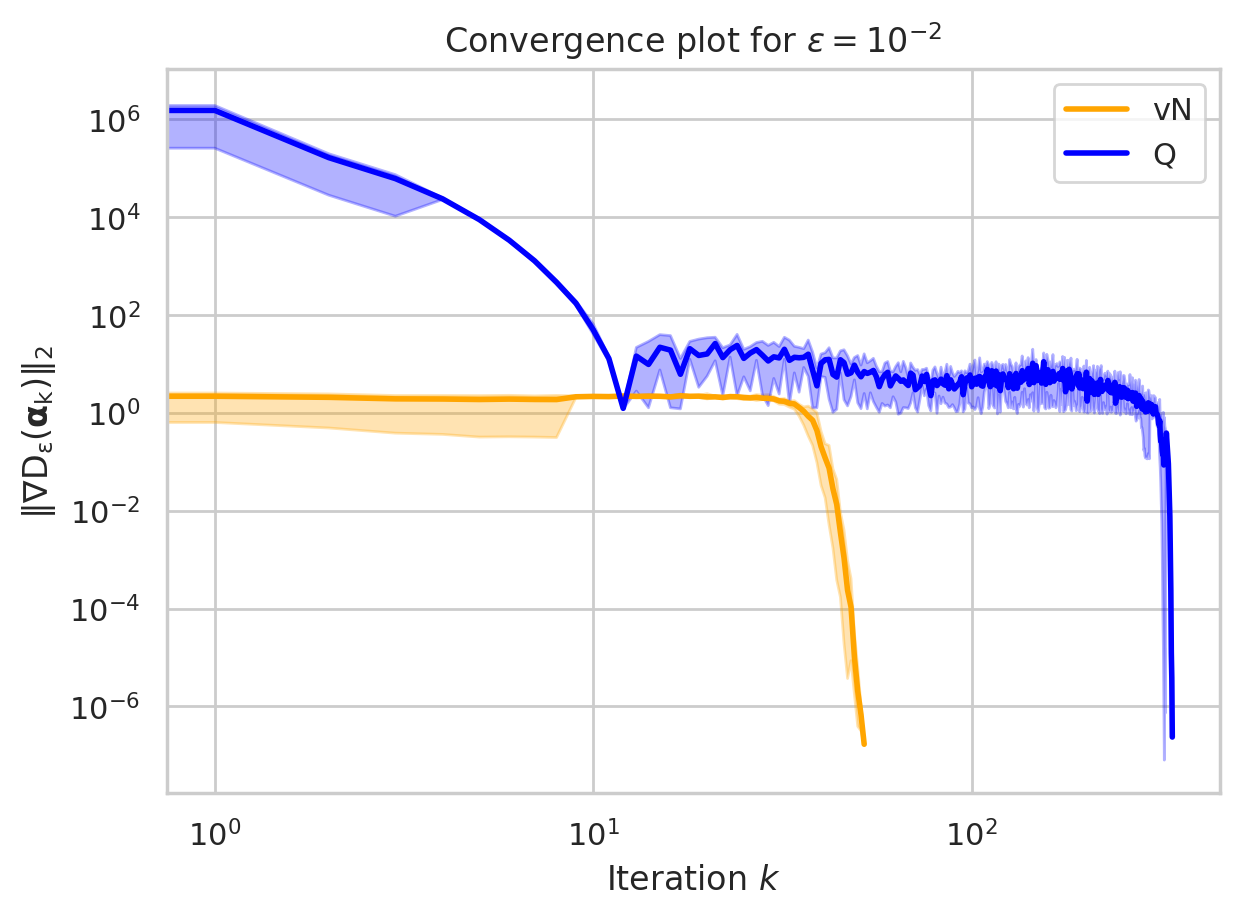}
    \caption{\textbf{QT1} for $\varepsilon = 10^{-2}$}
  \end{subfigure}\hfill
  \begin{subfigure}{0.27\textwidth}
    \centering
    \includegraphics[width=\linewidth,height=0.26\textheight,keepaspectratio]{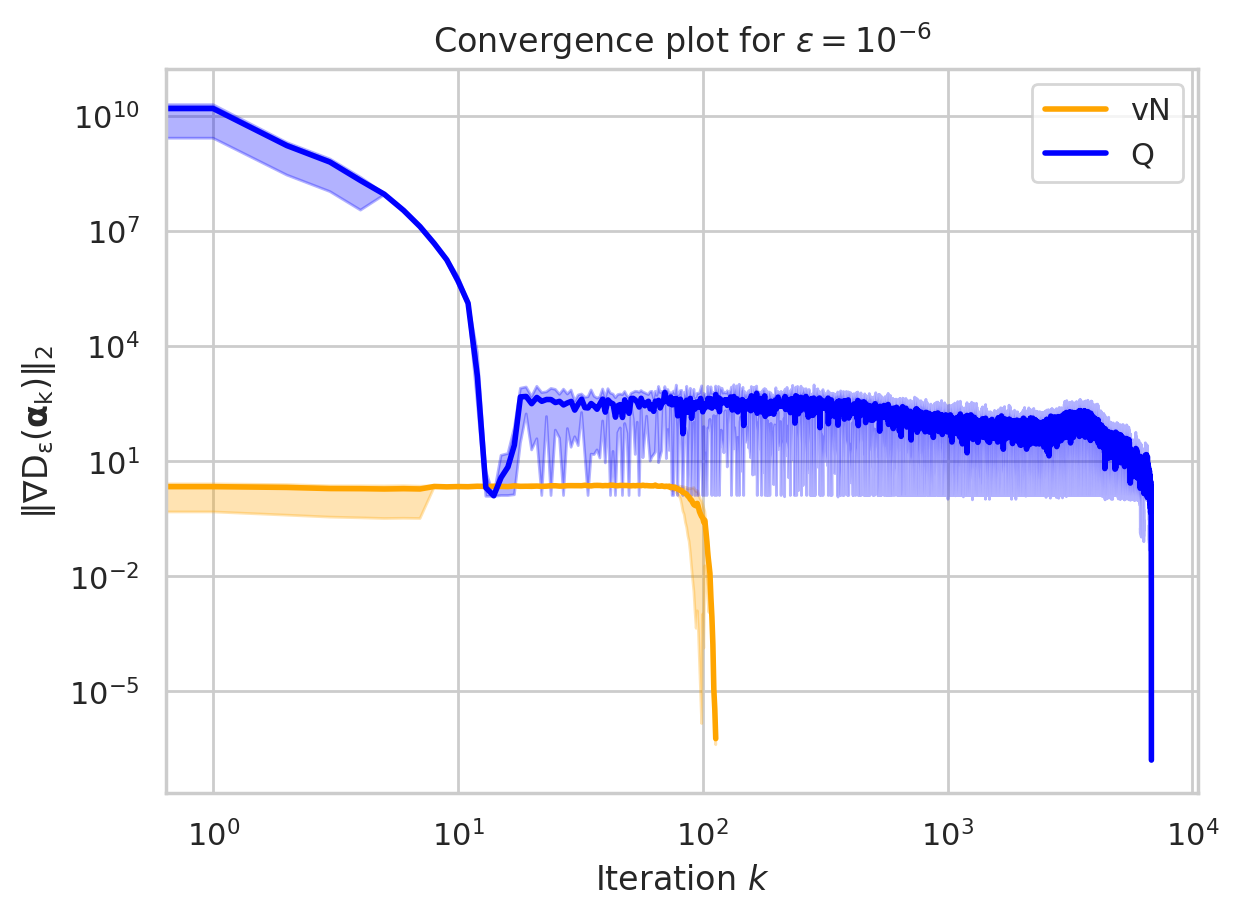}
    \caption{\textbf{QT1} for $\varepsilon = 10^{-6}$}
  \end{subfigure}\hfill
  \begin{subfigure}{0.27\textwidth}
    \centering
    \includegraphics[width=\linewidth,height=0.26\textheight,keepaspectratio]{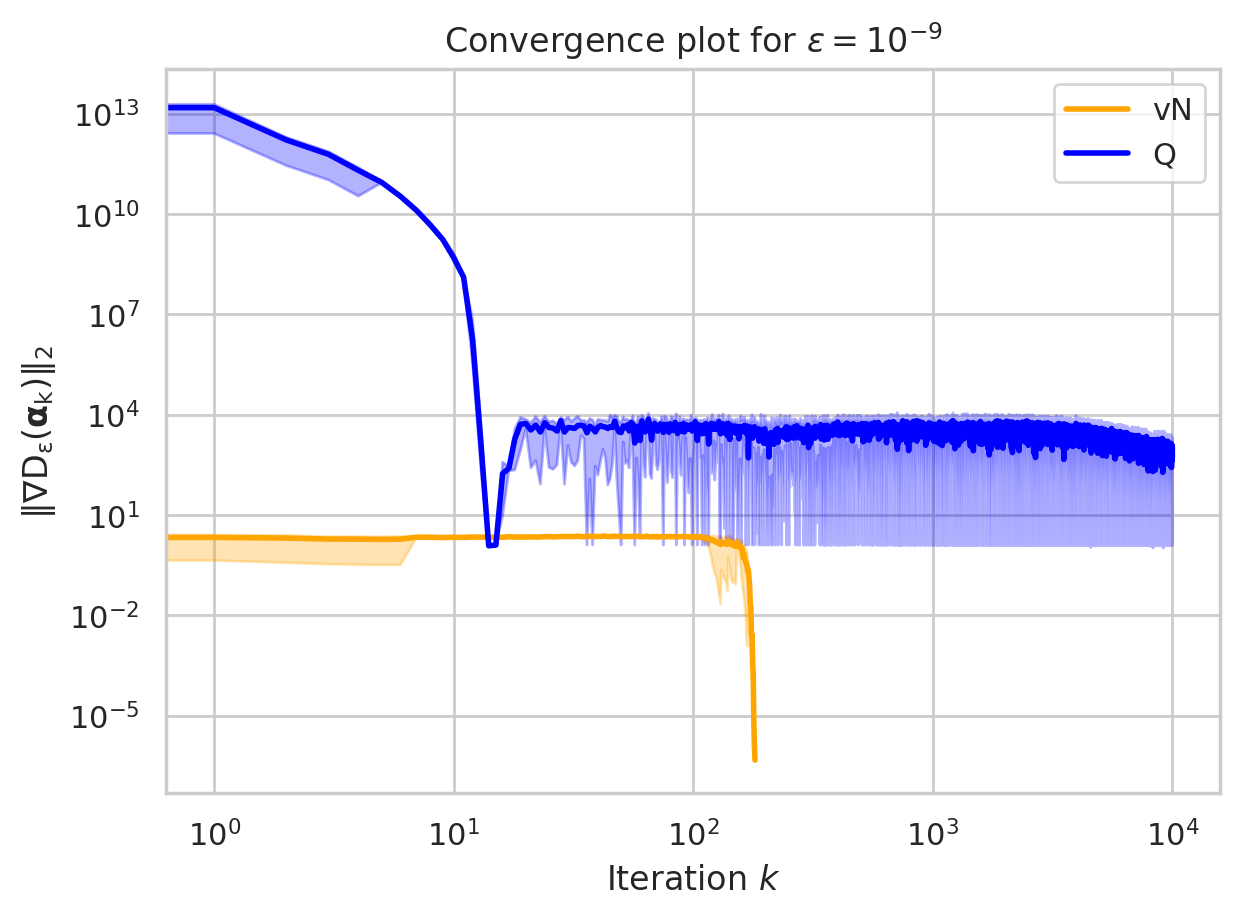}
    \caption{\textbf{QT1} for $\varepsilon = 10^{-9}$}
  \end{subfigure}

  \caption{\textbf{Quantum Tomography problem instance \textbf{QT1}.} Graph of the $L_2$-norm of the gradient of the dual functional $\dualf$ in Eq.~\eqref{intro:maindual} for the $\rho_1$ state; see~\eqref{eq:QT1_inst} in section~\ref{sec:numerics}. Panels~(a)--(f) display the iteration trajectories for different values of the regularization parameter $\varepsilon$
  % \in \{10^{4}, 10^{3}, 10^{1}, 10^{-2}, 10^{-6}, 10^{-9}\}$ 
  for both von Neumann (orange) and quadratic regularization (blue).}
  \label{fig:q_tom_0}
\end{figure}

\begin{table}[ht]
\centering
\footnotesize
\setlength{\tabcolsep}{2pt}

% ---------- Shared header row (only once) ----------
\begin{minipage}[t]{0.31\textwidth}\centering
\begin{tabular}{@{} L T Y N I A @{}}
\toprule Reg. & Tol. & Type & Time & Iters & Ach. \\ \midrule
\end{tabular}
\end{minipage}\hspace{0.02\textwidth}
\begin{minipage}[t]{0.31\textwidth}\centering
\begin{tabular}{@{} L T Y N I A @{}}
\toprule Reg. & Tol. & Type & Time & Iters & Ach. \\ \midrule
\end{tabular}
\end{minipage}\hspace{0.02\textwidth}
\begin{minipage}[t]{0.31\textwidth}\centering
\begin{tabular}{@{} L T Y N I A @{}}
\toprule Reg. & Tol. & Type & Time & Iters & Ach. \\ \midrule
\end{tabular}
\end{minipage}

% \vspace{0.1em}

% ---------- Row 1 ----------
\begin{minipage}[t]{0.31\textwidth}\centering
\begin{tabular}{@{} L T Y N I A @{}}
\multirow{4}{*}{\textbf{$10^{4}$}} & $10^{-3}$ & \texttt{vN} & 0.3 & 6 & Yes \\
 % & $10^{-4}$ & \texttt{vN} & 0.4 & 8 & Yes \\
 % & $10^{-5}$ & \texttt{vN} & 0.4 & 8 & Yes \\
 & $10^{-6}$ & \texttt{vN} & 0.4 & 10 & Yes \\
 & $10^{-3}$ & \texttt{Q} & 0.3 & 10 & Yes \\
 % & $10^{-4}$ & \texttt{Q} & 0.4 & 12 & Yes \\
 % & $10^{-5}$ & \texttt{Q} & 0.4 & 13 & Yes \\
 & $10^{-6}$ & \texttt{Q} & 0.5 & 14 & Yes \\
\bottomrule
\end{tabular}
\end{minipage}\hspace{0.02\textwidth}
\begin{minipage}[t]{0.31\textwidth}\centering
\begin{tabular}{@{} L T Y N I A @{}}
\multirow{4}{*}{\textbf{$10^{3}$}} & $10^{-3}$ & \texttt{vN} & 0.3 & 6 & Yes \\
 % & $10^{-4}$ & \texttt{vN} & 0.4 & 8 & Yes \\
 % & $10^{-5}$ & \texttt{vN} & 0.4 & 9 & Yes \\
 & $10^{-6}$ & \texttt{vN} & 0.5 & 10 & Yes \\
 & $10^{-3}$ & \texttt{Q} & 0.7 & 20 & Yes \\
 % & $10^{-4}$ & \texttt{Q} & 0.7 & 21 & Yes \\
 % & $10^{-5}$ & \texttt{Q} & 0.7 & 22 & Yes \\
 & $10^{-6}$ & \texttt{Q} & 0.8 & 23 & Yes \\
\bottomrule
\end{tabular}
\end{minipage}\hspace{0.02\textwidth}
\begin{minipage}[t]{0.31\textwidth}\centering
\begin{tabular}{@{} L T Y N I A @{}}
\multirow{4}{*}{\textbf{$10^{1}$}} & $10^{-3}$ & \texttt{vN} & 0.3 & 7 & Yes \\
 % & $10^{-4}$ & \texttt{vN} & 0.3 & 9 & Yes \\
 % & $10^{-5}$ & \texttt{vN} & 0.4 & 10 & Yes \\
 & $10^{-6}$ & \texttt{vN} & 0.4 & 11 & Yes \\
 & $10^{-3}$ & \texttt{Q} & 2.1 & 61 & Yes \\
 % & $10^{-4}$ & \texttt{Q} & 2.2 & 63 & Yes \\
 % & $10^{-5}$ & \texttt{Q} & 2.2 & 64 & Yes \\
 & $10^{-6}$ & \texttt{Q} & 2.2 & 65 & Yes \\
\bottomrule
\end{tabular}
\end{minipage}

\vspace{0.3em}

% ---------- Row 2 ----------
\begin{minipage}[t]{0.31\textwidth}\centering
\begin{tabular}{@{} L T Y N I A @{}}
\multirow{4}{*}{\textbf{$10^{-2}$}} & $10^{-3}$ & \texttt{vN} & 1.4 & 47 & Yes \\
 % & $10^{-4}$ & \texttt{vN} & 1.4 & 48 & Yes \\
 % & $10^{-5}$ & \texttt{vN} & 1.5 & 50 & Yes \\
 & $10^{-6}$ & \texttt{vN} & 1.5 & 51 & Yes \\
 & $10^{-3}$ & \texttt{Q} & 11.5 & 337 & Yes \\
 % & $10^{-4}$ & \texttt{Q} & 11.5 & 338 & Yes \\
 % & $10^{-5}$ & \texttt{Q} & 11.5 & 339 & Yes \\
 & $10^{-6}$ & \texttt{Q} & 11.6 & 340 & Yes \\
\bottomrule
\end{tabular}
\end{minipage}\hspace{0.02\textwidth}
\begin{minipage}[t]{0.31\textwidth}\centering
\begin{tabular}{@{} L T Y N I A @{}}
\multirow{4}{*}{\textbf{$10^{-6}$}} & $10^{-3}$ & \texttt{vN} & 3.0 & 110 & Yes \\
 % & $10^{-4}$ & \texttt{vN} & 3.0 & 112 & Yes \\
 % & $10^{-5}$ & \texttt{vN} & 3.0 & 113 & Yes \\
 & $10^{-6}$ & \texttt{vN} & 3.1 & 114 & Yes \\
 & $10^{-3}$ & \texttt{Q} & 221.2 & 6754 & Yes \\
 % & $10^{-4}$ & \texttt{Q} & 221.3 & 6755 & Yes \\
 % & $10^{-5}$ & \texttt{Q} & 221.3 & 6756 & Yes \\
 & $10^{-6}$ & \texttt{Q} & 221.3 & 6757 & Yes \\
\bottomrule
\end{tabular}
\end{minipage}\hspace{0.02\textwidth}
\begin{minipage}[t]{0.31\textwidth}\centering
\begin{tabular}{@{} L T Y N I A @{}}
\multirow{4}{*}{\textbf{$10^{-9}$}} & $10^{-3}$ & \texttt{vN} & 4.5 & 172 & Yes \\
 % & $10^{-4}$ & \texttt{vN} & 4.6 & 174 & Yes \\
 % & $10^{-5}$ & \texttt{vN} & 4.6 & 175 & Yes \\
 & $10^{-6}$ & \texttt{vN} & 4.6 & 176 & Yes \\
 & $10^{-3}$ & \texttt{Q} & 318.8 & 10000 & No \\
 % & $10^{-4}$ & \texttt{Q} & 318.8 & 10000 & No \\
 % & $10^{-5}$ & \texttt{Q} & 318.8 & 10000 & No \\
 & $10^{-6}$ & \texttt{Q} & 318.8 & 10000 & No \\
\bottomrule
\end{tabular}
\end{minipage}

% \caption{
% Table reports the performance results for the \textbf{QT1} problem instance. Each block corresponds to a fixed value of the regularization parameter (\textbf{Reg.}), while the rows within each block compare the two choices of regularizer (\textbf{Type}) -- von Neumann entropy (\texttt{vN}, $\varphi(z) = z \log z$) and quadratic regularization (\texttt{Q}, $\varphi(z) = \tfrac{1}{2} z^2$) -- under increasingly accuracy tolerances. For each configuration, the table reports the average runtime (\textbf{Time}, in seconds), the number of iterations required for convergence (\textbf{Iter.}), and whether the prescribed tolerance was met (\textbf{Ach.}). 
% }
\caption{
Performance results for the \textbf{QT1} problem instance. Each block corresponds to a fixed regularization parameter (\textbf{Reg.}) and compares von Neumann entropy (\texttt{vN}, $\varphi(z)=z\log z$) and quadratic (\texttt{Q}, $\varphi(z)=\tfrac12 z^2$) regularizers across decreasing tolerances. Reported are runtime (\textbf{Time}, s), iteration count (\textbf{Iter.}), and tolerance attainment (\textbf{Ach.}).
}
\label{tab:q_tom_0}
\end{table}

For the quantum tomography, gradient norm is the main metric of consideration since it measures how well the state~\eqref{eq:q_tom_opt_primal} describes the observations $q_0, \dots, q_{2D}$.
On the contrast to the value $\dualf(\bm \alpha)$, which will merely show the value of the regularization, which is irrelevant to the problem of Quantum Tomography.

Both of the axes are given in log scale and demonstrate convergence of the norm towards 0.
Each subplot corresponds to a distinct regularization value, showing the convergence over iterations for the chosen regularizers.
The solid lines represent the mean values of the norm across repeated runs, while the shaded regions indicate the range between the minimum and maximum norm observed values at each iteration.

The shaded region represents the range of absolute errors across iterations, bounded below by the minimum and above by the maximum error at each iteration.
It intends to demonstrate similar convergence dynamics across different starting points.

\begin{figure}[!htb]
  \centering

  % --- Row 1 ---
  \begin{subfigure}{0.27\textwidth}
    \centering
    \includegraphics[width=\linewidth,height=0.26\textheight,keepaspectratio]{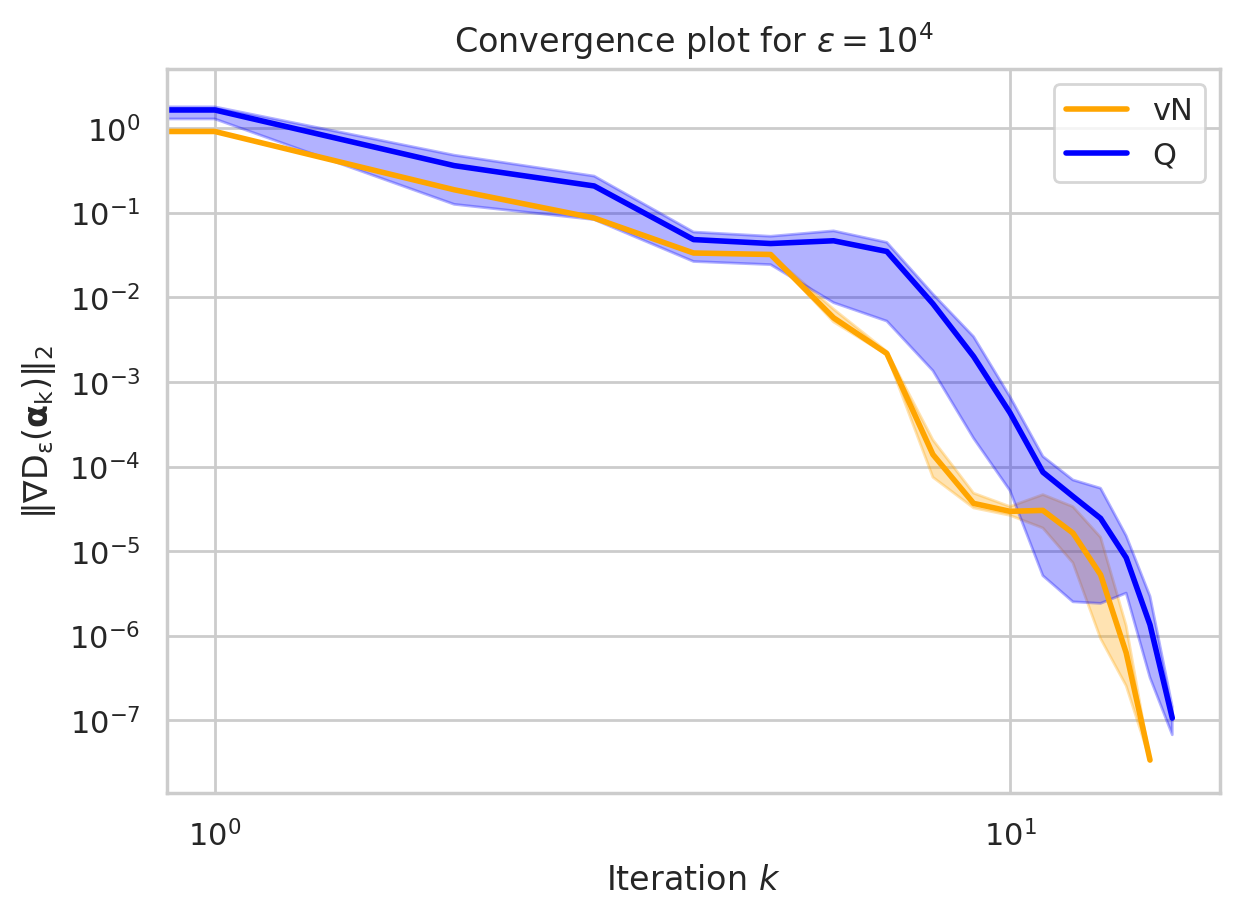}
    \caption{$\varepsilon = 10^{4}$}
  \end{subfigure}\hfill
  \begin{subfigure}{0.27\textwidth}
    \centering
    \includegraphics[width=\linewidth,height=0.26\textheight,keepaspectratio]{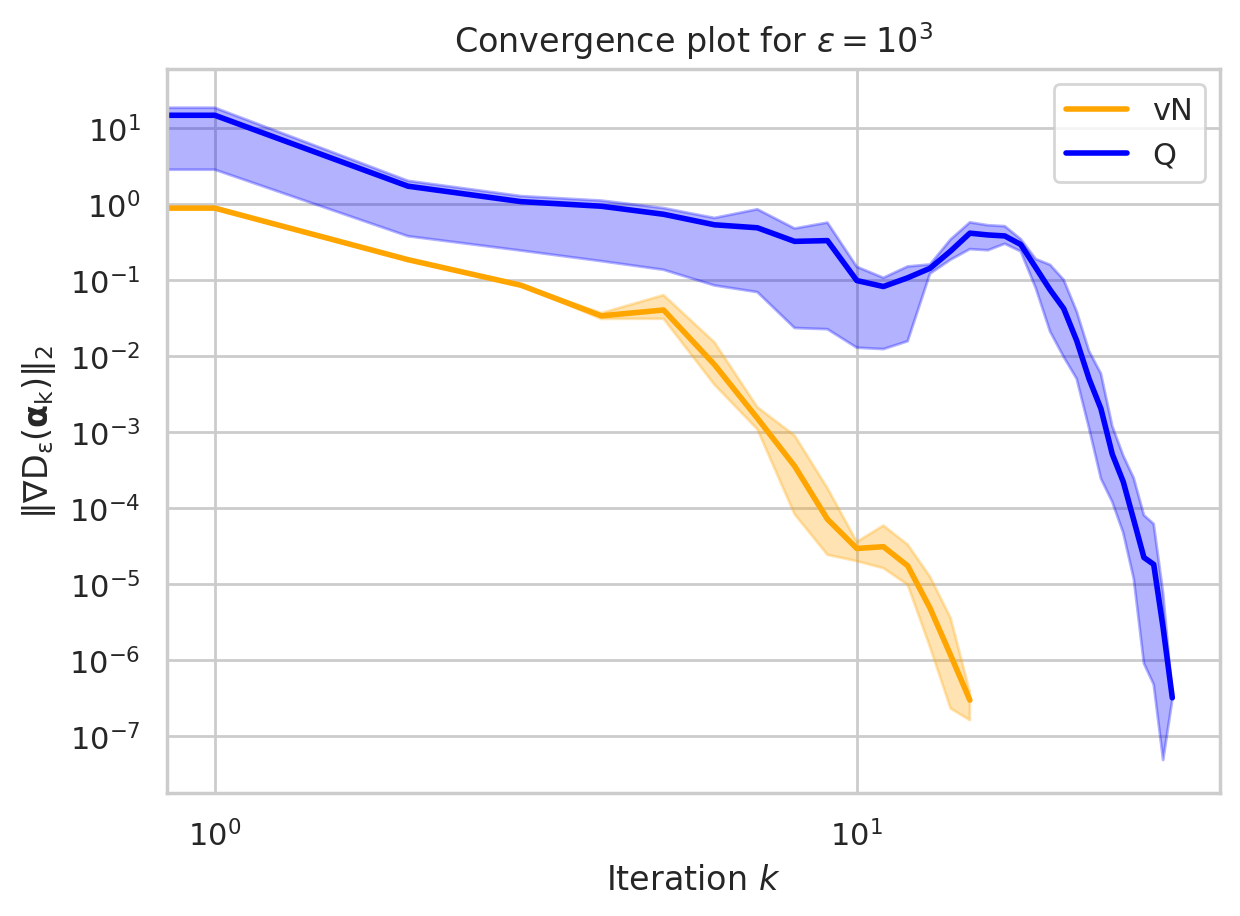}
    \caption{$\varepsilon = 10^{3}$}
  \end{subfigure}\hfill
  \begin{subfigure}{0.27\textwidth}
    \centering
    \includegraphics[width=\linewidth,height=0.26\textheight,keepaspectratio]{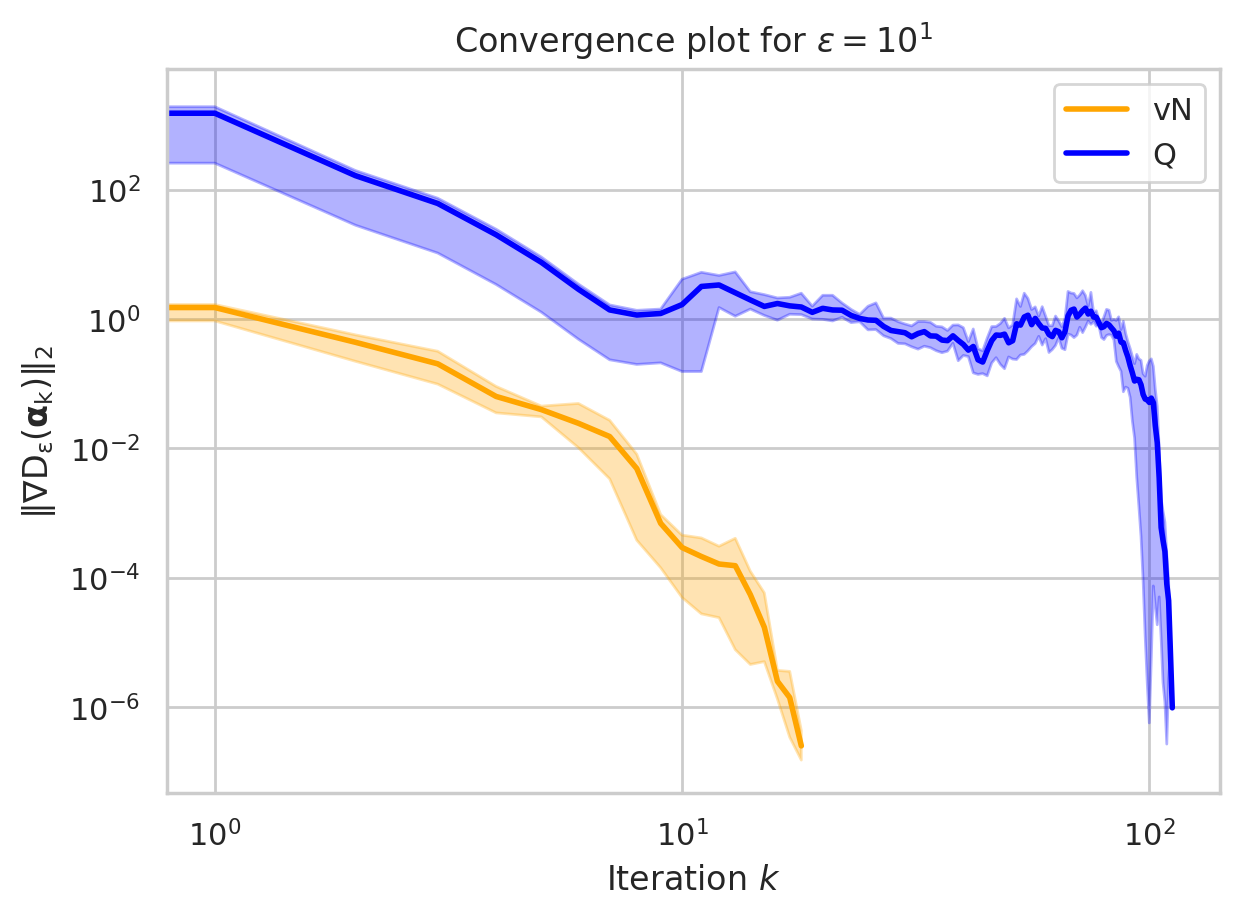}
    \caption{$\varepsilon = 10^{1}$}
  \end{subfigure}

  % \vspace{0.3em}

  % --- Row 2 ---
  \begin{subfigure}{0.27\textwidth}
    \centering
    \includegraphics[width=\linewidth,height=0.26\textheight,keepaspectratio]{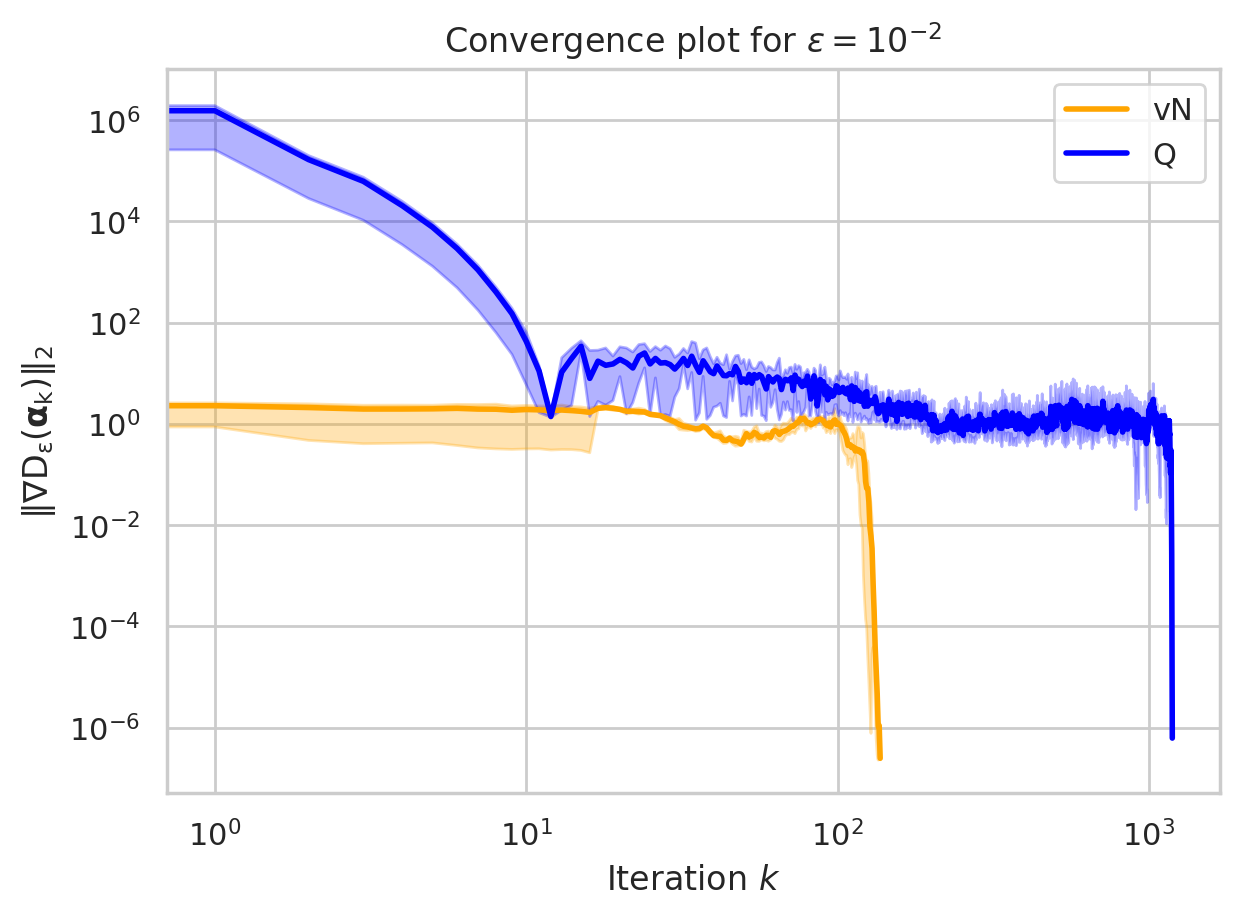}
    \caption{$\varepsilon = 10^{-2}$}
  \end{subfigure}\hfill
  \begin{subfigure}{0.27\textwidth}
    \centering
    \includegraphics[width=\linewidth,height=0.26\textheight,keepaspectratio]{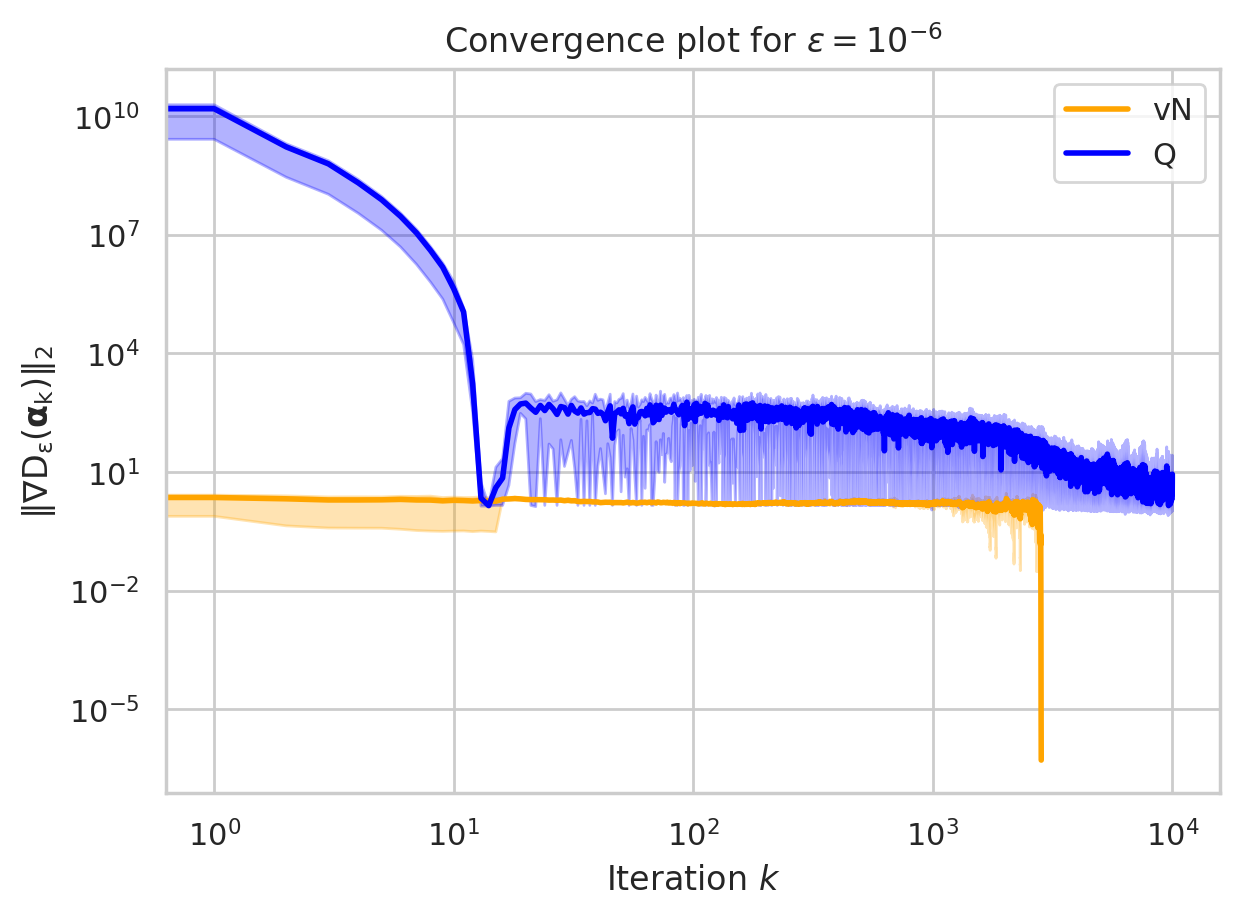}
    \caption{$\varepsilon = 10^{-6}$}
  \end{subfigure}\hfill
  \begin{subfigure}{0.27\textwidth}
    \centering
    \includegraphics[width=\linewidth,height=0.26\textheight,keepaspectratio]{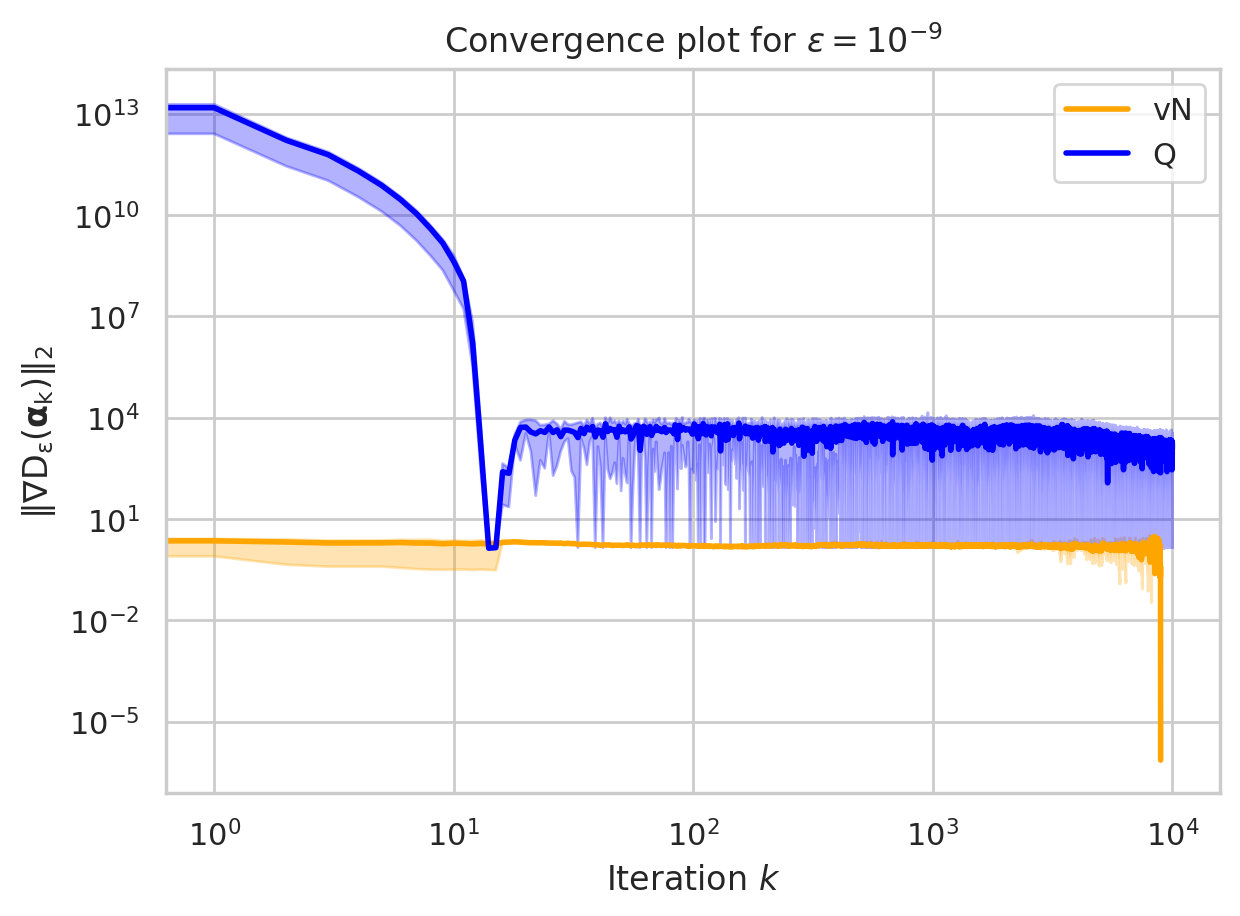}
    \caption{$\varepsilon = 10^{-9}$}
  \end{subfigure}

  \caption{\textbf{Quantum Tomography problem instance \textbf{QT2}.} Graph of the $L_2$-norm of the gradient of the dual functional $\dualf$ in Eq.~\eqref{intro:maindual} for the $\rho_2$ state; see~\eqref{eq:QT2_inst} in section~\ref{sec:numerics}. Panels~(a)--(f) display the iteration trajectories for different values of the regularization parameter $\varepsilon$ 
  % \in \{10^{4}, 10^{3}, 10^{1}, 10^{-2}, 10^{-6}, 10^{-9}\}$ 
  for both von Neumann (orange) and quadratic regularization (blue). }
  \label{fig:q_tom_1}
\end{figure}

\begin{table}[ht]
\centering
\footnotesize
\setlength{\tabcolsep}{2pt}

% ---------- Shared header row (only once) ----------
\begin{minipage}[t]{0.31\textwidth}\centering
\begin{tabular}{@{} L T Y N I A @{}}
\toprule Reg. & Tol. & Type & Time & Iters & Ach. \\ \midrule
\end{tabular}
\end{minipage}\hspace{0.02\textwidth}
\begin{minipage}[t]{0.31\textwidth}\centering
\begin{tabular}{@{} L T Y N I A @{}}
\toprule Reg. & Tol. & Type & Time & Iters & Ach. \\ \midrule
\end{tabular}
\end{minipage}\hspace{0.02\textwidth}
\begin{minipage}[t]{0.31\textwidth}\centering
\begin{tabular}{@{} L T Y N I A @{}}
\toprule Reg. & Tol. & Type & Time & Iters & Ach. \\ \midrule
\end{tabular}
\end{minipage}

% \vspace{0.1em}

% ---------- Row 1 ----------
\begin{minipage}[t]{0.31\textwidth}\centering
\begin{tabular}{@{} L T Y N I A @{}}
\multirow{4}{*}{\textbf{$10^{4}$}} & $10^{-3}$ & \texttt{vN} & 0.3 & 9 & Yes \\
 % & $10^{-4}$ & \texttt{vN} & 0.3 & 9 & Yes \\
 % & $10^{-5}$ & \texttt{vN} & 0.6 & 15 & Yes \\
 & $10^{-6}$ & \texttt{vN} & 0.6 & 15 & Yes \\
 & $10^{-3}$ & \texttt{Q} & 0.3 & 11 & Yes \\
 % & $10^{-4}$ & \texttt{Q} & 0.4 & 13 & Yes \\
 % & $10^{-5}$ & \texttt{Q} & 0.5 & 15 & Yes \\
 & $10^{-6}$ & \texttt{Q} & 0.5 & 16 & Yes \\
\bottomrule
\end{tabular}
\end{minipage}\hspace{0.02\textwidth}
\begin{minipage}[t]{0.31\textwidth}\centering
\begin{tabular}{@{} L T Y N I A @{}}
\multirow{4}{*}{\textbf{$10^{3}$}} & $10^{-3}$ & \texttt{vN} & 0.3 & 9 & Yes \\
 % & $10^{-4}$ & \texttt{vN} & 0.4 & 11 & Yes \\
 % & $10^{-5}$ & \texttt{vN} & 0.5 & 15 & Yes \\
 & $10^{-6}$ & \texttt{vN} & 0.6 & 16 & Yes \\
 & $10^{-3}$ & \texttt{Q} & 0.8 & 25 & Yes \\
 % & $10^{-4}$ & \texttt{Q} & 0.9 & 28 & Yes \\
 % & $10^{-5}$ & \texttt{Q} & 0.9 & 30 & Yes \\
 & $10^{-6}$ & \texttt{Q} & 1.0 & 31 & Yes \\
\bottomrule
\end{tabular}
\end{minipage}\hspace{0.02\textwidth}
\begin{minipage}[t]{0.31\textwidth}\centering
\begin{tabular}{@{} L T Y N I A @{}}
\multirow{4}{*}{\textbf{$10^{1}$}} & $10^{-3}$ & \texttt{vN} & 0.3 & 10 & Yes \\
 % & $10^{-4}$ & \texttt{vN} & 0.5 & 15 & Yes \\
 % & $10^{-5}$ & \texttt{vN} & 0.5 & 16 & Yes \\
 & $10^{-6}$ & \texttt{vN} & 0.6 & 18 & Yes \\
 & $10^{-3}$ & \texttt{Q} & 3.0 & 97 & Yes \\
 % & $10^{-4}$ & \texttt{Q} & 3.0 & 98 & Yes \\
 % & $10^{-5}$ & \texttt{Q} & 3.1 & 100 & Yes \\
 & $10^{-6}$ & \texttt{Q} & 3.1 & 101 & Yes \\
\bottomrule
\end{tabular}
\end{minipage}

\vspace{0.3em}

% ---------- Row 2 ----------
\begin{minipage}[t]{0.31\textwidth}\centering
\begin{tabular}{@{} L T Y N I A @{}}
\multirow{4}{*}{\textbf{$10^{-2}$}} & $10^{-3}$ & \texttt{vN} & 3.5 & 130 & Yes \\
 % & $10^{-4}$ & \texttt{vN} & 3.6 & 131 & Yes \\
 % & $10^{-5}$ & \texttt{vN} & 3.6 & 133 & Yes \\
 & $10^{-6}$ & \texttt{vN} & 3.6 & 134 & Yes \\
 & $10^{-3}$ & \texttt{Q} & 32.4 & 1137 & Yes \\
 % & $10^{-4}$ & \texttt{Q} & 32.4 & 1138 & Yes \\
 % & $10^{-5}$ & \texttt{Q} & 32.5 & 1140 & Yes \\
 & $10^{-6}$ & \texttt{Q} & 32.6 & 1142 & Yes \\
\bottomrule
\end{tabular}
\end{minipage}\hspace{0.02\textwidth}
\begin{minipage}[t]{0.31\textwidth}\centering
\begin{tabular}{@{} L T Y N I A @{}}
\multirow{4}{*}{\textbf{$10^{-6}$}} & $10^{-3}$ & \texttt{vN} & 70.1 & 2829 & Yes \\
 % & $10^{-4}$ & \texttt{vN} & 70.1 & 2831 & Yes \\
 % & $10^{-5}$ & \texttt{vN} & 70.2 & 2833 & Yes \\
 & $10^{-6}$ & \texttt{vN} & 70.3 & 2837 & Yes \\
 & $10^{-3}$ & \texttt{Q} & 317.4 & 10000 & No \\
 % & $10^{-4}$ & \texttt{Q} & 317.4 & 10000 & No \\
 % & $10^{-5}$ & \texttt{Q} & 317.4 & 10000 & No \\
 & $10^{-6}$ & \texttt{Q} & 317.4 & 10000 & No \\
\bottomrule
\end{tabular}
\end{minipage}\hspace{0.02\textwidth}
\begin{minipage}[t]{0.31\textwidth}\centering
\begin{tabular}{@{} L T Y N I A @{}}
\multirow{4}{*}{\textbf{$10^{-9}$}} & $10^{-3}$ & \texttt{vN} & 201.5 & 8195 & Yes \\
 % & $10^{-4}$ & \texttt{vN} & 201.6 & 8197 & Yes \\
 % & $10^{-5}$ & \texttt{vN} & 201.6 & 8198 & Yes \\
 & $10^{-6}$ & \texttt{vN} & 201.6 & 8200 & Yes \\
 & $10^{-3}$ & \texttt{Q} & 320.5 & 10000 & No \\
 % & $10^{-4}$ & \texttt{Q} & 320.5 & 10000 & No \\
 % & $10^{-5}$ & \texttt{Q} & 320.5 & 10000 & No \\
 & $10^{-6}$ & \texttt{Q} & 320.5 & 10000 & No \\
\bottomrule
\end{tabular}
\end{minipage}

\caption{
Performance results for the \textbf{QT2} problem instance. Each block corresponds to a fixed regularization parameter (\textbf{Reg.}) and compares von Neumann entropy (\texttt{vN}, $\varphi(z)=z\log z$) and quadratic (\texttt{Q}, $\varphi(z)=\tfrac12 z^2$) regularizers across decreasing tolerances. Reported are runtime (\textbf{Time}, s), iteration count (\textbf{Iter.}), and tolerance attainment (\textbf{Ach.}).
}
\label{tab:q_tom_1}
\end{table}

For all the three experiments,
increasing the regularization improves the stability of the algorithm and significantly reduces the number of iterations required for convergence --- for example, approximately $10$ iterations for $\varepsilon = 10^{4}$ compared with about $10^{4}$ iterations for $\varepsilon = 10^{-6}$.
In all cases, von Neumann entropy converges faster than quadratic regularization.
Considering the fact that actual solution is independent of the choice of $\varepsilon$, it is more favourable to use large regularization to obtain the solution.

\begin{figure}[!htb]
  \centering

  % --- Row 1 ---
  \begin{subfigure}{0.27\textwidth}
    \centering
    \includegraphics[width=\linewidth,height=0.26\textheight,keepaspectratio]{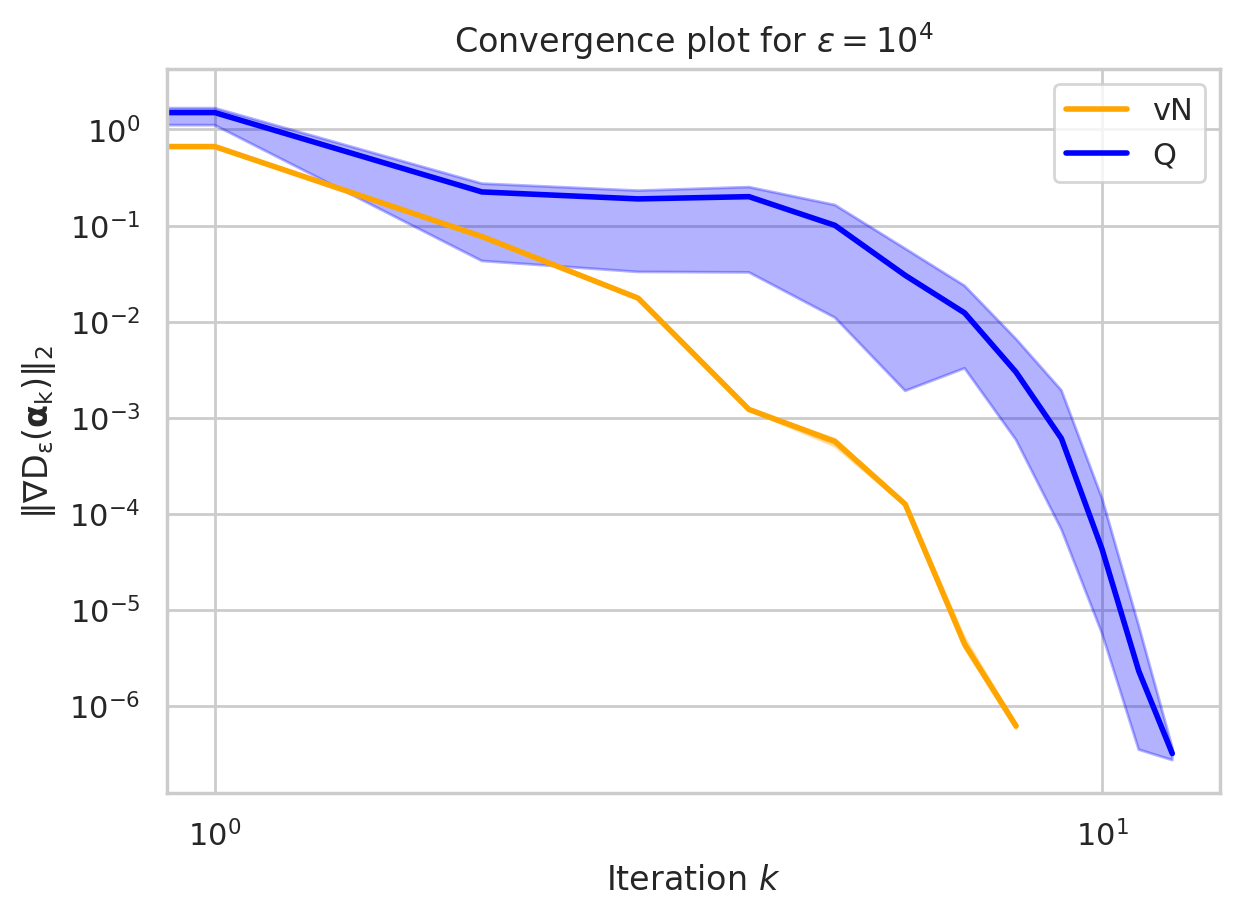}
    \caption{\textbf{QT3} for $\varepsilon = 10^{4}$}
  \end{subfigure}\hfill
  \begin{subfigure}{0.27\textwidth}
    \centering
    \includegraphics[width=\linewidth,height=0.26\textheight,keepaspectratio]{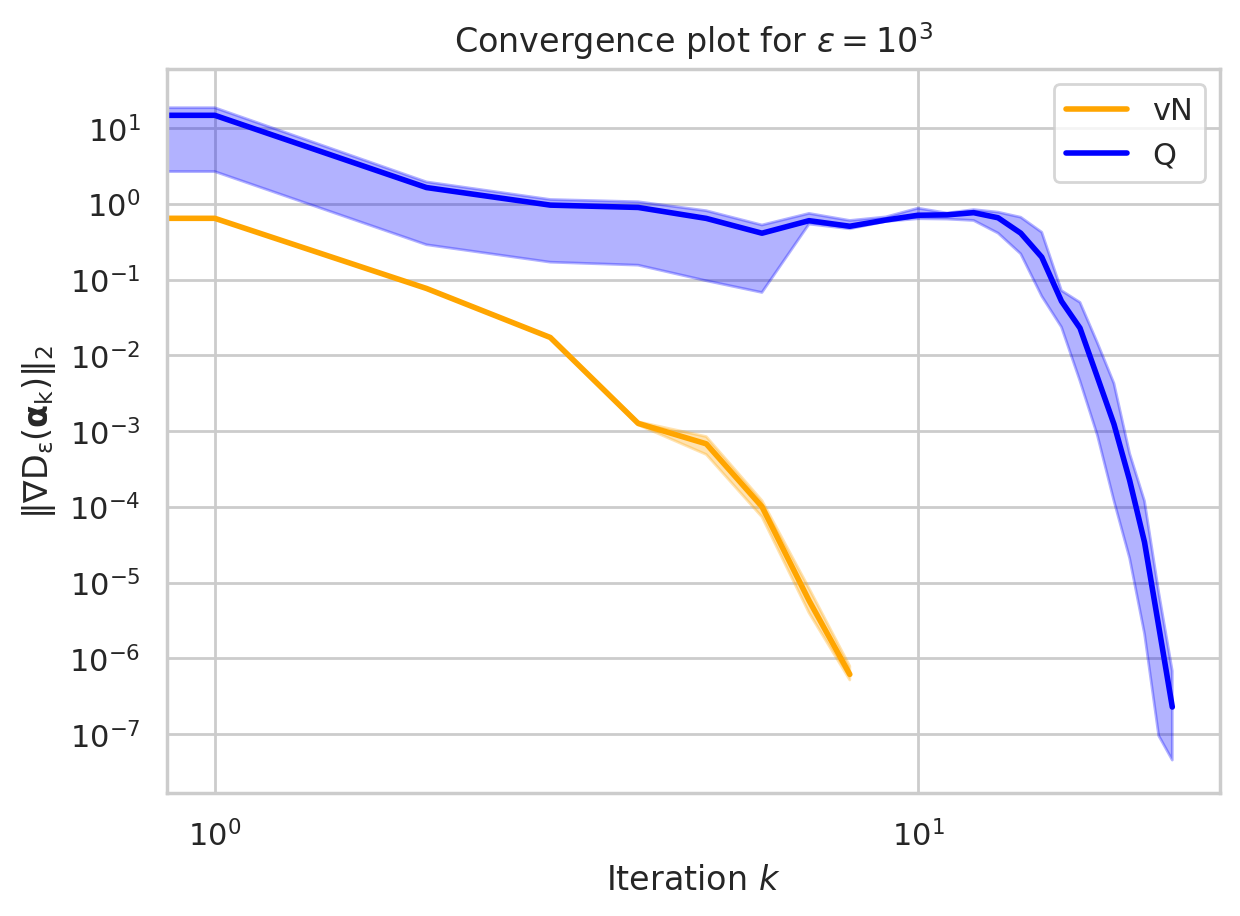}
    \caption{\textbf{QT3} for $\varepsilon = 10^{3}$}
  \end{subfigure}\hfill
  \begin{subfigure}{0.27\textwidth}
    \centering
    \includegraphics[width=\linewidth,height=0.26\textheight,keepaspectratio]{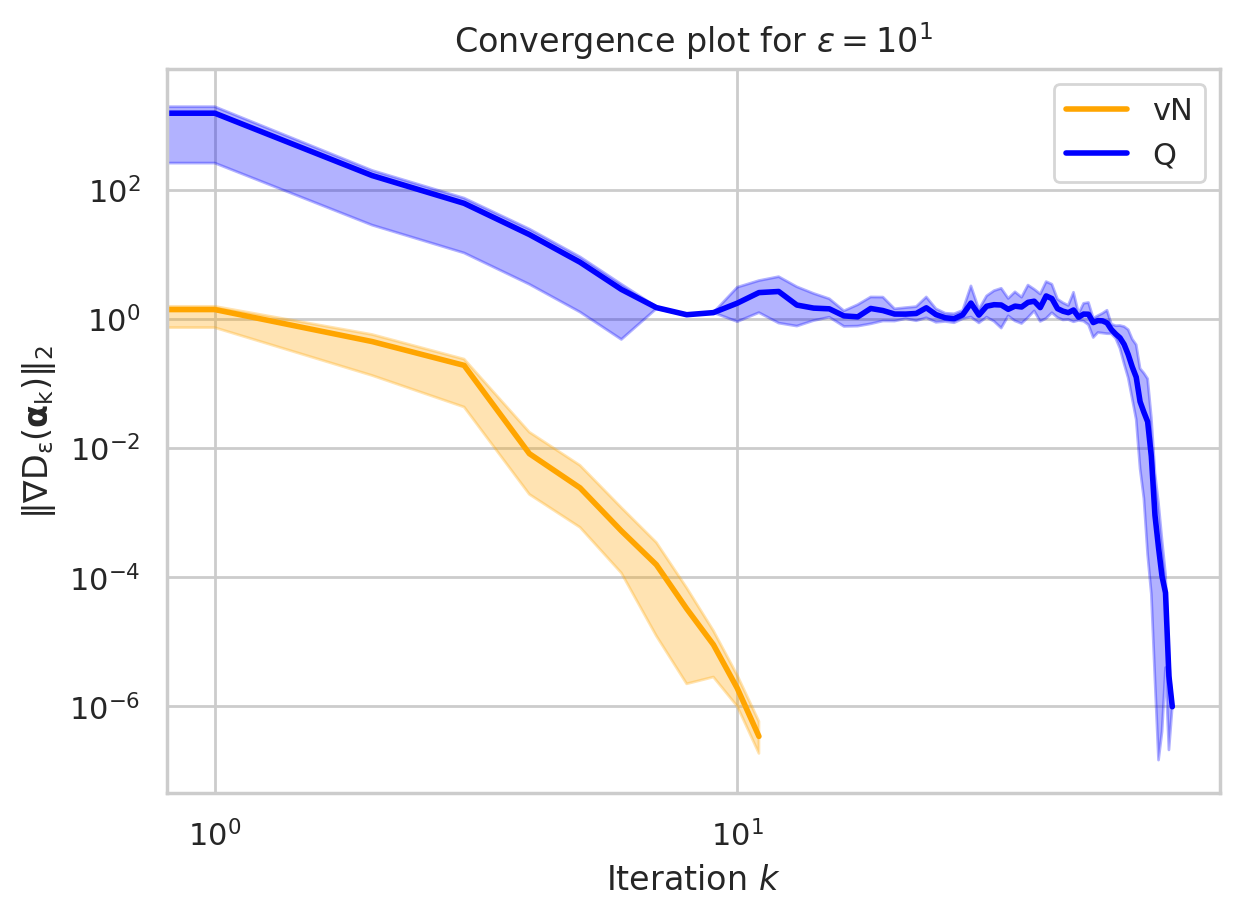}
    \caption{\textbf{QT3} for $\varepsilon = 10^{1}$}
  \end{subfigure}

  % \vspace{0.3em}

  % --- Row 2 ---
  \begin{subfigure}{0.27\textwidth}
    \centering
    \includegraphics[width=\linewidth,height=0.26\textheight,keepaspectratio]{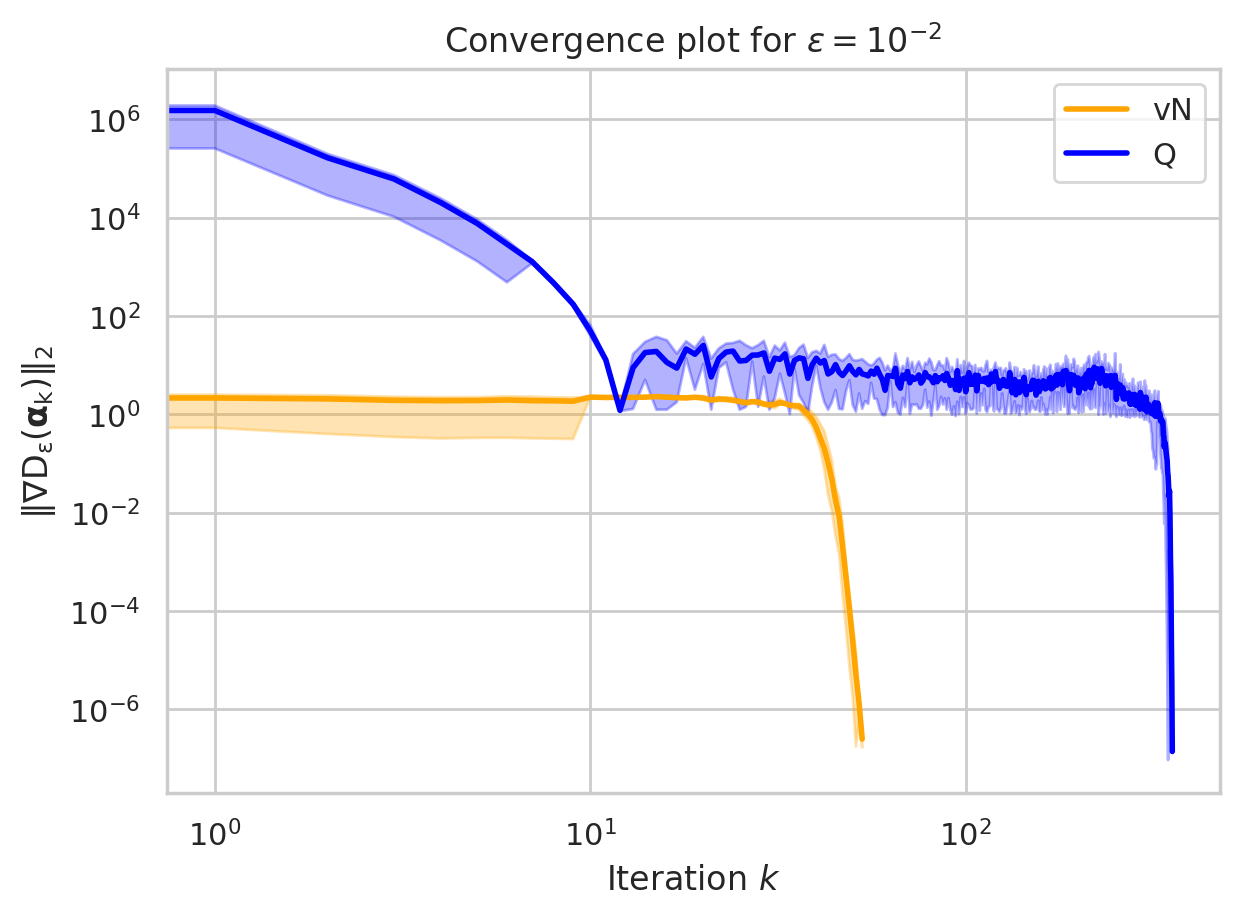}
    \caption{\textbf{QT3} for $\varepsilon = 10^{-2}$}
  \end{subfigure}\hfill
  \begin{subfigure}{0.27\textwidth}
    \centering
    \includegraphics[width=\linewidth,height=0.26\textheight,keepaspectratio]{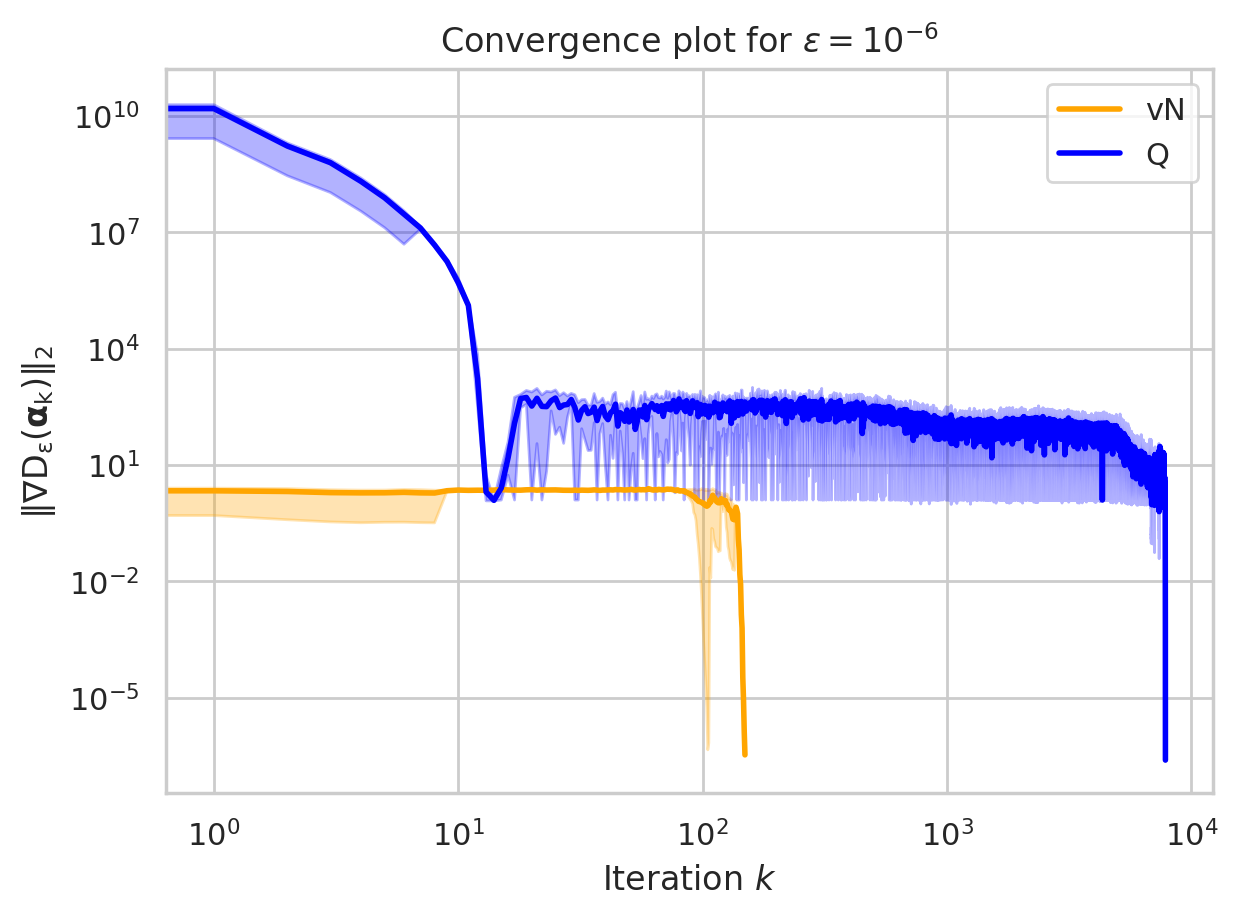}
    \caption{\textbf{QT3} for $\varepsilon = 10^{-6}$}
  \end{subfigure}\hfill
  \begin{subfigure}{0.27\textwidth}
    \centering
    \includegraphics[width=\linewidth,height=0.26\textheight,keepaspectratio]{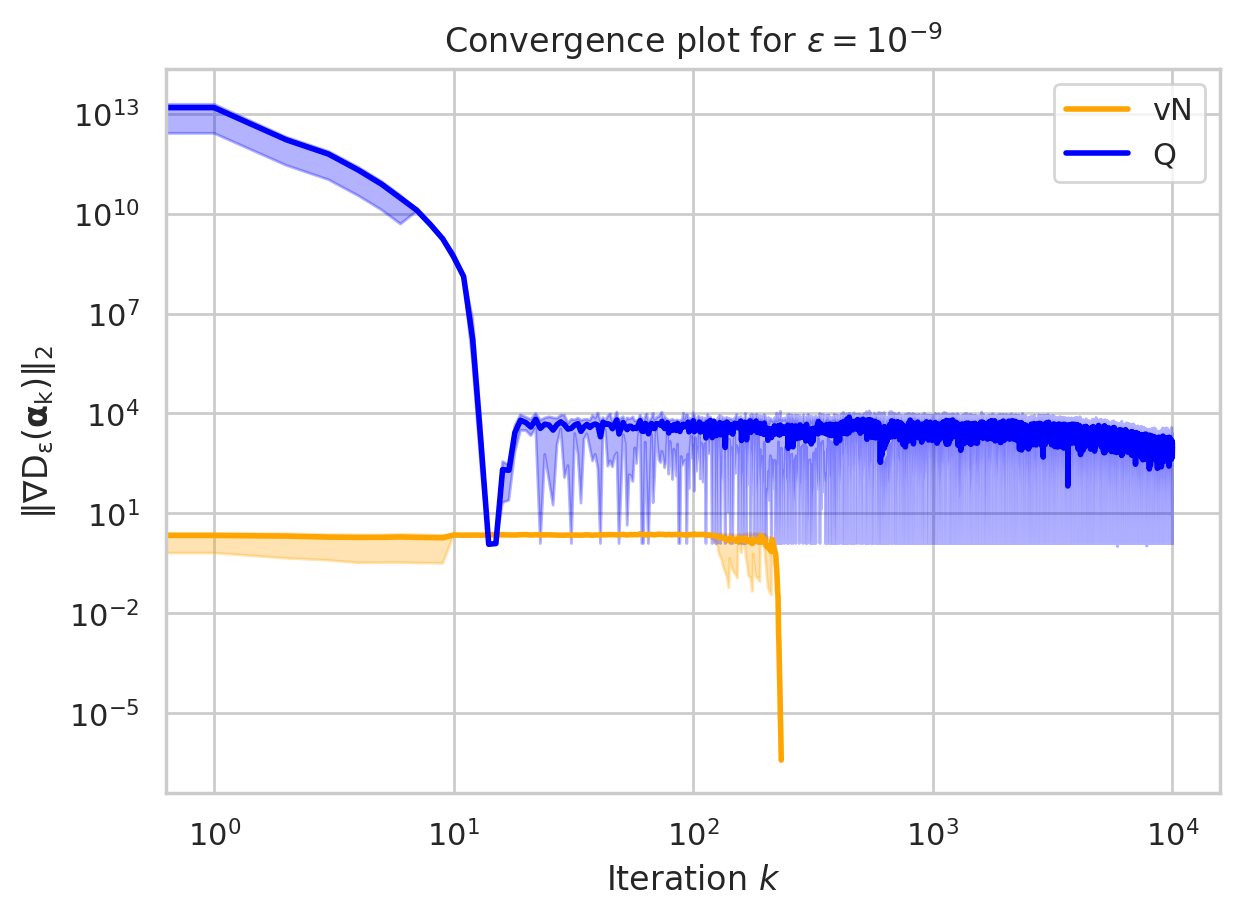}
    \caption{\textbf{QT3} for $\varepsilon = 10^{-9}$}
  \end{subfigure}

  \caption{\textbf{Quantum Tomography problem instance \textbf{QT3}.} Graph of the $L_2$-norm of the gradient of the dual functional $\dualf$ in Eq.~\eqref{intro:maindual} for the $\rho_3$ state; see~\eqref{eq:QT3_inst} in section~\ref{sec:numerics}. Panels~(a)--(f) display the iteration trajectories for different values of the regularization parameter $\varepsilon$ 
  % \in \{10^{4}, 10^{3}, 10^{1}, 10^{-2}, 10^{-6}, 10^{-9}\}$ 
  for both von Neumann (orange) and quadratic regularization (blue). }
  \label{fig:q_tom_2}
\end{figure}

\begin{table}[ht]
\centering
\footnotesize
\setlength{\tabcolsep}{2pt}

% ---------- Shared header row (only once) ----------
\begin{minipage}[t]{0.31\textwidth}\centering
\begin{tabular}{@{} L T Y N I A @{}}
\toprule Reg. & Tol. & Type & Time & Iters & Ach. \\ \midrule
\end{tabular}
\end{minipage}\hspace{0.02\textwidth}
\begin{minipage}[t]{0.31\textwidth}\centering
\begin{tabular}{@{} L T Y N I A @{}}
\toprule Reg. & Tol. & Type & Time & Iters & Ach. \\ \midrule
\end{tabular}
\end{minipage}\hspace{0.02\textwidth}
\begin{minipage}[t]{0.31\textwidth}\centering
\begin{tabular}{@{} L T Y N I A @{}}
\toprule Reg. & Tol. & Type & Time & Iters & Ach. \\ \midrule
\end{tabular}
\end{minipage}

% \vspace{0.1em}

% ---------- Row 1 ----------
\begin{minipage}[t]{0.31\textwidth}\centering
\begin{tabular}{@{} L T Y N I A @{}}
\multirow{4}{*}{\textbf{$10^{4}$}} & $10^{-3}$ & \texttt{vN} & 0.3 & 6 & Yes \\
 % & $10^{-4}$ & \texttt{vN} & 0.4 & 8 & Yes \\
 % & $10^{-5}$ & \texttt{vN} & 0.4 & 8 & Yes \\
 & $10^{-6}$ & \texttt{vN} & 0.4 & 9 & Yes \\
 & $10^{-3}$ & \texttt{Q} & 0.3 & 10 & Yes \\
 % & $10^{-4}$ & \texttt{Q} & 0.4 & 11 & Yes \\
 % & $10^{-5}$ & \texttt{Q} & 0.4 & 12 & Yes \\
 & $10^{-6}$ & \texttt{Q} & 0.4 & 12 & Yes \\
\bottomrule
\end{tabular}
\end{minipage}\hspace{0.02\textwidth}
\begin{minipage}[t]{0.31\textwidth}\centering
\begin{tabular}{@{} L T Y N I A @{}}
\multirow{4}{*}{\textbf{$10^{3}$}} & $10^{-3}$ & \texttt{vN} & 0.3 & 6 & Yes \\
 % & $10^{-4}$ & \texttt{vN} & 0.3 & 7 & Yes \\
 % & $10^{-5}$ & \texttt{vN} & 0.3 & 8 & Yes \\
 & $10^{-6}$ & \texttt{vN} & 0.4 & 9 & Yes \\
 & $10^{-3}$ & \texttt{Q} & 0.7 & 21 & Yes \\
 % & $10^{-4}$ & \texttt{Q} & 0.7 & 22 & Yes \\
 % & $10^{-5}$ & \texttt{Q} & 0.8 & 23 & Yes \\
 & $10^{-6}$ & \texttt{Q} & 0.8 & 24 & Yes \\
\bottomrule
\end{tabular}
\end{minipage}\hspace{0.02\textwidth}
\begin{minipage}[t]{0.31\textwidth}\centering
\begin{tabular}{@{} L T Y N I A @{}}
\multirow{4}{*}{\textbf{$10^{1}$}} & $10^{-3}$ & \texttt{vN} & 0.2 & 7 & Yes \\
 % & $10^{-4}$ & \texttt{vN} & 0.3 & 9 & Yes \\
 % & $10^{-5}$ & \texttt{vN} & 0.4 & 11 & Yes \\
 & $10^{-6}$ & \texttt{vN} & 0.4 & 12 & Yes \\
 & $10^{-3}$ & \texttt{Q} & 2.2 & 63 & Yes \\
 % & $10^{-4}$ & \texttt{Q} & 2.2 & 64 & Yes \\
 % & $10^{-5}$ & \texttt{Q} & 2.3 & 65 & Yes \\
 & $10^{-6}$ & \texttt{Q} & 2.3 & 66 & Yes \\
\bottomrule
\end{tabular}
\end{minipage}

\vspace{0.3em}

% ---------- Row 2 ----------
\begin{minipage}[t]{0.31\textwidth}\centering
\begin{tabular}{@{} L T Y N I A @{}}
\multirow{4}{*}{\textbf{$10^{-2}$}} & $10^{-3}$ & \texttt{vN} & 1.3 & 48 & Yes \\
 % & $10^{-4}$ & \texttt{vN} & 1.4 & 49 & Yes \\
 % & $10^{-5}$ & \texttt{vN} & 1.4 & 51 & Yes \\
 & $10^{-6}$ & \texttt{vN} & 1.5 & 52 & Yes \\
 & $10^{-3}$ & \texttt{Q} & 10.7 & 321 & Yes \\
 % & $10^{-4}$ & \texttt{Q} & 10.8 & 322 & Yes \\
 % & $10^{-5}$ & \texttt{Q} & 10.8 & 323 & Yes \\
 & $10^{-6}$ & \texttt{Q} & 10.8 & 324 & Yes \\
\bottomrule
\end{tabular}
\end{minipage}\hspace{0.02\textwidth}
\begin{minipage}[t]{0.31\textwidth}\centering
\begin{tabular}{@{} L T Y N I A @{}}
\multirow{4}{*}{\textbf{$10^{-6}$}} & $10^{-3}$ & \texttt{vN} & 3.7 & 132 & Yes \\
 % & $10^{-4}$ & \texttt{vN} & 3.7 & 133 & Yes \\
 % & $10^{-5}$ & \texttt{vN} & 3.7 & 134 & Yes \\
 & $10^{-6}$ & \texttt{vN} & 3.8 & 136 & Yes \\
 & $10^{-3}$ & \texttt{Q} & 233.6 & 7065 & Yes \\
 % & $10^{-4}$ & \texttt{Q} & 233.6 & 7066 & Yes \\
 % & $10^{-5}$ & \texttt{Q} & 233.7 & 7068 & Yes \\
 & $10^{-6}$ & \texttt{Q} & 233.7 & 7068 & Yes \\
\bottomrule
\end{tabular}
\end{minipage}\hspace{0.02\textwidth}
\begin{minipage}[t]{0.31\textwidth}\centering
\begin{tabular}{@{} L T Y N I A @{}}
\multirow{4}{*}{\textbf{$10^{-9}$}} & $10^{-3}$ & \texttt{vN} & 5.7 & 209 & Yes \\
 % & $10^{-4}$ & \texttt{vN} & 5.8 & 210 & Yes \\
 % & $10^{-5}$ & \texttt{vN} & 5.8 & 211 & Yes \\
 & $10^{-6}$ & \texttt{vN} & 5.9 & 213 & Yes \\
 & $10^{-3}$ & \texttt{Q} & 319.2 & 10000 & No \\
 % & $10^{-4}$ & \texttt{Q} & 319.2 & 10000 & No \\
 % & $10^{-5}$ & \texttt{Q} & 319.2 & 10000 & No \\
 & $10^{-6}$ & \texttt{Q} & 319.2 & 10000 & No \\
\bottomrule
\end{tabular}
\end{minipage}

% \caption{
% Performance table for the \textbf{QT3} problem instance.
% \textbf{Reg.} - regularization parameter; \textbf{Type} - regularization used. \texttt{vN} (entropy) or \texttt{Q} (quadratic);
% \textbf{Time} - seconds; \textbf{Ach.} - if reached tolerance.
% As the regularization parameter decreases from $10^{4}$ to $10^{-9}$, both runtime and iteration count grow substantially for all tolerances.
% Large~$\varepsilon$ yields fast and stable convergence (on the order of a few iterations), while small~$\varepsilon$ dramatically slows the algorithm.
% In particular, quadratic regularization becomes progressively harder to optimize for $\varepsilon \leq 10^{-6}$ and fails to achieve the target tolerance within the $10^{4}$-iteration budget, whereas von Neumann entropy remains reliably convergent across all tested regimes.
% }

% \caption{
% Table reports the performance results for the \textbf{QT3} problem instance. Each block corresponds to a fixed value of the regularization parameter (\textbf{Reg.}), while the rows within each block compare the two choices of regularizer (\textbf{Type}) -- von Neumann entropy (\texttt{vN}, $\varphi(z) = z \log z$) and quadratic regularization (\texttt{Q}, $\varphi(z) = \tfrac{1}{2} z^2$) -- under increasingly accuracy tolerances. For each configuration, the table reports the average runtime (\textbf{Time}, in seconds), the number of iterations required for convergence (\textbf{Iter.}), and whether the prescribed tolerance was met (\textbf{Ach.}). 
% }
\caption{
Performance results for the \textbf{QT3} problem instance. Each block corresponds to a fixed regularization parameter (\textbf{Reg.}) and compares von Neumann entropy (\texttt{vN}, $\varphi(z)=z\log z$) and quadratic (\texttt{Q}, $\varphi(z)=\tfrac12 z^2$) regularizers across decreasing tolerances. Reported are runtime (\textbf{Time}, s), iteration count (\textbf{Iter.}), and tolerance attainment (\textbf{Ach.}).
}
\label{tab:q_tom_2}
\end{table}

All the three experiments demonstrate identical behaviour.
As the regularization parameter decreases from $10^{4}$ to $10^{-9}$, both runtime and iteration count grow substantially for all tolerances.
Large~$\varepsilon$ yields fast and stable convergence (on the order of a few iterations), while small~$\varepsilon$ dramatically slows the algorithm. 
In particular, quadratic regularization becomes
progressively harder to optimize for $\varepsilon \leq 10^{-6}$ and fails to achieve the target tolerance within the $10000$-iteration budget, whereas von Neumann entropy remains reliably convergent across all tested regimes.

\FloatBarrier

\subsection{Quantum Optimal Transport}
\label{subsec:qot}

Let $\cH$ be a $n$-dimensional Hilbert space, and let $\rho, \sigma$ be density matrices on $\cH$, i.e. positive semidefinite matrices with unit trace.
Next, let $H$ denote the system Hamiltonian, which is self-adjoint matrix on $\cH \otimes \cH$. 
The Quantum Optimal Transport (QOT) problem~\cite{CalGolPau18,ColeEckFriZyc-MPAG-23,DPaTre19,DPaTre2023} is defined by
\begin{equation}
    \label{eq:qot_primal}
    \min \left\{
        \Tr[H \pi] \ : \ \pi \in \rmH_{\geq}(\cH \otimes\cH), \,  
        \Tr_2[\pi] = \rho \text{ and }
        \Tr_1[\pi] = \sigma
    \right\},
\end{equation}
where $\Tr_1$ ($\Tr_2$) %{\color{blue}[LP: I thought of swapping $\Tr_1$ and $\Tr_2$ above, for coherence with our paper -- but incoherently with chemistry/quantum mechanics, where the index denotes which variable is left out. Not sure, for now i leave it like this]}, 
 denotes partial trace with respect to the first (second) subsystem, i.e., defined by the relation
\begin{align}
    \Tr[(\Id \otimes U) \pi] 
        =
    \Tr[U \Tr_1[\pi]]
        \, , \qquad 
    \forall U \in \tH(\cH)
        \, ,
\end{align}
and analogously for $\Tr_2$ replacing $\Id \otimes U$ with $U \otimes \Id$. 

For more details and a historical overview on this topic, we refer to Section \ref{sec:intro}.

It turns out that QOT problem~\eqref{eq:qot_primal} is an instance of semidefinite programming~\cite{ColeEckFriZyc-MPAG-23}, as it is demonstrated below.
Let $\{\ket{i}\}_{i=1}^n$ be an orthonormal basis on $\cH$, and define the following operators
\begin{equation*}
\begin{gathered}
    E_{ij} = \ket{i}\bra{j}, \quad
    G_{ij} = \frac{1}{2}(E_{ij} + E_{ji}), \quad
    H_{ij} = \frac{\imag}{2}(E_{ij} - E_{ji}), \quad
    \forall \, 1\leq i \leq j \leq n,
\end{gathered}
\end{equation*}
where $\imag = \sqrt{-1}$. 
Note, that $G_{ij}$ and $H_{i j}$ are Hermitian and form a basis of the $n \times n$-dimensional space of all Hermitian matrices over $\cH$, so we can use them to reformulate the constraints.
We use the notation $\Tr_2[\pi] = [p^1_{ij}]_{i, j =1}^n$, and $\Tr_1[\pi] = [p^2_{ij}]_{i, j=1}^n$ to indicate the matrix representation of $\Tr_2[\pi]$ and $\Tr_1[\pi]$
with respect to the standard basis $\{E_{ij}\}_{i,j=1}^n$. 
Then, for all $1 \leq i \leq j \leq n$, we have
\begin{align*}
    \Tr[(G_{ij} \otimes \Id) \pi]
    &= \Tr[G_{ij} \Tr_2[\pi]]
    = \frac{p^1_{ij} + p^1_{ji}}{2}
    \quad 1 \leq i \leq j \leq n,
    \\
    \Tr[(H_{ij} \otimes \Id) \pi]
    &= \Tr[H_{ij} \Tr_2[\pi]]
    = \imag\frac{p^1_{ij} - p^1_{ji}}{2}
    \quad 1 \leq i < j \leq n.
\end{align*}
Equivalently, for the second partial trace
\begin{align*}
    \Tr[(\Id \otimes G_{ij}) \pi]
    &= \Tr[G_{ij} \Tr_1[\pi]]
    = \frac{p^2_{ij} + p^2_{ji}}{2}
    \quad 1 \leq i \leq j \leq n,
    \\
    \Tr[(\Id \otimes H_{ij}) \pi]
    &= \Tr[H_{ij} \Tr_1[\pi]]
    = \imag\frac{p^2_{ij} - p^2_{ji}}{2}
    \quad 1 \leq i < j \leq n.
\end{align*}
In other words, the partial trace constraints in the variational problem in~\eqref{eq:qot_primal} can be equivalently written as $M = 2 n^2$ equations having the following form of the semi-definite programming~\eqref{intro:main}
\begin{align*}
    \Tr[(G_{ij} \otimes \Id) \pi] = \Re(\rho_{ij}),\
    \Tr[(\Id \otimes G_{ij}) \pi] = \Re(\sigma_{ij})
    \quad 1 \leq i \leq j \leq n,
    \\
    \Tr[(H_{ij} \otimes \Id) \pi] = \Im(\rho_{ij}), \
    \Tr[(\Id \otimes H_{ij}) \pi] = \Im(\sigma_{ij})
    \quad 1 \leq i < j \leq n.
\end{align*}

The dual problem of~\eqref{eq:qot_primal} is given by
\begin{align}
    \label{eq:qot_dual}
    \sup \left\{
        \Tr[U \rho] + \Tr[V \sigma] \ : \ H - U \oplus V \geq 0
    \right\},
\end{align}
where $U \oplus V = U \otimes \Id + \Id \otimes V$ denotes the Kronecker sum.
Under the condition that $\rho, \sigma > 0$, Theorem~\ref{thm:duality_zero} guarantees strong duality and the existence of the maximizer for~\eqref{eq:qot_dual}.

Below, we illustrate the impact of regularization on several important theoretical QOT instances.

\noindent
\textbf{Numerical results.} Below, we present two Quantum Optimal Transport instances to demonstrate the regularization approach.

\paragraph{Quantum Wasserstein Distance (QWD).}
\label{sec:QC}
The first instance arises in~\cite{CalGolPau18}.
Initially, the setting is given in the infinite-dimensional space, and further we give a finite-dimensional discretization, which reduces the problem to the form suitable for the numerical tests.

% Consider $N = 2$ identical Hilbert spaces $\cH=L^2(\mathbb{R}^m)$, where $m \in \mathbb{N}$. Let $\{ X_j \}_{j \in [m]}$ and $\{ P_j \}_{j \in [m]}$ be the position and momentum operators, respectively. These act on functions $\psi \in L^2(\mathbb{R}^m)$ as
% \begin{align*}
%     (X_j \psi)(x) = x_j \psi(x), \, \text{ and } \, 
%     (P_j \psi)(x) = -\imag \frac{\partial}{\partial x_j} \psi(x), \quad \forall\, j \in [m],\,  x = (x_1, \dots, x_m) \in \mathbb{R}^m.
% \end{align*}
% The Hamiltonian $\textbf{H}$ is the bounded from below operator defined through the quadrature operators as
% \begin{align}\label{eq:qwass_hamilt}
%     \textbf{H} 
%     = \sum_{i = 1}^m \left(
%         X_i \otimes \Id - \Id \otimes X_i
%     \right)^2
%     + \sum_{i = 1}^m \left(
%         P_i \otimes \Id - \Id \otimes P_i
%     \right)^2.
% \end{align}

We start by considering the Hilbert space $L^2(\mathbb{R})$. Let $X$ and $P$ be the position and momentum operators, respectively. These are unbounded self-adjoint operators on $L^2(\mathbb{R})$ defined (on smooth functions) as
\begin{align*}
    (X \psi)(x) := x \psi(x), \, \text{ and } \, 
    (P \psi)(x) := -\imag \frac{\partial}{\partial x} \psi(x), \quad \forall x \in \mathbb{R}.
\end{align*}
The Hamiltonian $\textbf{H}$ is the nonnegative (hence, self-adjoint) unbounded operator on $L^2(\mathbb{R}) \otimes L^2(\mathbb{R})$ defined through the quadrature operators $X$ and $P$ as
\begin{align}\label{eq:qwass_hamilt}
    \textbf{H} 
    = \left(
        X \otimes \Id - \Id \otimes X
    \right)^2
    + \left(
        P \otimes \Id - \Id \otimes P
    \right)^2.
\end{align}

Consider $n$-dimensional Hilbert space $\cH$ with orthonormal basis $\ket{1}, \dots, \ket{n}$.
As a basis of $\cH \otimes \cH$, we use vectors formed by tensor products of single Hilbert space $\cH$, which are $\ket{v_{i j}} = \ket{i} \otimes \ket{j},\ 1\leq i, j \leq n$.
Using this basis, we build a finite-dimensional Hamiltonian $H \in \rmH(\cH \otimes \cH)$ which is given by
% The finite-dimensional representation of the Hamiltonian is given by
\begin{align}
\label{eq:qwass_ham}
    H_{(ij)(kl)}
    = \bra{v_{i j}} \textbf{H} \ket{v_{k l}}, \quad 1 \leq i, j, k, l \leq n.
\end{align}
% \nat{so what was $H_{ho}$ for? Only to define the special basis?}YES
% resulting in a Hermitian matrix $H \in \rmH(\cH \otimes \cH)$ we further use to define the problem~\eqref{eq:qot_primal}.
Further, we use it to define the problem~\eqref{eq:qot_primal}.

Next, we need to define the density matrices $\rho$ and $\sigma$ on $\cH$.
We use \emph{Gaussian states}, which is an important class of states in continuous-variable quantum systems.
They are fully characterized by the first and second statistical moments of the quadrature operators $X$ and $P$, 
and their Wigner function has a Gaussian profile in phase space.
For a reference on Gaussian states and related theory, one may see~\cite{brask-arxiv-2022}.
% {\color{blue} [LP: add a reference to the topic, just for completeness?]}

A Gaussian state $\rho_{m,V}$ on $L^2(\R)$ is a quantum state that can be fully described by a mean vector
\[
\bm m = \begin{pmatrix} \langle R_1 \rangle_{\rho_{m,V}} \\ \langle R_2 \rangle_{\rho_{m,V}} \end{pmatrix} \in \R^2,
\]
and a covariance matrix 
\[
V_{ij} = \tfrac{1}{2}\Tr\!\bigl[\rho_{m,V} \{R_i - \langle R_i \rangle_{\rho_{m,V}},\, R_j - \langle R_j \rangle_{\rho_{m,V}}\}\bigr], 
\quad i, j =1,2.
\]
$V$ is a real, symmetric, positive-definite $2\times 2$ matrix satisfying the uncertainty principle $V + \tfrac{i}{2}\Omega \geq 0$, where $\Omega = \begin{psmallmatrix} 0 & 1 \\ -1 & 0 \end{psmallmatrix}$.
Above, $R_1 = X,\ R_2 = P$, $\{\cdot, \cdot\}$ denotes anticommutator, i.e. the operator defined as $\{A, B\} = AB + BA$ for some operators $A, B$ on $L^2(\R)$, and $\langle R_i \rangle_{\rho_{m,V}} = \operatorname{Tr}[R_i \rho_{m,V}],\ i =1,2$.
In order to obtain quantum state $\pi$ on $\cH$ from the Gaussian state $\rho_{m, V}$, the quantum state on $L^2(\R)$, we discretize $\rho_{m, V}$ using the basis of $\cH$ analogously to the Hamiltonian $H$.
That is, $\pi$ is given by
\begin{align}
    \pi
    = \frac{P}{\operatorname{Tr}[P]}, \quad
    P_{i j} = \bra{i}\rho_{m, V} \ket{j},\ i, j = 1, \dots, n.
\end{align}

For the following test, we fix $n = 50$.
To construct marginals $\rho$ and $\sigma$, we discretize as above two Gaussian states with following parameters
\begin{align*}
    \bm m_\rho = \begin{pmatrix}
        0 \\ 0
    \end{pmatrix}, \
    V_\rho = \begin{pmatrix}
        3 & 1 \\ 1 & 3
    \end{pmatrix}, \
    \bm m_\sigma = \begin{pmatrix}
        1 \\ 1
    \end{pmatrix}, \
    V_\sigma = \begin{pmatrix}
        10 & 2 \\ 2 & 10
    \end{pmatrix}.
\end{align*}
Finally, the Hamiltonian $H$ is constructed as in the~\eqref{eq:qwass_ham}.

\begin{figure}[!htb]
  \centering

  % --- Row ---
  \begin{subfigure}{0.27\textwidth}
    \centering
    \includegraphics[width=\linewidth,height=0.26\textheight,keepaspectratio]{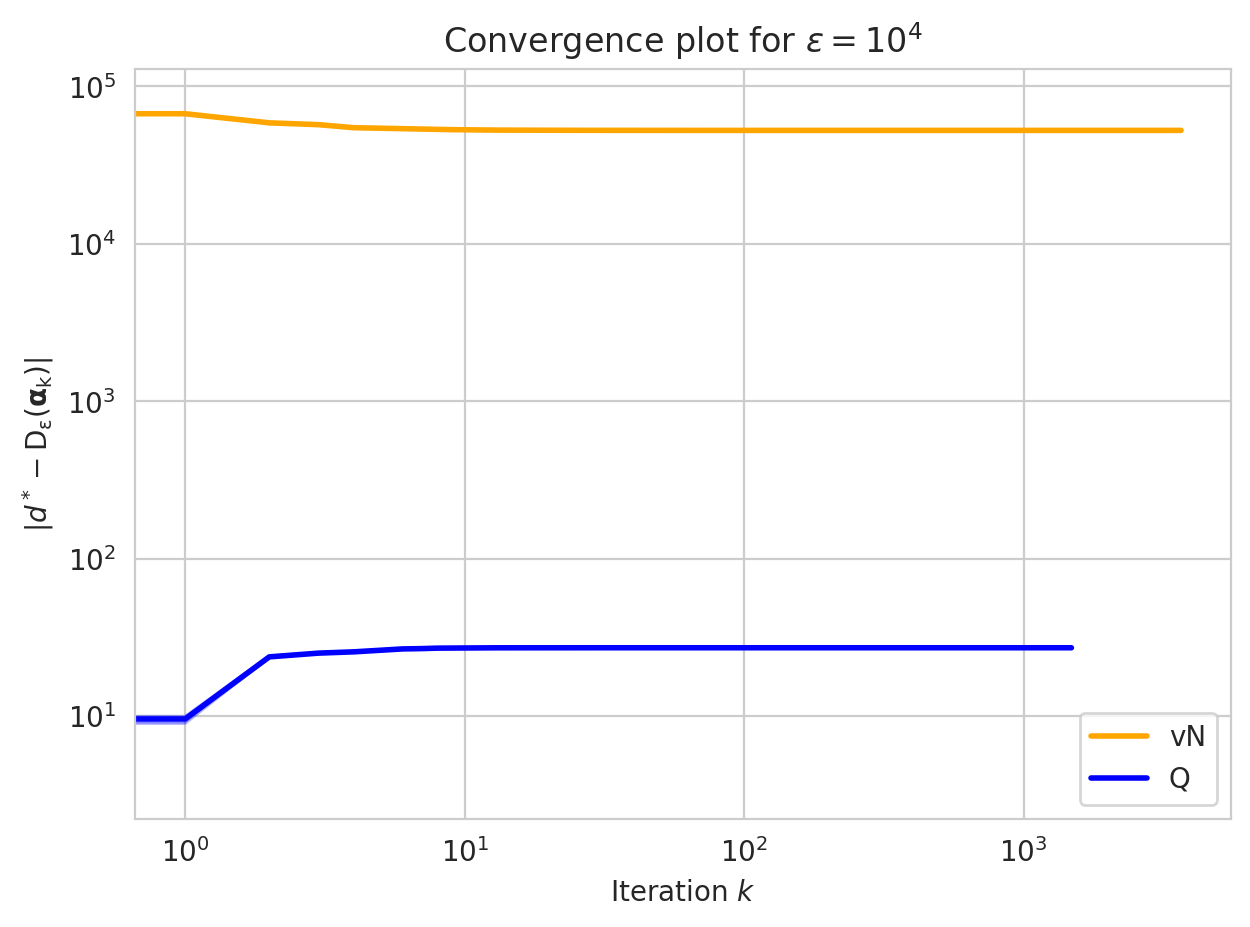}
    \caption{\textbf{QWD} for $\varepsilon = 10^{4}$}
  \end{subfigure}\hfill
  \begin{subfigure}{0.27\textwidth}
    \centering
    \includegraphics[width=\linewidth,height=0.26\textheight,keepaspectratio]{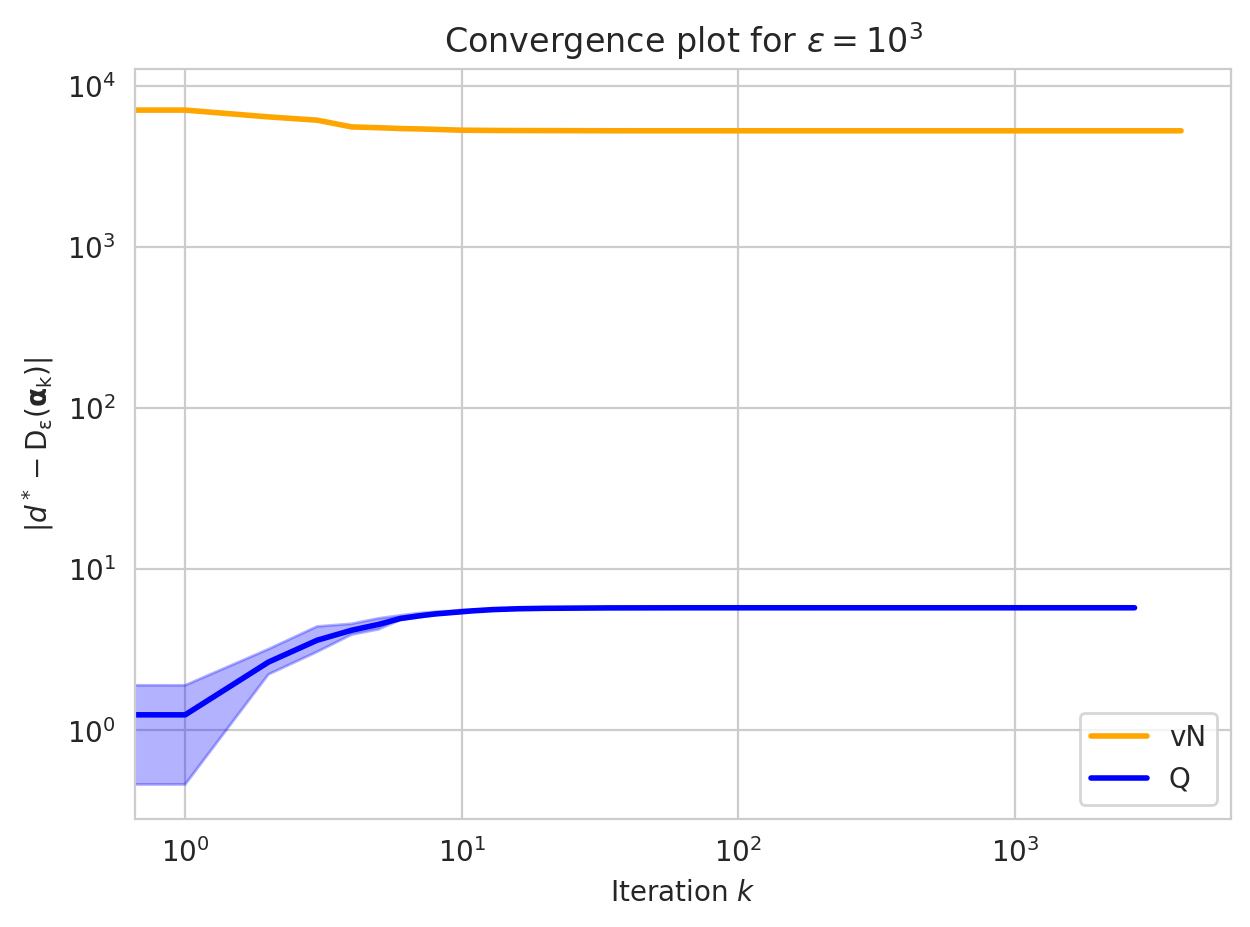}
    \caption{\textbf{QWD} for $\varepsilon = 10^{3}$}
  \end{subfigure}\hfill
  \begin{subfigure}{0.27\textwidth}
    \centering
    \includegraphics[width=\linewidth,height=0.26\textheight,keepaspectratio]{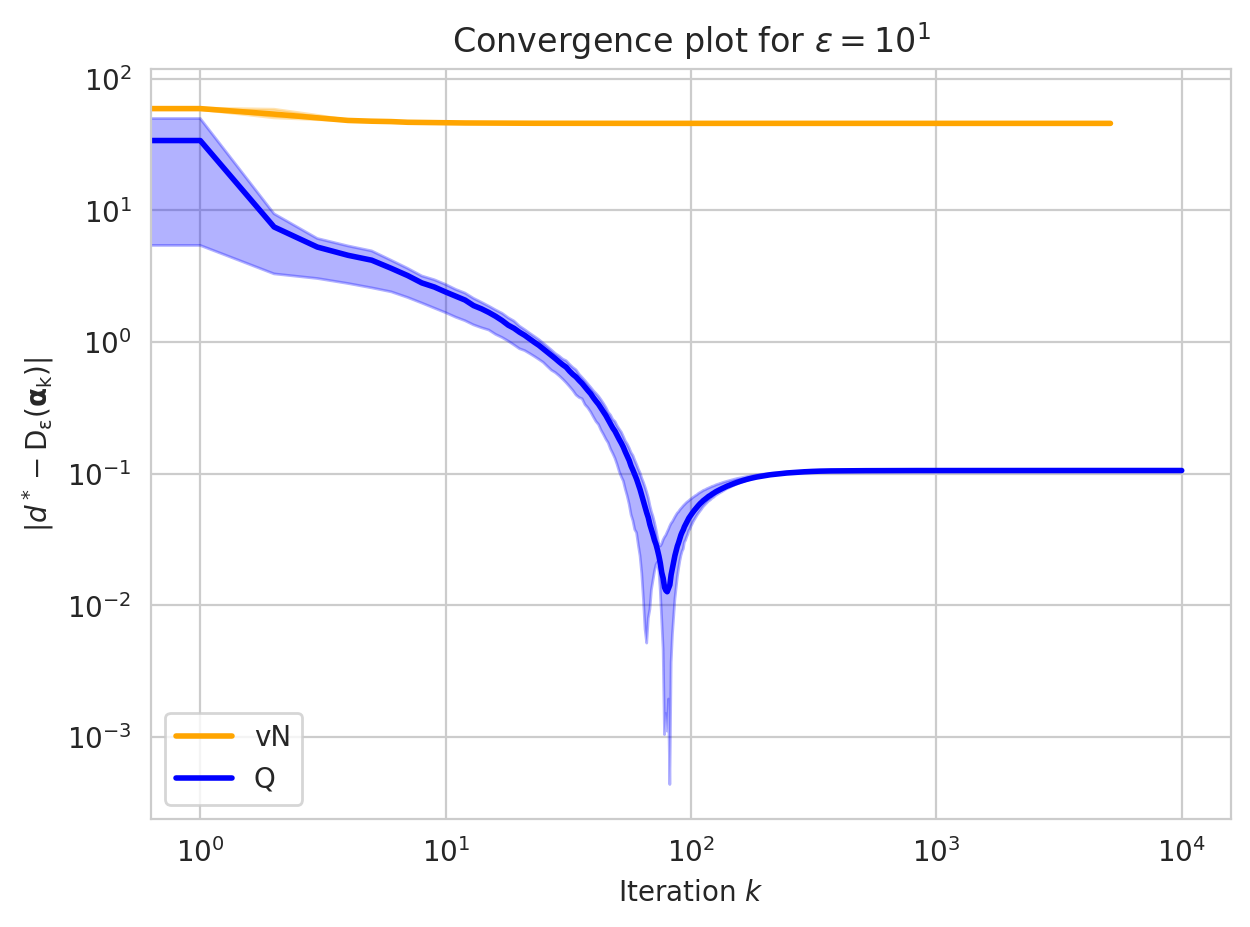}
    \caption{\textbf{QWD} for $\varepsilon = 10^{1}$}
  \end{subfigure}

  % \caption{Margin plots for the \textbf{QWD} problem instance with larger regularization values.
  % See the beginning of section~\ref{sec:numerics} for detailed explanation.}

  % \caption{\textbf{Quantum Optimal Transport problem instance \textbf{QWD}.} Graph of the absolute difference between the dual functional $\dualf$ in Eq.~\eqref{intro:maindual} and true optimal value. Panels~(a)--(f) display the iteration trajectories for different values of the regularization parameter $\varepsilon \in \{10^{4}, 10^{3}, 10^{1}\}$ for both von Neumann (orange) and quadratic regularization (blue). Increasing the regularization improves convergence of the algorithm at the price of higher bias in objective due to the regularization. In all cases, quaratic regularization gives less biased objective than von Neumann entropy.}
  \caption{\textbf{Quantum Optimal Transport problem instance \textbf{QWD}.} Graph of the absolute difference between the dual functional at iteration $k$ $\dualf(\bm \alpha_k)$ in Eq.~\eqref{intro:maindual} and true optimal value $d^*$. Panels~(a)--(c) display the iteration trajectories for different values of the regularization parameter $\varepsilon 
  \in \{10^{4}, 10^{3}, 10^{1}\}$ 
  for both von Neumann (orange) and quadratic regularization (blue).}
  \label{fig:qot_0a}
\end{figure}

\begin{figure}[!htb]
  \centering

  % --- Row ---
  \begin{subfigure}{0.27\textwidth}
    \centering
    \includegraphics[width=\linewidth,height=0.26\textheight,keepaspectratio]{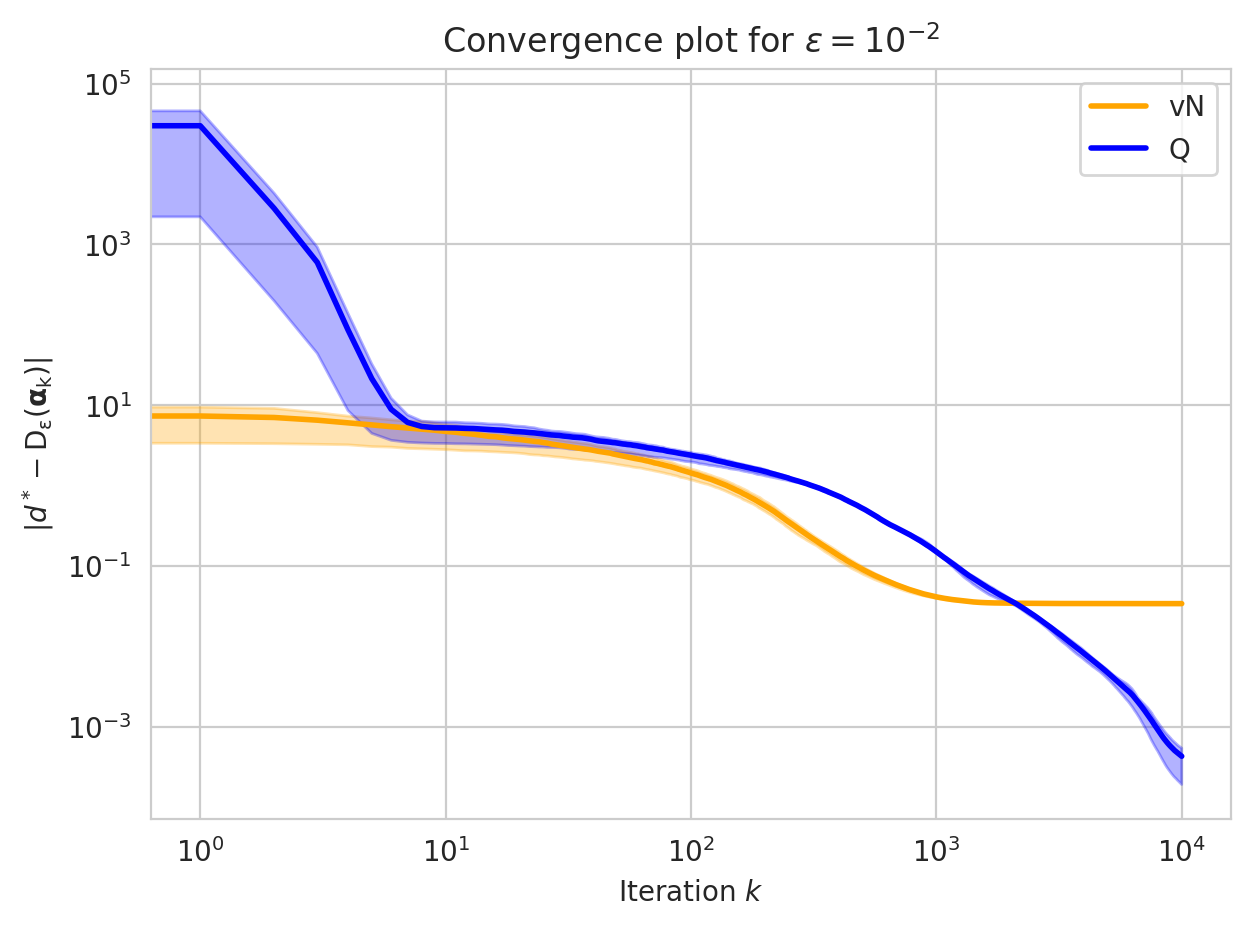}
    \caption{\textbf{QWD} for $\varepsilon = 10^{-2}$}
  \end{subfigure}\hfill
  \begin{subfigure}{0.27\textwidth}
    \centering
    \includegraphics[width=\linewidth,height=0.26\textheight,keepaspectratio]{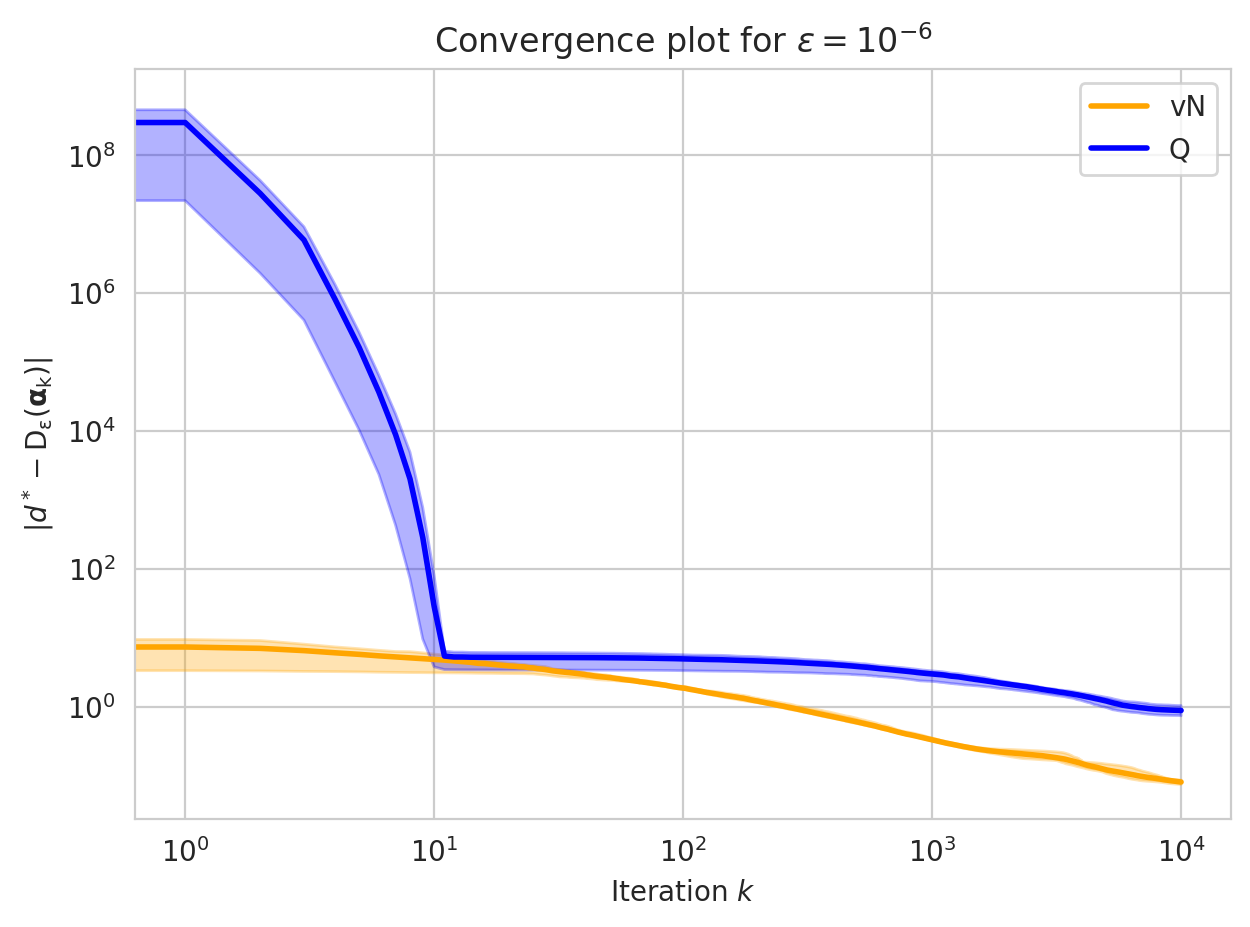}
    \caption{\textbf{QWD} for $\varepsilon = 10^{-6}$}
  \end{subfigure}\hfill
  \begin{subfigure}{0.27\textwidth}
    \centering
    \includegraphics[width=\linewidth,height=0.26\textheight,keepaspectratio]{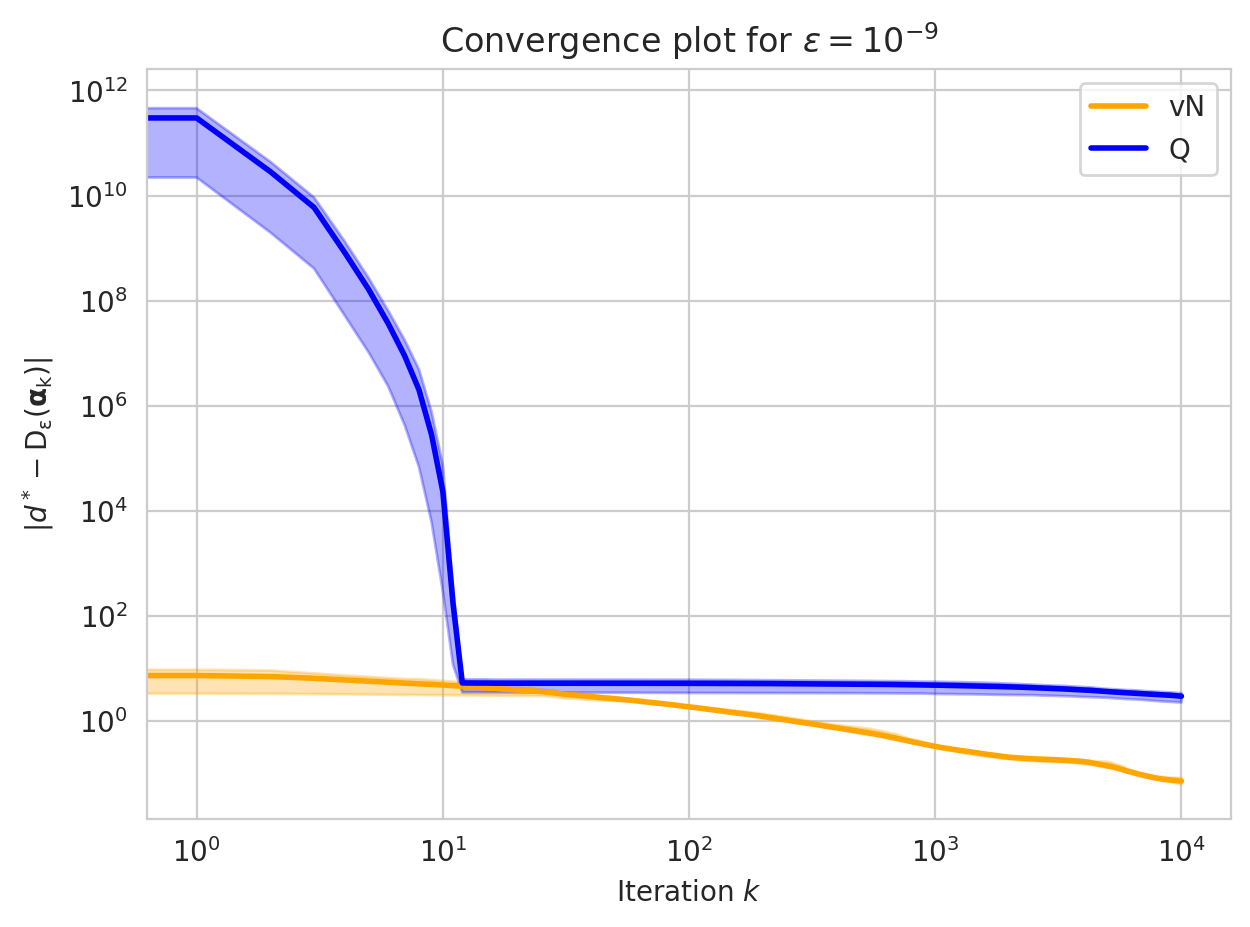}
    \caption{\textbf{QWD} for $\varepsilon = 10^{-9}$}
  \end{subfigure}

  % \caption{Margin plots for the \textbf{QWD} problem instance with smaller regularization values.
  % See the beginning of section~\ref{sec:numerics} for detailed explanation.}
  % \caption{\textbf{Quantum Optimal Transport problem instance \textbf{QWD}.} Graph of the absolute difference between the dual functional $\dualf$ in Eq.~\eqref{intro:maindual} and true optimal value. Panels~(a)--(f) display the iteration trajectories for different values of the regularization parameter $\varepsilon \in \{10^{-2}, 10^{-6}, 10^{-9}\}$ for both von Neumann (orange) and quadratic regularization (blue). Small regularization returns more accurate solution at the price of harder numerical problem. For both regularization types L-BFGS terminated due to iteration limit.}
  \caption{\textbf{Quantum Optimal Transport problem instance \textbf{QWD}.} Graph of the absolute difference between the dual functional at iteration $k$ $\dualf(\bm \alpha_k)$ in Eq.~\eqref{intro:maindual} and true optimal value $d^*$. Panels~(a)--(c) display the iteration trajectories for different values of the regularization parameter $\varepsilon 
  \in \{10^{-2}, 10^{-6}, 10^{-9}\}$ 
  for both von Neumann (orange) and quadratic regularization (blue).}
  \label{fig:qot_0b}
\end{figure}

\begin{table}[ht]
\centering
\footnotesize
\setlength{\tabcolsep}{2pt}

% ---------- Shared header row (only once) ----------
\begin{minipage}[t]{0.31\textwidth}\centering
\begin{tabular}{@{} L T Y N I A @{}}
\toprule Reg. & Tol. & Type & Time & Iters & Ach. \\ \midrule
\end{tabular}
\end{minipage}\hspace{0.02\textwidth}
\begin{minipage}[t]{0.31\textwidth}\centering
\begin{tabular}{@{} L T Y N I A @{}}
\toprule Reg. & Tol. & Type & Time & Iters & Ach. \\ \midrule
\end{tabular}
\end{minipage}\hspace{0.02\textwidth}
\begin{minipage}[t]{0.31\textwidth}\centering
\begin{tabular}{@{} L T Y N I A @{}}
\toprule Reg. & Tol. & Type & Time & Iters & Ach. \\ \midrule
\end{tabular}
\end{minipage}

\vspace{0.1em}

% ---------- Row 1 ----------
\begin{minipage}[t]{0.31\textwidth}\centering
\begin{tabular}{@{} L T Y N I A @{}}
\multirow{4}{*}{\textbf{$10^{4}$}} & $10^{-3}$ & \texttt{vN} & 12.0 & 74 & Yes \\
 % & $10^{-4}$ & \texttt{vN} & 50.3 & 309 & Yes \\
 % & $10^{-5}$ & \texttt{vN} & 192.0 & 1180 & Yes \\
 & $10^{-6}$ & \texttt{vN} & 510.7 & 3139 & Yes \\
 & $10^{-3}$ & \texttt{Q} & 3.1 & 20 & Yes \\
 % & $10^{-4}$ & \texttt{Q} & 8.5 & 55 & Yes \\
 % & $10^{-5}$ & \texttt{Q} & 53.1 & 342 & Yes \\
 & $10^{-6}$ & \texttt{Q} & 230.3 & 1484 & Yes \\
\bottomrule
\end{tabular}
\end{minipage}\hspace{0.02\textwidth}
\begin{minipage}[t]{0.31\textwidth}\centering
\begin{tabular}{@{} L T Y N I A @{}}
\multirow{4}{*}{\textbf{$10^{3}$}} & $10^{-3}$ & \texttt{vN} & 12.2 & 76 & Yes \\
 % & $10^{-4}$ & \texttt{vN} & 46.2 & 288 & Yes \\
 % & $10^{-5}$ & \texttt{vN} & 177.7 & 1109 & Yes \\
 & $10^{-6}$ & \texttt{vN} & 639.8 & 3992 & Yes \\
 & $10^{-3}$ & \texttt{Q} & 10.2 & 65 & Yes \\
 % & $10^{-4}$ & \texttt{Q} & 27.6 & 176 & Yes \\
 % & $10^{-5}$ & \texttt{Q} & 104.6 & 667 & Yes \\
 & $10^{-6}$ & \texttt{Q} & 383.7 & 2446 & Yes \\
\bottomrule
\end{tabular}
\end{minipage}\hspace{0.02\textwidth}
\begin{minipage}[t]{0.31\textwidth}\centering
\begin{tabular}{@{} L T Y N I A @{}}
\multirow{4}{*}{\textbf{$10^{1}$}} & $10^{-3}$ & \texttt{vN} & 14.0 & 86 & Yes \\
 % & $10^{-4}$ & \texttt{vN} & 45.0 & 277 & Yes \\
 % & $10^{-5}$ & \texttt{vN} & 194.9 & 1200 & Yes \\
 & $10^{-6}$ & \texttt{vN} & 656.1 & 4040 & Yes \\
 & $10^{-3}$ & \texttt{Q} & 63.2 & 406 & Yes \\
 % & $10^{-4}$ & \texttt{Q} & 211.6 & 1359 & Yes \\
 % & $10^{-5}$ & \texttt{Q} & 795.2 & 5108 & Yes \\
 & $10^{-6}$ & \texttt{Q} & 1556.8 & 10000 & No \\
\bottomrule
\end{tabular}
\end{minipage}

\vspace{0.3em}

% ---------- Row 2 ----------
\begin{minipage}[t]{0.31\textwidth}\centering
\begin{tabular}{@{} L T Y N I A @{}}
\multirow{4}{*}{\textbf{$10^{-2}$}} & $10^{-3}$ & \texttt{vN} & 364.8 & 2296 & Yes \\
 % & $10^{-4}$ & \texttt{vN} & 1343.6 & 8457 & Yes \\
 % & $10^{-5}$ & \texttt{vN} & 1588.8 & 10000 & No \\
 & $10^{-6}$ & \texttt{vN} & 1588.8 & 10000 & No \\
 & $10^{-3}$ & \texttt{Q} & 1572.0 & 10000 & No \\
 % & $10^{-4}$ & \texttt{Q} & 1572.0 & 10000 & No \\
 % & $10^{-5}$ & \texttt{Q} & 1572.0 & 10000 & No \\
 & $10^{-6}$ & \texttt{Q} & 1572.0 & 10000 & No \\
\bottomrule
\end{tabular}
\end{minipage}\hspace{0.02\textwidth}
\begin{minipage}[t]{0.31\textwidth}\centering
\begin{tabular}{@{} L T Y N I A @{}}
\multirow{4}{*}{\textbf{$10^{-6}$}} & $10^{-3}$ & \texttt{vN} & 1623.3 & 10000 & No \\
 % & $10^{-4}$ & \texttt{vN} & 1623.3 & 10000 & No \\
 % & $10^{-5}$ & \texttt{vN} & 1623.3 & 10000 & No \\
 & $10^{-6}$ & \texttt{vN} & 1623.3 & 10000 & No \\
 & $10^{-3}$ & \texttt{Q} & 2100.2 & 10000 & No \\
 % & $10^{-4}$ & \texttt{Q} & 2100.2 & 10000 & No \\
 % & $10^{-5}$ & \texttt{Q} & 2100.2 & 10000 & No \\
 & $10^{-6}$ & \texttt{Q} & 2100.2 & 10000 & No \\
\bottomrule
\end{tabular}
\end{minipage}\hspace{0.02\textwidth}
\begin{minipage}[t]{0.31\textwidth}\centering
\begin{tabular}{@{} L T Y N I A @{}}
\multirow{4}{*}{\textbf{$10^{-9}$}} & $10^{-3}$ & \texttt{vN} & 1679.5 & 10000 & No \\
 % & $10^{-4}$ & \texttt{vN} & 1679.5 & 10000 & No \\
 % & $10^{-5}$ & \texttt{vN} & 1679.5 & 10000 & No \\
 & $10^{-6}$ & \texttt{vN} & 1679.5 & 10000 & No \\
 & $10^{-3}$ & \texttt{Q} & 2033.7 & 10000 & No \\
 % & $10^{-4}$ & \texttt{Q} & 2033.7 & 10000 & No \\
 % & $10^{-5}$ & \texttt{Q} & 2033.7 & 10000 & No \\
 & $10^{-6}$ & \texttt{Q} & 2033.7 & 10000 & No \\
\bottomrule
\end{tabular}
\end{minipage}

% \caption{Performance table for the \textbf{QWD} problem instance. \textbf{Reg.} - regularization parameter; \textbf{Type} - regularization used. \texttt{vN} (entropy) or \texttt{Q} (quadratic); \textbf{Time} - seconds; \textbf{Ach.} - if reached tolerance.}
% \caption{
% Performance table for the \textbf{QWD} problem instance.
% \textbf{Reg.} - regularization parameter; \textbf{Type} - regularization used (\texttt{vN} entropy or \texttt{Q} quadratic);
% \textbf{Time} - seconds; \textbf{Ach.} - whether the target tolerance was reached.
% Decreasing~$\varepsilon$ substantially increases the numerical difficulty of the dual optimization: both runtime and iteration count grow quickly, and for $\varepsilon \leq 10^{-2}$ the algorithm frequently reaches the iteration limit of $10^{4}$ without achieving the prescribed tolerance.
% Quadratic regularization becomes noticeably harder to optimize than von Neumann entropy as~$\varepsilon$ decreases, and small regularization parameter prevents both methods from converging within the available iteration budget.
% }
\caption{
Performance results for the \textbf{QWD} problem instance. Each block corresponds to a fixed regularization parameter (\textbf{Reg.}) and compares von Neumann entropy (\texttt{vN}, $\varphi(z)=z\log z$) and quadratic (\texttt{Q}, $\varphi(z)=\tfrac12 z^2$) regularizers across decreasing tolerances. Reported are runtime (\textbf{Time}, s), iteration count (\textbf{Iter.}), and tolerance attainment (\textbf{Ach.}).
}

\label{tab:qot_0}
\end{table}

On the Figure~\ref{fig:qot_0a}~-~\ref{fig:qot_0b}, we plot absolute difference between the dual value at iteration $k$ $\dualf(\bm \alpha_k)$ and the value $d^*$ of~\eqref{eq:qot_dual}.
Both axes are shown on a logarithmic scale.
These plots aim to demonstrate the bias that two different regularizations introduce.
On the Figure~\ref{fig:qot_0a} we can see the early termination due to convergence, whereas Figure~\ref{fig:qot_0b} demonstrates insufficient computational budget to reach optimal value yet giving more accurate estimation of $d^*$.
This suggests the trade-off: increasing the regularization accelerates convergence to the optimum, but introduces a larger bias relative to the unregularized problem.
Unlike to quantum tomography, QOT problem is not invariant w.r.t. regularization strength, and $\varepsilon$ plays crucial role both theoretically and practically. 
In particular, weaker regularization makes the optimization problem significantly more difficult to solve due to the diminishing strict concavity of the objective.

The data shown in the Table~\ref{tab:qot_0} describes the time and number of iterations needed for L-BFGS to reach fixed tolerance $\{10^{-3}, 10^{-6}\}$.
The Table confirms the conclusion suggesting that reaching the same tolerance becomes more costly for smaller $\varepsilon$.

\paragraph{Quantum Optimal Transport with Ising Hamiltonian (IM).}
Another fundamental class of physically motivated test instances arises from Ising-type Hamiltonians, which model spin--spin interactions in multi-qubit systems. 
Let each local subsystem be a qubit with a Hilbert space $\mathcal{H}_i = \mathbb{C}^2$, and let the total Hilbert space be a tensor product
$
\mathcal{H} = \bigotimes_{i=1}^{2 N} \mathcal{H}_i,
$
where $N$ denotes the numbers of qubits in each of the two interacting subsystems. 
The Hamiltonian of the two-dimensional Ising model is given by
\begin{align*}
    H_{\mathrm{Ising}}
    = - \sum_{i = 1}^{2 N - 1} J_{i, i + 1}\, Z_i Z_{i + 1}
      - \sum_{i = 1}^{2 N} h_i X_i,
\end{align*}
where the first term describes nearest-neighbor spin coupling with interaction strengths $J_{i, i + 1} \in \mathbb{R}$, and the second term encodes local transverse magnetic fields $h_i$. 
Here, $Z_i$ and $X_i$ denote Pauli matrices acting nontrivially on the $i$-th qubit and as the identity elsewhere.

For simplicity, we fix uniform couplings $J_{i, i + 1} = 1$ and homogeneous magnetic field $h_i = h$, considering a total of $2 N$ spins. 
The corresponding Hamiltonian matrix $H_{\mathrm{Ising}} \in \mathrm{H}(\mathbb{C}^{2^{2 N}})$ defines a Hermitian cost operator for the Quantum Optimal Transport (QOT) problem~\eqref{eq:qot_primal}. 
We compute its eigendecomposition
\[
H_{\mathrm{Ising}} = \sum_{k=1}^{2^{2 N}} \lambda_k \ket{\psi_k}\bra{\psi_k},
\]
where $\lambda_1 \leq \dots, \leq \lambda_{2^{2 N}}$,
and select the eigenstate corresponding to the first excited energy level, $\ket{\psi_1}$, to serve as a representative pure state of the system.

Given a partition of the system into two subsystems, 
% with Hilbert spaces $\mathcal{H}_1 = (\mathbb{C}^2)^{\otimes N}$ and $\mathcal{H}_2 = (\mathbb{C}^2)^{\otimes N_2}$, 
the corresponding marginals are defined as the reduced density matrices obtained by partial tracing
\begin{align*}
    \rho &= \Tr_2 \big[\ket{\psi_1}\bra{\psi_1}\big], &
    \sigma &= \Tr_1 \big[\ket{\psi_1}\bra{\psi_1}\big],
\end{align*}
$\rho, \sigma \in \rmH_\geq(\mathbb{C}^{2^N})$.
The pair $(\rho, \sigma)$ then defines the input data for the QOT problem~\eqref{eq:qot_primal} associated with the cost matrix $H_{\mathrm{Ising}}$.

In our tests, we fix $N = 5$, yielding two subsystems of five qubits each, and set $h = 0.5$. 

On the Figure~\ref{fig:qot_1a}~-~\ref{fig:qot_1b}, we plot absolute difference between the dual value at iteration $k$ $\dualf(\bm \alpha_k)$ and the value $d^*$ of~\eqref{eq:qot_dual}.
Both axes are shown on a logarithmic scale.
These plots aim to demonstrate the bias that two different regularizations introduce.
Similar to \textbf{QWD} instance, larger regularization leads to faster convergence, as we can see on the Figure~\ref{fig:qot_1a}.
And, Figure~\ref{fig:qot_1b} shows that small regularization improves accuracy but fails to reach the tolerance in the gradient.

Table~\ref{tab:qot_1} reports the runtime and iteration counts required by L-BFGS to reach the prescribed tolerances $10^{-3}$ and $10^{-6}$. The results confirm that attaining a fixed tolerance becomes increasingly costly as $\varepsilon$ decreases.

\begin{figure}[!htb]
  \centering

  % --- Row ---
  \begin{subfigure}{0.27\textwidth}
    \centering
    \includegraphics[width=\linewidth,height=0.26\textheight,keepaspectratio]{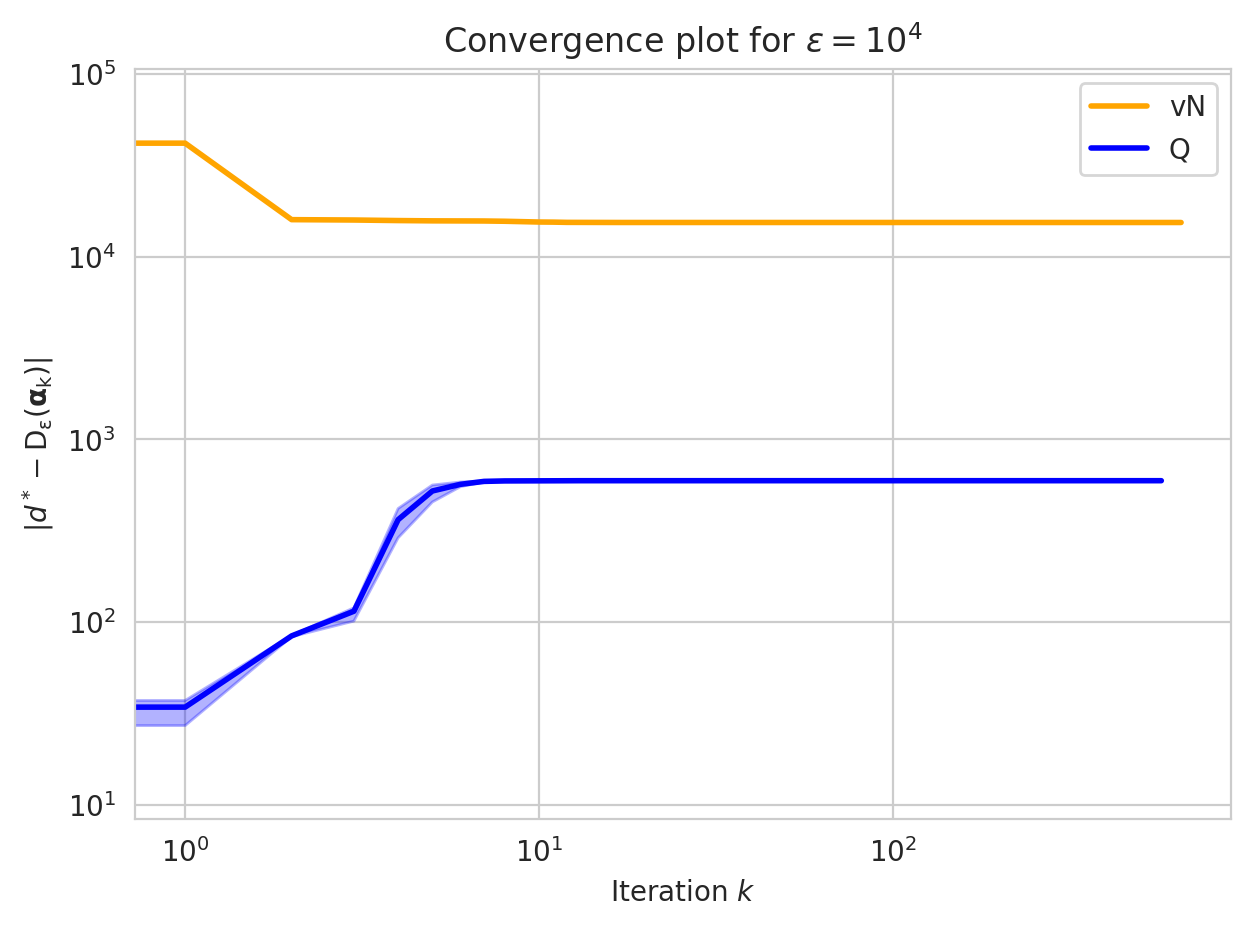}
    \caption{\textbf{IM} for $\varepsilon = 10^{4}$}
  \end{subfigure}\hfill
  \begin{subfigure}{0.27\textwidth}
    \centering
    \includegraphics[width=\linewidth,height=0.26\textheight,keepaspectratio]{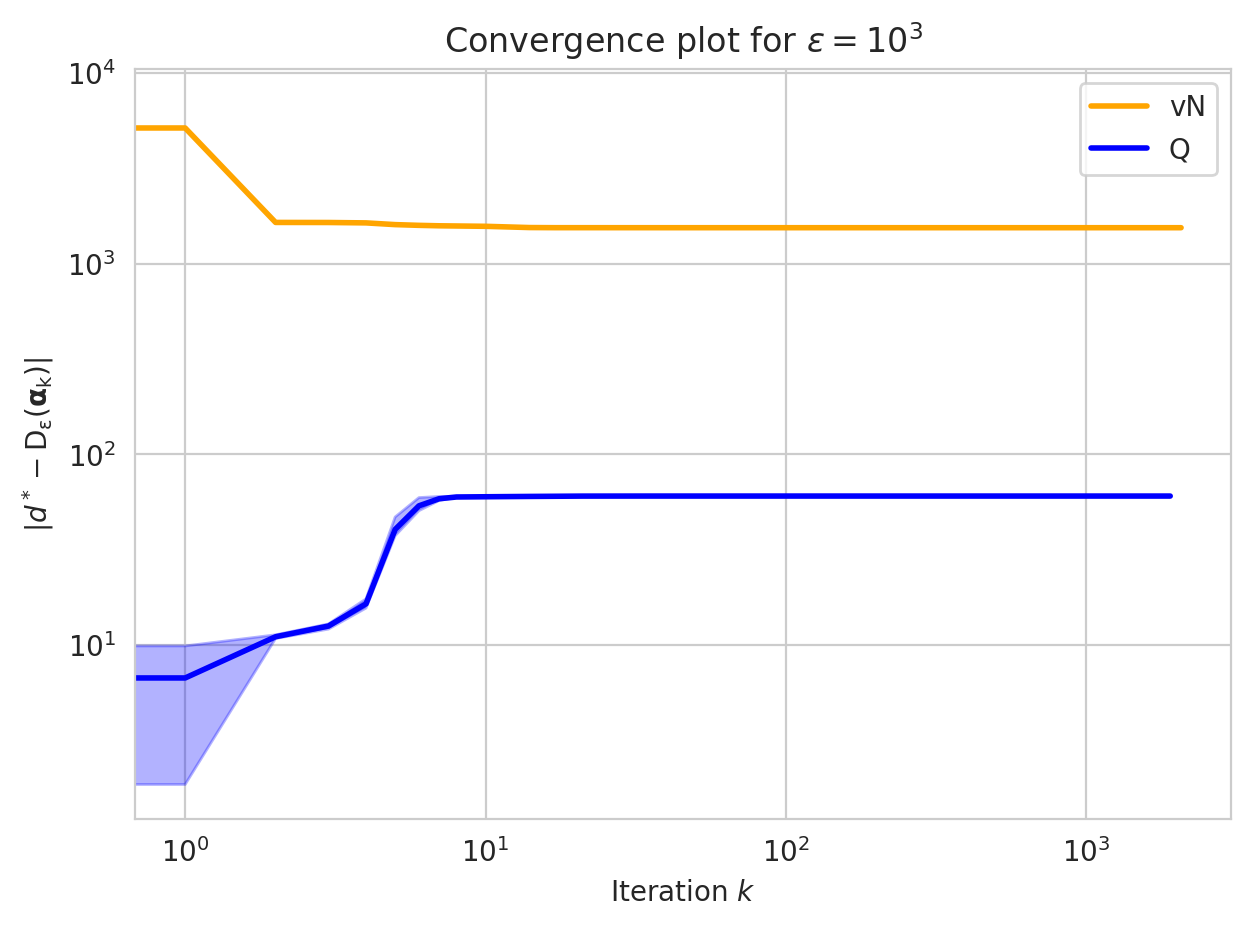}
    \caption{\textbf{IM} for $\varepsilon = 10^{3}$}
  \end{subfigure}\hfill
  \begin{subfigure}{0.27\textwidth}
    \centering
    \includegraphics[width=\linewidth,height=0.26\textheight,keepaspectratio]{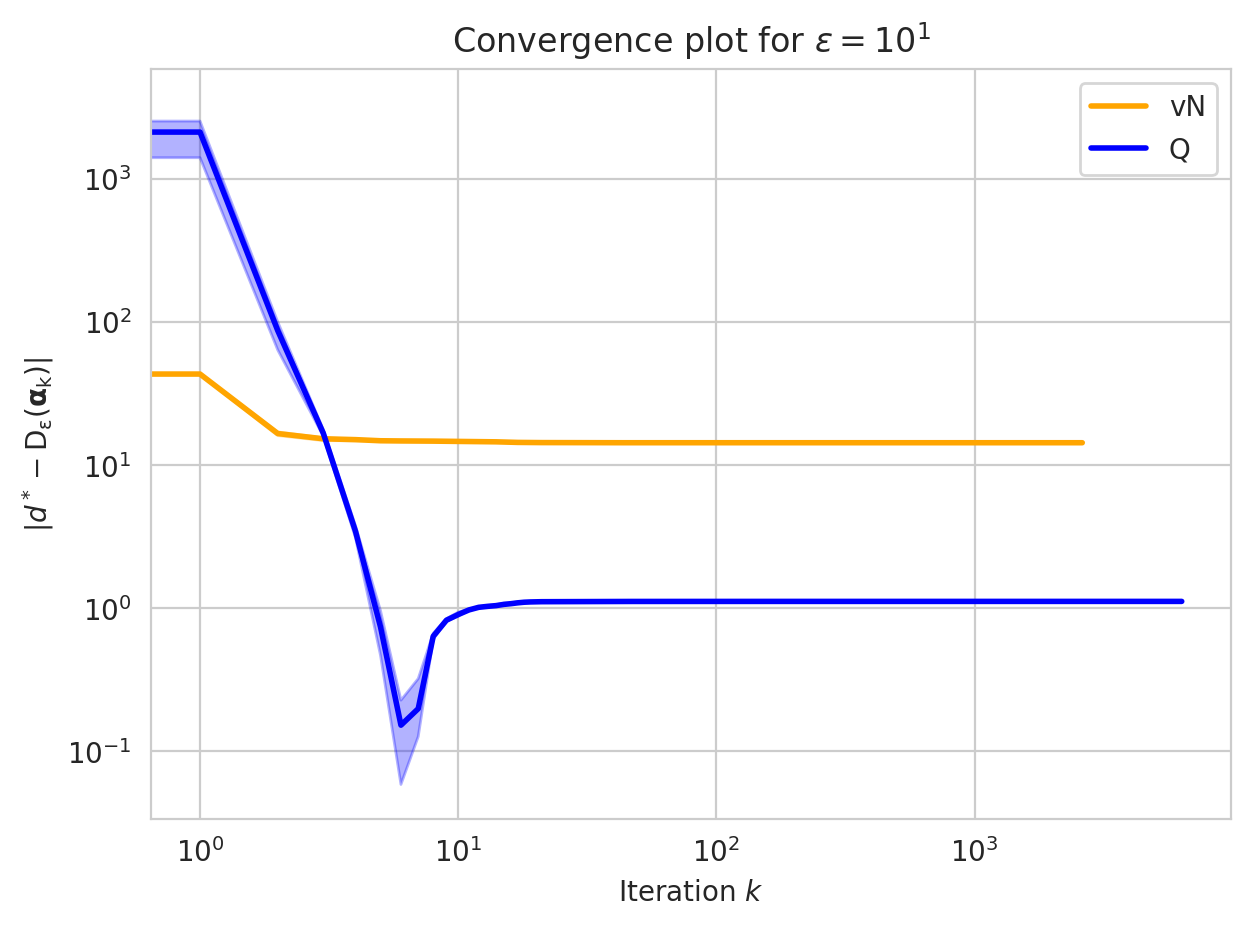}
    \caption{\textbf{IM} for $\varepsilon = 10^{1}$}
  \end{subfigure}

  % \caption{\textbf{Quantum Optimal Transport problem instance \textbf{IM}.} Graph of the absolute difference between the dual functional $\dualf$ in Eq.~\eqref{intro:maindual} and true optimal value. Panels~(a)--(f) display the iteration trajectories for different values of the regularization parameter $\varepsilon \in \{10^{4}, 10^{3}, 10^{1}\}$ for both von Neumann (orange) and quadratic regularization (blue). Increasing the regularization improves convergence of the algorithm at the price of higher bias in objective due to the regularization. In all cases, quaratic regularization gives less biased objective than von Neumann entropy.}
  \caption{\textbf{Quantum Optimal Transport problem instance \textbf{IM}.} Graph of the absolute difference between the dual functional at iteration $k$ $\dualf(\bm \alpha_k)$ in Eq.~\eqref{intro:maindual} and true optimal value $d^*$. Panels~(a)--(c) display the iteration trajectories for different values of the regularization parameter $\varepsilon 
  \in \{10^{4}, 10^{3}, 10^{1}\}$ 
  for both von Neumann (orange) and quadratic regularization (blue).}
  \label{fig:qot_1a}
\end{figure}

\begin{figure}[!htb]
  \centering

  % --- Row ---
  \begin{subfigure}{0.27\textwidth}
    \centering
    \includegraphics[width=\linewidth,height=0.26\textheight,keepaspectratio]{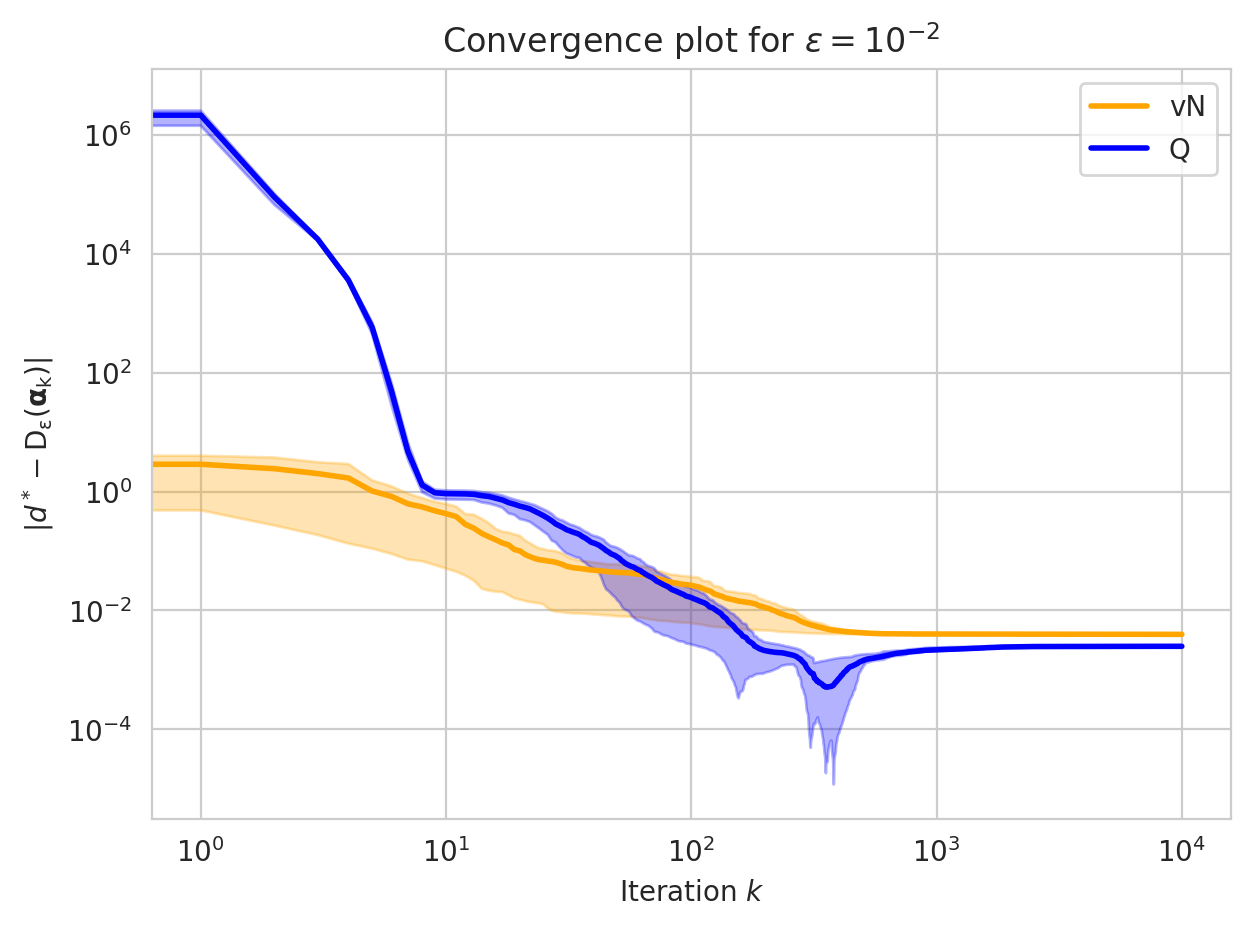}
    \caption{\textbf{IM} for $\varepsilon = 10^{-2}$}
  \end{subfigure}\hfill
  \begin{subfigure}{0.27\textwidth}
    \centering
    \includegraphics[width=\linewidth,height=0.26\textheight,keepaspectratio]{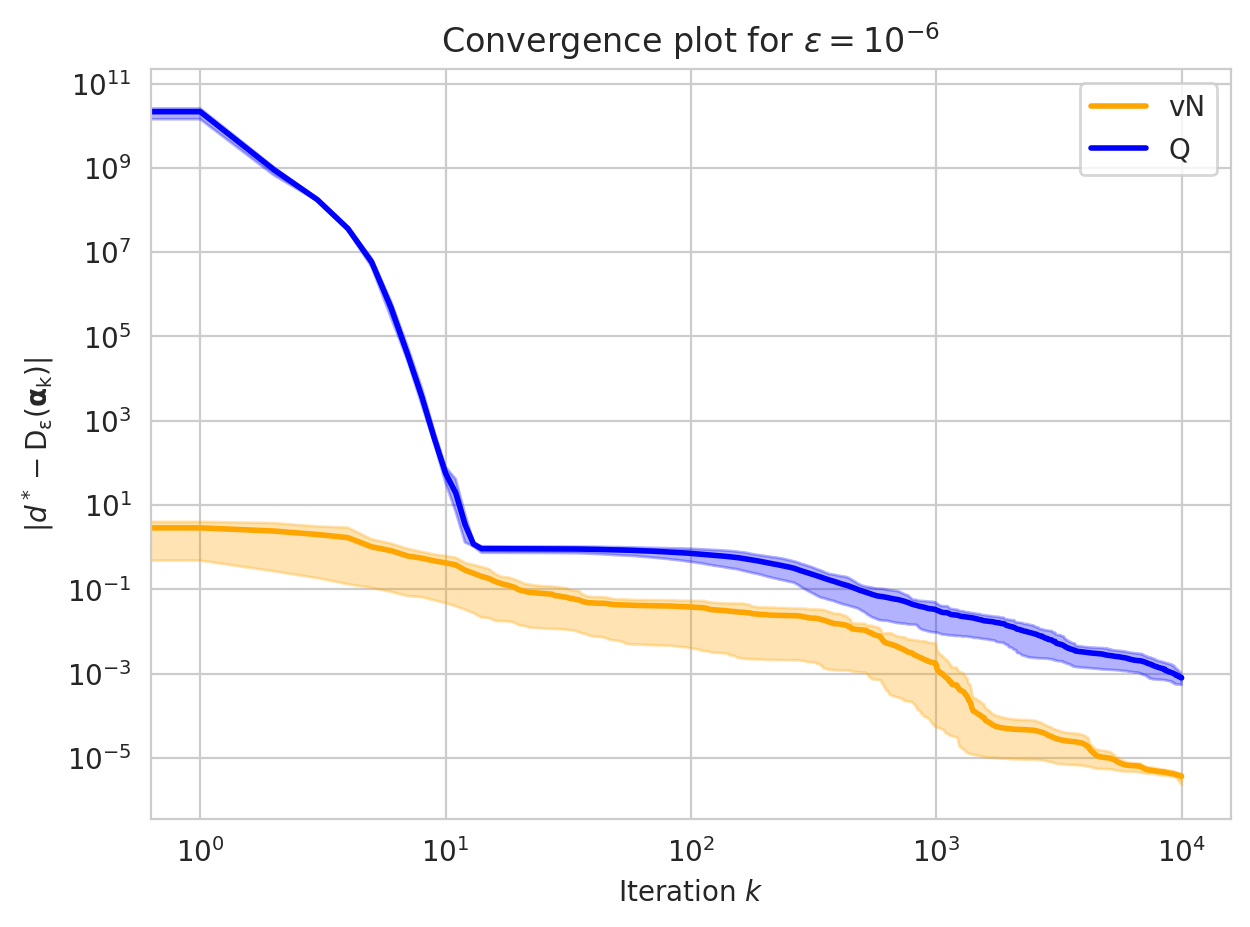}
    \caption{\textbf{IM} for $\varepsilon = 10^{-6}$}
  \end{subfigure}\hfill
  \begin{subfigure}{0.27\textwidth}
    \centering
    \includegraphics[width=\linewidth,height=0.26\textheight,keepaspectratio]{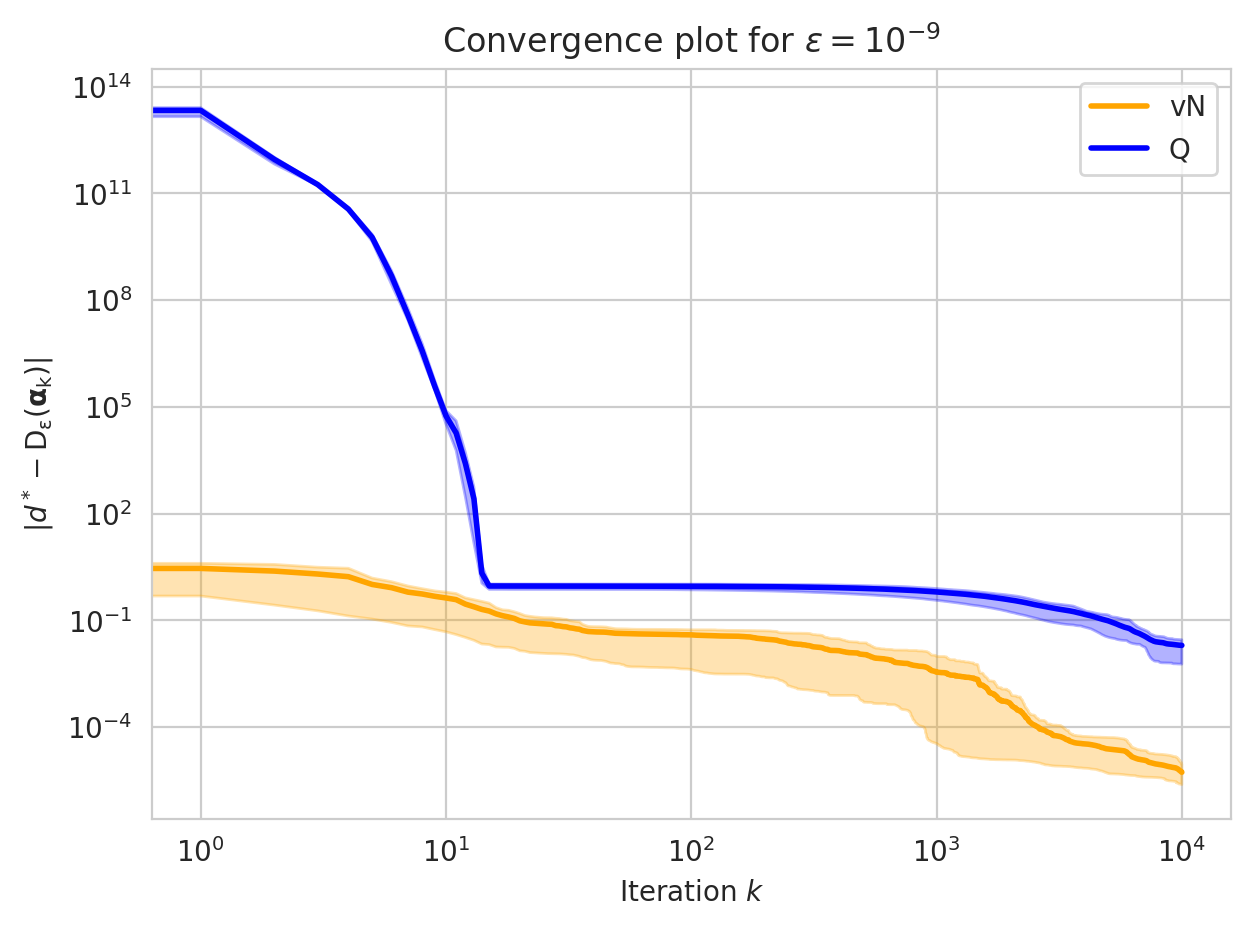}
    \caption{\textbf{IM} for $\varepsilon = 10^{-9}$}
  \end{subfigure}

  % \caption{Margin plots for the \textbf{IM} problem instance with smaller regularization values.
  % See the beginning of section~\ref{sec:numerics} for detailed explanation.}
  % \caption{\textbf{Quantum Optimal Transport problem instance \textbf{IM}.} Graph of the absolute difference between the dual functional $\dualf$ in Eq.~\eqref{intro:maindual} and true optimal value. Panels~(a)--(f) display the iteration trajectories for different values of the regularization parameter $\varepsilon \in \{10^{-2}, 10^{-6}, 10^{-9}\}$ for both von Neumann (orange) and quadratic regularization (blue). Small regularization returns more accurate solution at the price of harder numerical problem. For both regularization types L-BFGS terminated due to iteration limit.}
  \caption{\textbf{Quantum Optimal Transport problem instance \textbf{IM}.} Graph of the absolute difference between the dual functional at iteration $k$ $\dualf(\bm \alpha_k)$ in Eq.~\eqref{intro:maindual} and true optimal value $d^*$. Panels~(a)--(c) display the iteration trajectories for different values of the regularization parameter $\varepsilon 
  \in \{10^{-2}, 10^{-6}, 10^{-9}\}$ 
  for both von Neumann (orange) and quadratic regularization (blue).}
  \label{fig:qot_1b}
\end{figure}

\begin{table}[ht]
\centering
\footnotesize
\setlength{\tabcolsep}{2pt}

% ---------- Shared header row (only once) ----------
\begin{minipage}[t]{0.31\textwidth}\centering
\begin{tabular}{@{} L T Y N I A @{}}
\toprule Reg. & Tol. & Type & Time & Iters & Ach. \\ \midrule
\end{tabular}
\end{minipage}\hspace{0.02\textwidth}
\begin{minipage}[t]{0.31\textwidth}\centering
\begin{tabular}{@{} L T Y N I A @{}}
\toprule Reg. & Tol. & Type & Time & Iters & Ach. \\ \midrule
\end{tabular}
\end{minipage}\hspace{0.02\textwidth}
\begin{minipage}[t]{0.31\textwidth}\centering
\begin{tabular}{@{} L T Y N I A @{}}
\toprule Reg. & Tol. & Type & Time & Iters & Ach. \\ \midrule
\end{tabular}
\end{minipage}

\vspace{0.1em}

% ---------- Row 1 ----------
\begin{minipage}[t]{0.31\textwidth}\centering
\begin{tabular}{@{} L T Y N I A @{}}
\multirow{4}{*}{\textbf{$10^{4}$}} & $10^{-3}$ & \texttt{vN} & 0.6 & 20 & Yes \\
 % & $10^{-4}$ & \texttt{vN} & 0.7 & 23 & Yes \\
 % & $10^{-5}$ & \texttt{vN} & 4.8 & 152 & Yes \\
 & $10^{-6}$ & \texttt{vN} & 8.9 & 281 & Yes \\
 & $10^{-3}$ & \texttt{Q} & 0.6 & 19 & Yes \\
 % & $10^{-4}$ & \texttt{Q} & 0.7 & 24 & Yes \\
 % & $10^{-5}$ & \texttt{Q} & 2.3 & 76 & Yes \\
 & $10^{-6}$ & \texttt{Q} & 17.8 & 576 & Yes \\
\bottomrule
\end{tabular}
\end{minipage}\hspace{0.02\textwidth}
\begin{minipage}[t]{0.31\textwidth}\centering
\begin{tabular}{@{} L T Y N I A @{}}
\multirow{4}{*}{\textbf{$10^{3}$}} & $10^{-3}$ & \texttt{vN} & 0.7 & 24 & Yes \\
 % & $10^{-4}$ & \texttt{vN} & 0.9 & 30 & Yes \\
 % & $10^{-5}$ & \texttt{vN} & 5.3 & 170 & Yes \\
 & $10^{-6}$ & \texttt{vN} & 41.4 & 1336 & Yes \\
 & $10^{-3}$ & \texttt{Q} & 1.0 & 34 & Yes \\
 % & $10^{-4}$ & \texttt{Q} & 1.5 & 49 & Yes \\
 % & $10^{-5}$ & \texttt{Q} & 6.5 & 211 & Yes \\
 & $10^{-6}$ & \texttt{Q} & 23.5 & 765 & Yes \\
\bottomrule
\end{tabular}
\end{minipage}\hspace{0.02\textwidth}
\begin{minipage}[t]{0.31\textwidth}\centering
\begin{tabular}{@{} L T Y N I A @{}}
\multirow{4}{*}{\textbf{$10^{1}$}} & $10^{-3}$ & \texttt{vN} & 1.2 & 38 & Yes \\
 % & $10^{-4}$ & \texttt{vN} & 2.1 & 68 & Yes \\
 % & $10^{-5}$ & \texttt{vN} & 13.5 & 440 & Yes \\
 & $10^{-6}$ & \texttt{vN} & 77.3 & 2518 & Yes \\
 & $10^{-3}$ & \texttt{Q} & 1.9 & 62 & Yes \\
 % & $10^{-4}$ & \texttt{Q} & 6.6 & 215 & Yes \\
 % & $10^{-5}$ & \texttt{Q} & 27.3 & 891 & Yes \\
 & $10^{-6}$ & \texttt{Q} & 96.1 & 3136 & Yes \\
\bottomrule
\end{tabular}
\end{minipage}

\vspace{0.3em}

% ---------- Row 2 ----------
\begin{minipage}[t]{0.31\textwidth}\centering
\begin{tabular}{@{} L T Y N I A @{}}
\multirow{4}{*}{\textbf{$10^{-2}$}} & $10^{-3}$ & \texttt{vN} & 18.8 & 554 & Yes \\
 % & $10^{-4}$ & \texttt{vN} & 32.1 & 944 & Yes \\
 % & $10^{-5}$ & \texttt{vN} & 219.1 & 6442 & Yes \\
 & $10^{-6}$ & \texttt{vN} & 340.1 & 10000 & No \\
 & $10^{-3}$ & \texttt{Q} & 37.8 & 1171 & Yes \\
 % & $10^{-4}$ & \texttt{Q} & 92.6 & 2869 & Yes \\
 % & $10^{-5}$ & \texttt{Q} & 322.8 & 10000 & No \\
 & $10^{-6}$ & \texttt{Q} & 322.8 & 10000 & No \\
\bottomrule
\end{tabular}
\end{minipage}\hspace{0.02\textwidth}
\begin{minipage}[t]{0.31\textwidth}\centering
\begin{tabular}{@{} L T Y N I A @{}}
\multirow{4}{*}{\textbf{$10^{-6}$}} & $10^{-3}$ & \texttt{vN} & 51.2 & 1428 & Yes \\
 % & $10^{-4}$ & \texttt{vN} & 76.1 & 2123 & Yes \\
 % & $10^{-5}$ & \texttt{vN} & 276.2 & 7711 & Yes \\
 & $10^{-6}$ & \texttt{vN} & 358.2 & 10000 & No \\
 & $10^{-3}$ & \texttt{Q} & 384.7 & 10000 & No \\
 % & $10^{-4}$ & \texttt{Q} & 384.7 & 10000 & No \\
 % & $10^{-5}$ & \texttt{Q} & 384.7 & 10000 & No \\
 & $10^{-6}$ & \texttt{Q} & 384.7 & 10000 & No \\
\bottomrule
\end{tabular}
\end{minipage}\hspace{0.02\textwidth}
\begin{minipage}[t]{0.31\textwidth}\centering
\begin{tabular}{@{} L T Y N I A @{}}
\multirow{4}{*}{\textbf{$10^{-9}$}} & $10^{-3}$ & \texttt{vN} & 100.2 & 2571 & Yes \\
 % & $10^{-4}$ & \texttt{vN} & 139.3 & 3575 & Yes \\
 % & $10^{-5}$ & \texttt{vN} & 389.7 & 10000 & No \\
 & $10^{-6}$ & \texttt{vN} & 389.7 & 10000 & No \\
 & $10^{-3}$ & \texttt{Q} & 397.1 & 10000 & No \\
 % & $10^{-4}$ & \texttt{Q} & 397.1 & 10000 & No \\
 % & $10^{-5}$ & \texttt{Q} & 397.1 & 10000 & No \\
 & $10^{-6}$ & \texttt{Q} & 397.1 & 10000 & No \\
\bottomrule
\end{tabular}
\end{minipage}

% \caption{Performance table for the \textbf{IM} problem instance.
% \textbf{Reg.} - regularization parameter; \textbf{Type} - regularization used. \texttt{vN} (entropy) or \texttt{Q} (quadratic);
% \textbf{Time} - seconds; \textbf{Ach.} - if reached tolerance.}
% \caption{
% Performance table for the \textbf{IM} problem instance.
% \textbf{Reg.} - regularization parameter; \textbf{Type} - regularization used (\texttt{vN} entropy or \texttt{Q} quadratic);
% \textbf{Time} - seconds; \textbf{Ach.} - whether the target tolerance was reached.
% Decreasing~$\varepsilon$ substantially increases the numerical difficulty of the dual optimization: both runtime and iteration count grow quickly, and for $\varepsilon \leq 10^{-2}$ the algorithm frequently reaches the iteration limit of $10^{4}$ without achieving the prescribed tolerance.
% Quadratic regularization becomes noticeably harder to optimize than von Neumann entropy as~$\varepsilon$ decreases, and small regularization parameter prevents both methods from converging within the available iteration budget.
% }
\caption{
Performance results for the \textbf{IM} problem instance. Each block corresponds to a fixed regularization parameter (\textbf{Reg.}) and compares von Neumann entropy (\texttt{vN}, $\varphi(z)=z\log z$) and quadratic (\texttt{Q}, $\varphi(z)=\tfrac12 z^2$) regularizers across decreasing tolerances. Reported are runtime (\textbf{Time}, s), iteration count (\textbf{Iter.}), and tolerance attainment (\textbf{Ach.}).
}

\label{tab:qot_1}
\end{table}

\FloatBarrier

\subsection*{Acknowledgements}

A.G., N.M. and P.P. thank the Natural Sciences and Engineering Research Council of Canada (NSERC) for financial support under the Multi-Marginal Optimal Transport grant (Grant No. RGPIN-2022-05207) and the Alliance grant (Grant No. ALLRP/592521-2023), and the Canada Research Chairs Program (Grant No. CRC-2021-00234). A.G and N.M. received support from the National Research Council of Canada (Grant No. AQC-208-1). A.G. and P.P. received support from the MITACS Accelerate Program. Part of this research was conducted while the authors were visiting the Institute for Pure and Applied Mathematics (IPAM), which is supported by the National Science Foundation (Grant No. DMS-1925919). E.C. is supported by the European Union's Horizon 2020 research
and innovation programme (Grant agreement No. 948021).
LP gratefully acknowledges funding from the Deutsche Forschungsgemeinschaft (DFG -- German Research Foundation) under Germany's Excellence Strategy - GZ 2047/1, Projekt-ID 390685813. Financial support by the Deutsche Forschungsgemeinschaft (DFG) within the CRC 1060, at University of Bonn project number 211504053, is also gratefully acknowledged. 
LP is also thankful for the support under the U-GOV project, identification number PSR$\_$LINEA8A$\_$25SMANT$\_$05.

\bibliographystyle{abbrv}
\bibliography{biblio.bib}

\end{document}